\setlist[itemize,enumerate]{nosep}
\global\long\def\MIN{\text{{\rm MIN}}}
\global\long\def\LEX{\text{{\rm \ensuremath{\MIN}then\ensuremath{\MAX}}}}
\global\long\def\MINBR{\text{{\rm \ensuremath{\MIN}-BR}}}
\global\long\def\LEXBR{\text{{\rm \ensuremath{\LEX}-BR}}}
\global\long\def\BR{\text{{\rm BR}}}
\global\long\def\MINNE{\text{{\rm \ensuremath{\MIN}-NE}}}
\global\long\def\LEXNE{\text{{\rm \ensuremath{\LEX}-NE}}}
\global\long\def\NE{\text{{\rm NE}}}
\global\long\def\ActionSet{\mathcal{A}}
\global\long\def\TypeSet{\mathcal{T}}
\renewcommand\MINBR{\texify{\MIN-BR}\xspace}
\renewcommand\MINNE{\texify{\MIN-NE}\xspace}
\renewcommand\LEXBR{\texify{\LEX-BR}\xspace}
\renewcommand\LEXNE{\texify{\LEX-NE}\xspace}
\renewcommand\LEX{\texify{\texttt{MIN\-then\-MAX}}\xspace}
\renewcommand\LEX{\texify{\scshape{MIN\-then\-MAX}}\xspace}
\renewcommand\MIN{\texify{\scshape{MIN}}\xspace}
\newcommand{\NEWCOMMANDS}{}
\newcommand{\mathify}[1]{\ifmmode{#1}\else\mbox{\ensuremath{#1}}\fi} 
\newcommand{\texify}[1] {\ifmmode{\text{#1}}\else{#1}\fi}
\global\long\def\st{\text{ s.t. }}
\global\long\def\Exp{\mathbb{E}}
\global\long\def\Exp{\mathop{\mathbb{E}}}
\global\long\def\half{\frac{1}{2}}
\global\long\def\argmax{\mathop{\mathrm{argmax}}}
\global\long\def\argmin{\mathop{\mathrm{argmin}}}
\global\long\def\Rmnum#1{\uppercase\expandafter{\romannumeral#1\relax}}
\global\long\def\SetSt{\;\middle\vert\;}
\global\long\def\sizeof#1{\left|#1\right|}
\newcommand{\XXth}[2]{%
		\ifmmode {#1^{\underline{#2}}}%
		\else {#1\textsuperscript{\underline{#2}}\xspace}\fi%
}
\newcommand{\Xth}[1]{\XXth{#1}{th}\xspace}
\newcommand{\Nst}[1]{\mathify{\XXth{#1}{st}}\xspace}
\newcommand{\Nnd}[1]{\mathify{\XXth{#1}{nd}}\xspace}
\newcommand{\Nrd}[1]{\mathify{\XXth{#1}{rd}}\xspace}
\newcommand{\Nth}[1]{%
\ifcase#1\relax%
		0
    \or\Nst{1}%
    \or\Nnd{2}%
    \or\Nrd{3}%
    \else\mathify{\Xth{#1}}%
\fi\xspace}
\global\long\def\Players{\ensuremath{\mathcal{N}}}
\global\long\def\prefOver{\succcurlyeq}
\global\long\def\strongPrefOver{\succ}
\global\long\def\SprefOver{\strongPrefOver}
\global\long\def\prefBy{\preccurlyeq}
\global\long\def\strongPrefBy{\prec}
\global\long\def\SprefBy{\strongPrefBy}
\newcommand{\ProblemClass}[1]{\texify{\textit{#1}}}
\newcommand{\ProblemClassComplete}[1]{\ProblemClass{#1-complete}\xspace} 
\newcommand{\ProblemClassHard}[1]{\ProblemClass{#1-hard}\xspace} 
\providecommand{\tabularnewline}{\\}
\theoremstyle{plain}
\newtheorem{thm}{\protect\theoremname}
  \theoremstyle{definition}
  \newtheorem{defn}[thm]{\protect\definitionname}
  \theoremstyle{plain}
  \newtheorem{lem}[thm]{\protect\lemmaname}
\newenvironment{proof-sketch}[0]
{\begin{proof}[\textbf{Proof sketch}]}
{\end{proof}}
  \theoremstyle{plain}
  \newtheorem{cor}[thm]{\protect\corollaryname}
  \theoremstyle{plain}
  \newtheorem{ax}[thm]{\protect\axiomname}
\newenvironment{proof-of-something}[2]
{\begin{proof}[\textbf{Proof of #2~#1}]}
{\end{proof}}
  \theoremstyle{plain}
  \newtheorem{prop}[thm]{\protect\propositionname}
  \theoremstyle{definition}
  \newtheorem{example}[thm]{\protect\examplename}
  \theoremstyle{plain}
  \newtheorem*{lem*}{\protect\lemmaname}
  \theoremstyle{plain}
  \newtheorem*{thm*}{\protect\theoremname}
  \theoremstyle{plain}
  \newtheorem*{ax*}{\protect\axiomname}
  \theoremstyle{plain}
  \newtheorem*{prop*}{\protect\propositionname}
\newcommand{\MyLyxThmNewline}[1]{~\par\nopagebreak\ignorespaces}
\newcommand{\VspaceBeforeMath}[2]{\vspace{#1}}
\newcommand{\MyLyxThmFN}[1]{\hspace{-6pt}\footnote{#1}}
  \providecommand{\axiomname}{Axiom}
  \providecommand{\corollaryname}{Corollary}
  \providecommand{\definitionname}{Definition}
  \providecommand{\examplename}{Example}
  \providecommand{\lemmaname}{Lemma}
  \providecommand{\propositionname}{Proposition}
  \providecommand{\theoremname}{Theorem}
\providecommand{\theoremname}{Theorem}
\begin{document}

\title{Analyzing Games with Ambiguous Player Types Using the $\LEX$ Decision
Model\\
{\large{}(work in progress)}}

\author{Ilan Nehama\\
Tel-Aviv University\\
{\normalsize{}\url{ilan.nehama@mail.huji.ac.il}}}

\date{November 2016}

\maketitle
\thispagestyle{empty}
\begin{abstract}

In many common interactive scenarios, participants lack information
about other participants, and specifically about the preferences of
other participants. In this work, we model an extreme case of  incomplete
information, which we term \emph{games with type ambiguity}, where
a participant lacks even information enabling him to form a belief
on the preferences of others. Under type ambiguity, one cannot analyze
the scenario using the commonly used Bayesian framework,  and therefore
one needs to model the participants using a different decision model.

To this end, we present the $\LEX$ decision model under ambiguity.
This model is a refinement of Wald's MiniMax principle, which we show
to be too coarse for games with type ambiguity. We characterize $\LEX$
as the finest refinement of the MiniMax principle that satisfies three
properties we claim are necessary for games with type ambiguity. This
prior-less approach we present here also follows the common practice
in computer science of worst-case analysis. 

Finally, we define and analyze the corresponding equilibrium concept,
when all players follow $\LEX$. We demonstrate this equilibrium by
applying it to two common economic scenarios: coordination games and
bilateral trade. We show that in both scenarios, an equilibrium in
pure strategies always exists, and we analyze the equilibria.

\end{abstract}
\textbf{\uline{Keywords}}: Decision making under ambiguity, Games
with ambiguity, Wald's MiniMax principle, $\LEX$ decision model,
$\LEX$ equilibrium 

{\renewcommand{\thefootnote}{}\footnote{This work is supported by ISF grant 1435/14 administered by the Israeli
Academy of Sciences and Israel-USA Binational Science Foundation (BSF)
grant 2014389.}\footnote{This work was carried out primarily when the author was a Ph.D. student
at the Hebrew University of Jerusalem, Israel.}\footnote{
The author would like to thank Tomer Siedner, Yannai Gonczarowski,
Gali Noti, Shmuel Zamir, Bezalel Peleg, Noam Nisan,  Mike Borns,
as well as participants  of Social Choice and Welfare 2016, Games
2016, and the Computation and Economics seminars at the Hebrew University
and Tel-Aviv University for numerous discussions and comments that
helped to improve this paper.

}\setcounter{footnote}{0}}

\newpage{}

\tableofcontents{}\setcounter{page}{0}
\thispagestyle{empty}

\newpage{}

\section{Introduction}

In many common interactive scenarios participants lack  information
about other participants, and specifically about the preferences of
other participants.  The extreme case of such  partial information
scenario is termed   \emph{ambiguity},\footnote{In decision theory literature, the terms \enquote{ambiguity,} \enquote{pure
ambiguity,} \enquote{complete ignorance,} \enquote{uncertainty}
(as opposed to \enquote{risk}), and \enquote{Knightian uncertainty}
are used interchangeably to describe this case of unknown probabilities.} and in our case ambiguity  about the preferences of other participants.
In these scenarios,  not only does a participant not know the preferences
of other participants, but he  cannot even form a belief on them
(that is, he lacks the knowledge to form a probability distribution
over preferences). Hence, one cannot analyze the scenario using the
Bayesian framework, which is the common practice for analyzing partial-information
scenarios, and new tools are needed.\footnote{Clearly, if  a player has information that he can use to construct
a belief about the others, we expect the player to use it. In this
work, we study the extreme case in which one has no reason to assume
the players hold such a belief.}  Similarly, in the computer science literature, algorithms, agents,
and mechanisms are often analyzed without assuming a distribution
on the input space or on the environment.  In this work, we define
and analyze equilibria under ambiguity about the the game. In particular,
we concentrate on equilibria of \emph{games with type ambiguity},
i.e., games with ambiguity about the other players' preferences, namely,
their \emph{type}. Our equilibrium definition is based on a refinement
of Wald's MiniMax principle, which corresponds to the common practice
in computer science of worst-case analysis. 

In Section~\ref{sec:Model}, we define a general model of  games
with ambiguity, similar to Harsanyi's model of games with incomplete
information~\cite{Harsanyi1967},\footnote{As described, for example, in \cite[Def.~10.37 p.~407]{Maschler2013}.}
and derive from it the special case of \emph{games with type ambiguity}.
In this model, the knowledge of player~$i$ on player~$j$ is represented
by a set of types $\TypeSet$. Player~$i$ knows that the type of
player~$j$ belongs to $\TypeSet$, but has no prior distribution
on this set, and no information that can be used to construct one.
Our model also enables us to apply the extensive literature on knowledge,
knowledge operators, and knowledge hierarchy  to ambiguity scenarios. 

Next, we present a novel model for decision making under ambiguity:
 $\LEX$ preferences.  We characterize $\LEX$ in the general framework
of decision making under partial information, and show $\LEX$ is
the unique  finest preference that satisfies a few natural properties.
Specifically, we claim that these properties are  satisfied by rational
players in games with type ambiguity, and hence that $\LEX$ is the
right tool of analysis.

Finally,  we derive the respective equilibrium concept, dubbed $\LEXNE$,
and present some of its properties, both in the context of the general
model of games with type ambiguity and in two common  economic scenarios.

\subsubsection*{Wald's MiniMax principle}

A common model for decision making under ambiguity is the MiniMax
principle presented by Wald~\cite{Wald1950}, which we refer to as
$\MIN$ preferences (as distinct from  the $\LEX$ preferences that
we  present later).\footnote{Wald's principle measures actions by their losses rather than by
gains like we do here. Hence, Wald dubbed this principle, which aims
to maximize the worst (minimal) gain, the MiniMax principle while
we dub it $\MIN$.} In the $\MIN$ model, similarly to worst-case analysis in computer
science, the preference of the decision maker over actions is based
solely on the set of possible outcomes. E.g., in  games with type
ambiguity,  the possible outcomes are the consequence of playing
the game with the possible types of the other players. An action $a$
is  preferred to another action $b$ if the worst possible outcome
(for the decision maker) of taking action $a$ is  better than the
worst outcome of taking action $b$. This generalizes the classic
preference maximization model: if there is no ambiguity, there is
a unique outcome for each action, and the $\MIN$ decision model coincides
with preference maximization.   The $\MIN$ model  has been used
 for analyzing expected behavior in scenarios of decision making
under ambiguity. For example, ambiguity  about parameters of the environment,
such as  the distribution of prizes (the multi-prior model)~\cite{Gilboa1989}
and ambiguity of the decision maker about his own utility~\cite{Cerreia-Vioglio2015}.
The $\MIN$ model was also applied to define higher-order goals for
a DM, like regret minimization~\cite[Ch.~9]{savage1972foundations2ndEd}
which is applying $\MIN$ when the utility of the DM is the regret
comparing to other possible actions.  In addition, $\MIN$  has
been used for analyzing interactive scenarios with ambiguity, e.g.,
first-price auctions under ambiguity both of the bidders and of the
seller about the ex-ante distribution of  bidders' values~\cite{Bose2006},
and for designing mechanisms assuming ambiguity of the players about
the ex-ante distribution of the other players' private information~\cite{Wolitzky2015},
or assuming they decide according the regret minimization model~\cite{conf/uai/HyafilB04}.

As we show shortly,  the $\MIN$ decision model is too coarse and
offers too little predictive value in some scenarios involving ambiguity
about the other players' types. We show a natural scenario (a small
perturbation of the Battle of the Sexes game~\cite[Ch.~5 Sec.~3]{Luce1957})
in which almost all action profiles are Nash equilibria according
to $\MIN$. Hence, we are looking for a refinement of the $\MIN$
model that breaks indifference in some reasonable way in cases in
which two actions result in equivalent worst outcomes. In Section~\ref{sec:-axiomatizations},
we show that  na\"{\i}vely breaking indifference by applying $\MIN$
recursively\footnote{I.e., when the decision maker faces two actions that have equivalent
worst outcomes, he decides according to the second-worst outcome.} does not suit scenarios with ambiguity about the other players' types
either. We present two game scenarios, and we claim that they are
equivalent in a very strong sense: a player cannot distinguish between
these two scenarios, even if he has enough information to know the
outcomes of all of his actions. Hence, we claim that a rational player
should act the same way in these two scenarios. Yet, we show that
if a player follows the recursive MIN decision model he plays differently
in the two scenarios. In general, we claim that a decision model for
scenarios with type ambiguity should not be susceptible to this problem,
i.e., it should instruct the player to act the same way in scenarios
if the player cannot distinguish between them. Otherwise, when a decision
rule depends on information which is not visible to the DM, we find
it to be ill-defined.\footnote{This assumption is with accordance to our assumption of ambiguity
which in particular assumes there is no information on the world except
the information on outcomes.} 

\subsubsection*{The $\protect\LEX$ decision model}

In this work we suggest a refinement of Wald's MiniMax principle that
is not susceptible to the above-mentioned problems, and which we term
$\LEX$. According to $\LEX$ the decision maker (DM) picks an action
having an optimal worst outcome (just like under $\MIN$), and breaks
indifference according to the best outcome. We characterize $\LEX$
as the unique finest refinement of $\MIN$ that satisfies three desired
properties (Section~\ref{sec:-axiomatizations}):  monotonicity
in the outcomes, state symmetry, and independence of irrelevant information.
\emph{Monotonicity in the outcomes} is a natural rationality assumption
stating that the DM (weakly) prefers an action $a$ to an action $b$
if in every state of the world (in our framework, a state is a vector
of types of the other players), action $a$ results in an outcome
that is at least as good as the outcome of action $b$ in this state.
\emph{State symmetry} asserts that the decision should not depend
on the names of the states and should not change if the names are
permuted.\emph{ Independence of irrelevant information} asserts that
the DM should not be susceptible to the irrelevant information bias,
describes above. That is, the DM's decision should depend only on
state information that is relevant to his utility.  Specifically,
it requires that if two states of the world have the same outcomes
for each of the actions,  the distinction between the two should
be irrelevant for the DM, and his preference over actions should not
change in case he considers these two states as a single state. We
show that these properties characterize the family of preferences
that are determined by only the worst and the best outcomes of the
actions. \global\long\def\ArbRule{{\rm P}}
 Moreover, we show  that $\LEX$ is the finest refinement of $\MIN$
 in this family: for any preference $\ArbRule$ that satisfies the
three properties,  if $\ArbRule$ is a refinement of $\MIN$,\footnote{That is, for any two actions $a$ and $b$, if $a$ is strongly preferred
to $b$ according to $\MIN$, then $a$ is strongly preferred to $b$
according to $\ArbRule$ too.} then $\LEX$ is a refinement of $\ArbRule$.

\subsubsection*{Equilibrium under $\protect\LEX$ preferences}

In Section~\ref{sec:Model}, we define $\MINNE$ to be the Nash equilibria
under $\MIN$ preferences, that is, the set of action profiles in
which each player best-responds to the actions of other players, and
similarly we define $\LEXNE$ to be the Nash equilibria under $\LEX$
preferences. 

We show that for every game with ambiguity, a $\MINNE$ in mixed strategies
always exists (Thm.~\ref{thm:NEexists}). On the other hand, we show
that there are generic games with ambiguity in which the set of $\MINNE$
is unrealistic and too large to be useful. This holds even for cases
in which the ambiguity is symmetric (all players have the same partial
knowledge) and is  only about the other players' preferences. Here
once again is our motivation for studying  the equilibria under $\LEX.$
On the other hand, we present a simple generic two-player game with
type ambiguity for which no $\LEXNE$ exists. We note that since $\LEX$
is the unique finest refinement of $\MIN$ (which satisfies some properties),
the equilibria of a game with ambiguity under any other refinement
of $\MIN$  is a super-set of the set of $\LEXNE$. Hence, one can
think of $\LEXNE$ as the set of equilibria that do not depend on
assumptions on the tie-breaking rule over $\MIN$ applied by the players.

We show that the problem of finding a $\MINNE$ is a \ProblemClassComplete{PPAD}
problem~\cite{Papadimitriou1994,Papadimitriou2015}, just as finding
a Nash equilibrium when there is no ambiguity.\footnote{When there is no ambiguity, a Nash equilibrium is also a $\MINNE$
of the game and vice-versa. Hence, finding a $\MINNE$ is a \ProblemClassHard{PPAD}
problem. We show it belongs to \ProblemClass{PPAD} and so adding
ambiguity does not make the problem harder.}

\subsubsection*{Applications of $\protect\LEXNE$ to economic scenarios }

 To understand the benefits of analysis using the $\LEX$ model,
we apply the equilibrium concept $\LEXNE$ to two well studied economic
scenarios \textendash{} coordination games and bilateral trade games
\textendash{} while introducing ambiguity. We show that in both scenarios,
a $\LEXNE$ in pure strategies always exists, and we analyze these
equilibria. 

\subsubsection*{Coordination games}

In coordination games, the players simultaneously choose a location
for a common meeting. All players prefer to choose a location that
maximizes the number of players they meet, but they differ in their
tie-breaking rule, i.e.,  their preference over the possible locations.
This is a generalization of the game battle of the sexes~\cite[Ch.~5, Sec.~3]{Luce1957},
and it models economic scenarios where the players need to coordinate
a common action, like agreeing on a meeting place, choosing a technology
(e.g., cellular company), and locating a public good when the cost
is shared, as well as Schelling's focal point experiments~\cite[pp.~54--57]{Schelling1980}:
the parachuters' problem  and the meeting in NYC problem. 

\global\long\def\LL{LL}
 \global\long\def\L{L}
 \global\long\def\R{R}
 \global\long\def\RR{RR}

Take for example, a linear street with four possible meeting locations:
$\LL$, $\L$, $\R$, and $\RR$ (see figure for the distances between
the locations), and two players who choose locations simultaneously
in an attempt  to meet each other.

\[
\xymatrix{*++[o][F]{\LL}\ar@{-}^{10}[r] & *++[o][F]{\L}\ar@{-}^{9}[r] & *++[o][F]{\R}\ar@{-}^{8}[r] & *++[o][F]{\RR}}
\]
Both players prefer the meeting  to take place, but have different
preferences over the meeting places (and both are indifferent about
their action if the meeting does not take place). We also assume that
the preference of each player is determined by the distance between
the meeting place and his initial position.\footnote{For example, the preference of a player whose position is $\L$ is
$\L\SprefOver\R\SprefOver\LL\SprefOver\RR$.

For clarity, in this example we use \emph{position} for the initial
position of a player determining his preference, and \emph{location}
for the actions and outcomes.} 

Next, consider a scenario in which each player might be positioned
in any one of the four locations, but he does not know the position
of the other player; i.e., both have ambiguity about the type of the
other player (and hence about his preference). 

First, we notice that profiles in which both players choose, regardless
of their types, the same location are $\LEXNE$ of the game.  From
the perspective of Player~2, if all the types of Player~1 choose
the same location, then choosing this location and meeting Player~1
for sure is strictly preferred (regardless of the position of Player~2)
to any other choice, which would surely result in no meeting. The
analysis for Player~1 is identical, hence these profiles are $\LEXNE$.

Next, consider a case in which Player~1 goes either to $\LL$ or
to $\RR$ (i.e., the  locations chosen by the types of Player~1
are these two locations). From Player~$2$'s perspective, all of
his actions are equivalent in terms of their worst outcome, as there
is always a possibility of not meeting Player~1 (when facing a type
of Player~$1$ who chose a different location). Thus, Player~2 chooses
according to the best outcome for him. Only the actions $\LL$ and
$\RR$ result in a possibility to meet Player~1, i.e., to meet one
of the types of Player~1. Hence Player~2 strictly prefers $\LL$
and $\RR$ to $\L$ and $\R$,  and so he will choose between  $\LL$
and $\RR$ according to his preference over them. Following this reasoning,
we show that the profile in which each type of each of the players
goes to the location closest to him out of $\LL$ and $\RR$ is 
$\LEXNE$, and that any other profile in which one of the players
plays $\LL$ and $\RR$ (i.e., the locations played by his types are
these two locations) is not  $\LEXNE$.

This simple scenario also demonstrates the drawback of using the $\MIN$
model to analyze games with type ambiguity. When Player~1 goes to
either $\LL$ or $\RR$, Player~2 is indifferent between the worst
outcomes of all of his actions, and so, according to the $\MIN$ model,
Player~2 is indifferent between all of his actions. Particularly,
Player~$2$ of type $\LL$ is indifferent between playing $\LL$
(which is his own position), playing $\RR$ (the location farthest
from him), and playing $\L$ (in which he is certain not to meet Player~$1$).
This seems highly unrealistic: we would expect a rational Player~$2$
of type $\LL$ to prefer playing $\LL$ to $\RR$ or $\L$. When using
the $\MIN$ model to analyze equilibria we get that almost all profiles
are $\MINNE$\footnote{In general, for a coordination game over $m$ locations and $n$ players
s.t. each of them has at least $t$ types, more than $\left(1-\sfrac{1}{m^{t-1}}\right)^{n}$-fraction
of the pure action profiles are $\MINNE$ of the game.} including, for instance, the profile in which each type goes to the
location \textbf{farthest} from him among $\LL$ and $\RR$.

We show that in general, for every coordination game with type ambiguity,
the pure Nash equilibria of the no-ambiguity case in which all players
choose the same unique location are also $\LEXNE$.\footnote{In general, if an action profile $a$ is a Nash equilibrium for all
the possible combinations of types, then $a$ is also a $\LEXNE$,
since we expect the players to follow the action profile when the
information about the types is irrelevant to their decision.} Note that the set of equilibria when there is no ambiguity does not
depend on the players' types (which are only a tie-breaking rule between
two locations having the same number of other players). Yet, we get
that ambiguity about the types gives rise to new equilibria, which
we characterize in Section~\ref{sec:Coordination-Games}.  We show
that an equilibrium is uniquely defined by a set of meeting locations
($\LL$ and $\RR$ in the example above) to be the action profile
in which each type of each player chooses his optimal location in
the set. Finally, we also characterize the equilibria for several
cases in which we assume a natural homogeneity constraint on the type
sets. The constraint we choose, taking the type sets to be single-peaked
consistent w.r.t. a line, restricts the ambiguity about the other
players' preferences in a natural way, and hence its impact on the
set of equilibria  is informative for the study of equilibria under
type ambiguity.

\subsubsection*{Bilateral trade}

The second scenario we analyze is bilateral trade. These are two-player
games between a seller owning an item and a buyer who would like to
purchase the item. Both players are characterized by the value they
attribute to the item (their   respective willingness to accept and
willingness to pay). In the mechanism that we analyze, both players
simultaneously announce a price and if the price announced by the
buyer is higher than the one announced by the seller, then a transaction
takes place and the price is the average of the two.\footnote{Our result also holds for a more general case than setting the price
to be the average.} For simplicity, we  assume that a player has the option not to participate
in the trade.\footnote{This option is equivalent to the option of the seller declaring an
extremely high price that will not be matched (and similarly for the
buyer).}

When there is no ambiguity, an equilibrium that includes a transaction
consists of a single  price, which is announced by both players.
When there is ambiguity about the values, we show that, in addition
to the single-price equilibrium, a new kind of equilibrium emerges.
For instance, consider the case in which the value of the buyer can
be any value between 20 and 40 and the value of the seller can be
any value between 10 and 30. First, we notice that there are equilibria
that are based on one price as above,\footnote{An equilibrium in which both players choose (as a function of their
value) either to announce a price common to both or not to participate. } but in any such equilibrium there will be types (either of the buyer
or of the seller) that will prefer not to participate.  If, for example,
the price is 25 or higher, then there are types of the buyer that
value the item at less than this price and will prefer not to participate;
similarly, for prices below 25 there are possible sellers who value
the item at more than 25 and hence will prefer not to participate.
We can show further a $\LEXNE$ with two prices, 15 and 35, in which
both players participate regardless of their type: the seller announces
35 if his value is higher than 15 and 15 otherwise,  and the buyer
announces 35 if his value is higher than 35 and 15 otherwise.\footnote{I.e., the seller announces the lower of the two if both are acceptable
to him, and the higher otherwise; and the buyer announces the higher
of the two if both are acceptable to him, and the lower otherwise.} In this profile, a buyer who values the item at more than 35 prefers
buying the item at either price to not buying it, and hence he best-responds
by announcing 35 and buying the item for sure. A buyer who values
the item at less than 35 prefers buying the item at 15 to not buying
it, and not buying the item to buying it at 35. The best worst-case
outcome he can guarantee is not buying the item (e.g., by announcing
any value between 15 and his value). Based on the worst outcome the
buyer is indifferent between these announcements. Hence, in choosing
between these announcements according to the best-outcome (i.e., meeting
a seller who announces 15), he best-responds by announcing 15. A similar
analysis shows that also the seller best-responds in this profile. 

 We characterize the set of $\LEXNE$ for bilateral trade games,
and in particular we show that for every bilateral trade game, an
equilibrium consists of at most two prices. As a corollary we characterize
the cases for which there exists a full-participation $\LEXNE$, i.e.,
equilibria in which both players choose to participate regardless
of their value (but their bid in these equilibria might depend on
the value).

\section{Model}

\label{sec:Model}

We derive our model for games with type ambiguity as a special
case of a more general model of games with ambiguity. A \emph{game
with ambiguity}\footnote{For simplicity we define a finite game but the definitions extend
to infinite cases as well. Our definitions of preferences and results
also extend under minor technical assumptions.} is a vector
\[
\left\langle \Players,\left(\ActionSet^{i}\right)_{i\in\Players},\Omega,\left(u^{i}\right)_{i\in\Players},\left(\TypeSet^{i}\right)_{i\in\Players}\right\rangle 
\]
where:
\begin{itemize}
\item $\left\langle \Players,\left(\ActionSet^{i}\right)_{i\in\Players}\right\rangle $
is an $n$-player game form. That is,

\begin{itemize}
\item[] $\Players$ is a finite set of \emph{players} $N=\left\{ 1,\ldots,n\right\} $;
\item[] $\ActionSet^{i}$ is a finite set of \emph{actions} of player~$i$,
and we denote by $\ActionSet$ the set of \emph{action profiles} $\times_{i\in\Players}\ActionSet^{i}$.
\end{itemize}
\item $\Omega$ is a finite set of \emph{states of the world}.
\item $u^{i}\colon\Omega\times\ActionSet\rightarrow\Re$ is a \emph{utility
function} for player~$i$ that specifies his utility from every state
of the world and profile of actions. We identify $u^{i}$ with its
linear extension to mixed actions, $u^{i}\colon\Omega\times\Delta\left(\ActionSet^{i}\right)\rightarrow\Re$,
where $\Delta\left(\ActionSet^{i}\right)$ is the set of mixed actions
over $\ActionSet^{i}$.\footnote{An implicit assumption here is that the players hold vNM preferences,
that is,  they evaluate a mixed action profile by its expectation.
This does not restrict the modeling of preferences under ambiguity.
Using the terminology of Anscombe and Aumann~\cite{Anscombe1963},
we distinguish between roulettes and horse races.}
\item $\left\langle \Players,\Omega,\left(\TypeSet^{i}\right)_{i\in\Players}\right\rangle $
is an Aumann model of incomplete information. That is, $\TypeSet^{i}$
is a partition of $\Omega$ to a finite number of partition elements
($\Omega=\dot{\cup}_{t^{i}\in\TypeSet^{i}}t^{i}$).  We refer to
$t^{i}\in\TypeSet^{i}$ as a \emph{type} of player~$i$.\footnote{For a full definition of Aumann's model and its descriptive power,
see, e.g., \cite{Aumann1976,Aumann1999} and \cite[Def.~9.4 p.~323]{Maschler2013}.
As described in \cite{Aumann1999} this model is equivalent to defining
$\TypeSet^{i}$ using signal functions and to defining them using
knowledge operators (i.e., the systematic approach).}
\end{itemize}
The above is commonly known by the players. A game proceeds as follows.
\begin{itemize}
\item Nature chooses (arbitrarily) a state of the world $\omega\in\Omega$.
\item Each player is informed (only) about his own partition element $t^{i}\in\TypeSet^{i}$
satisfying $\omega\in t^{i}$. 
\item The players play their actions simultaneously: Player~$i$, knowing
his type $t^{i}$, selects a (mixed) action $a^{i}\in\Delta\left(\ActionSet^{i}\right)$.
\item Every player gets a payoff according to $u$: Player~$i$ gets $u^{i}\left(t,a\right)$,
where $a=\left(a^{1},a^{2},\ldots,a^{2}\right)$ is the action profile
and $t=\left(t^{1},t^{2},\ldots,t^{2}\right)$ is the type profile.
\end{itemize}
Notice that the difference between this model and the standard model
of games with incomplete information~\cite{Harsanyi1967} (e.g.,
as described in~\cite[Def.~10.37 p.~407]{Maschler2013}) is that
in the latter it is assumed that the players also  have posterior
distributions on $t^{i}$ (or equivalently, they have subjective prior
distributions on $\Omega$).

In this work we are interested in \emph{games with type ambiguity}.
In these games the states of the world are types vectors $\Omega\subseteq\times_{i=1}^{n}\TypeSet^{i}$,
i.e., the unknown information can be represented as information on
the types, and in particular any two states of the world are distinguishable
by at least one player.\footnote{For Bayesian settings, this assumption is without loss of generality,
because we can unify two indistinguishable states and replace them
by the respective lottery, without changing the preferences. Since
here we assume no posterior distribution, this assumption is indeed
constraining.} For this restricted model, we justify our choice of $\LEX$ preferences.
Note that we prove the existence of a mixed $\MINNE$ (Thm.~\ref{thm:NEexists})
for every game with ambiguity.

A strategy of a player states his action for each of his types $\sigma^{i}\colon\TypeSet^{i}\rightarrow\Delta\left(\ActionSet^{i}\right)$.
 Given a type profile $t=\left(t^{1},\ldots,t^{n}\right)$ and a
strategy profile $\sigma=\left(\sigma^{1},\ldots,\sigma^{n}\right)$,
we denote by $t^{-i}$ the types of the players besides player~$i$
and by $\sigma^{-i}\left(t^{-i}\right)$ their actions under $t$
and $\sigma$. I.e., $\sigma^{-i}\left(t^{-i}\right)=\left(\sigma^{1}\left(t^{1}\right),\ldots,\sigma^{i-1}\left(t^{i-1}\right),\sigma^{i+1}\left(t^{i+1}\right),\ldots,\sigma^{n}\left(t^{n}\right)\right)$.
We note that the utility of a given type of player~$i$ is only
affected by that actions taken by other players and not by the actions
of the other types player~$i$. Hence, we assume a player chooses
his action after knowing his type and not ex-ante beforehand, and
model player~$i$'s choice of action (best-responding to the others)
as a series of independent problems, one for each of his types, of
choosing an action. We refer to these problems as the decision process
carried out by a type.

\subsection{Preferences under ambiguity}

Decision theory~(\cite[Ch.~13]{Luce1957}, \cite{Gilboa2011}) 
deals with scenarios in which a single decision maker (DM) needs to
choose an action from a given set $\ActionSet$ when his utility from
an action $a\in\ActionSet$ depends also on an unknown state of the
world $\omega\in\Omega$, and so his preference is represented by
a utility function $u\colon\ActionSet\times\Omega\rightarrow\Re$.
Player~$i$ (of type $t^{i}$) looks for a response (an action) to
a profile $\sigma^{-i}$. This response problem is of the same format
as the DM problem: he needs to choose an action while not knowing
the state of the world $\omega$ (the types of his opponents $t^{-i}$
and their actions $\sigma^{-i}\left(t^{-i}\right)$ are derived from
$\omega$). 

We define the two preference orders over actions, $\MIN$ and $\LEX$,
in the framework of Decision Theory.  We define them by defining
the pair-wise comparison relation, and it is easy to see that this
relation is indeed an order. The first preference we define corresponds
to Wald's MiniMax decision rule~\cite{Wald1950}.

\begin{defn}[$\MIN$ preference]
\MyLyxThmNewline{} A DM strongly prefers an action $a$ to an action
$a'$ according to $\MIN$, if the worst outcome when playing $a$
is preferred to the worst outcome of playing $a'$:\footnote{The $\MIN$ preference is representable by a utility function $U\left(a\right)=\min_{\omega\in\Omega}u\left(a,\omega\right)$. }\VspaceBeforeMath{-.7em}{}
\[
\min_{\omega\in\Omega}u\left(a,\omega\right)>\min_{\omega\in\Omega}u\left(a',\omega\right).
\]
\end{defn}
These preferences follow the same motivation as worst-case analysis
of computer science (where a designer needs to choose an algorithm
or a system to use and the expected environment in unknown in advance\footnote{Note that also the commonly used competitive-ratio is actually applying
the $\MIN$ preference when we define $u\left(a,\omega\right)$ not
to be the performance of an algorithm $a$ on an input $\omega$,
but the ratio between this performance and the performance of an optimal
all-knowing algorithm.}). $\MIN$ preference can also be justified as an extreme ambiguity
aversion; judging an action by the worst possible outcome ignoring
the probability of this outcome. In Section~\ref{sec:-axiomatizations}
(Prop.~\ref{prop:GS+mine=00003DMINexp}) we show an axiomatization
of $\MIN$ preference as the unique preference in the ambiguity aversion
maxmin model presented by Gilboa and Schmeidler ~\cite{Gilboa1989}
(i.e., that satisfies the axioms they present), which satisfies two
additional axioms we present here.

The second preference we introduce is a refinement of the $\MIN$
preference, as it breaks ties in cases where $\MIN$ states indifference
between actions.
\begin{defn}[$\LEX$ preference]
\MyLyxThmNewline{} A DM strongly prefers an action $a$ to an action
$a'$ according to $\LEX$, if either $\min_{\omega\in\Omega}u\left(a,\omega\right)>\min_{\omega\in\Omega}u\left(a',\omega\right)$
or he is indifferent between the two respective worst outcomes and
he prefers the best outcome of playing $a$ to the best outcome of
playing $a'$:\footnote{The $\LEX$ preference is not representable by a utility function,
for the same reason that the lexicographic preference over $\Re^{2}$
is not representable by a utility function~\cite[Ch.~3.C, p.~46]{Mas-Colell1995}.}\VspaceBeforeMath{-.7em}{}
\[
\left\{ \begin{array}{l}
\min\limits _{\omega\in\Omega}u\left(a,\omega\right)=\min\limits _{\omega\in\Omega}u\left(a',\omega\right)\\
\max\limits _{\omega\in\Omega}u\left(a,\omega\right)>\max\limits _{\omega\in\Omega}u\left(a',\omega\right).
\end{array}\right.
\]
\end{defn}

Returning to our framework of games with ambiguity, we define the
corresponding \emph{best response} ($\BR$) correspondences: $\MINBR$
and $\LEXBR$. The best response of a (type of a) player is a function
that maps any action profile of the other players to the  actions
that are optimal according to the preference. It is easy to see that
a best response according to $\LEX$ is also a best response according
to $\MIN$; that is, $\LEXBR$ is a refinement of $\MINBR$. We show
that the two best response notions are well defined and exist for
any (finite) game.\footnote{This lemma can be easily extended to the case of an infinite number
of states by assuming some structure on the action set and the utility
function.}
\begin{lem}
\MyLyxThmFN{Proof can be found in Appendix~\ref{Appendix:Proof of lem:BRexists}.}\label{lem:BRexists}\MyLyxThmNewline{}

The following best response correspondences are non-empty: pure $\MINBR$,
mixed $\MINBR$, pure $\LEXBR$, and mixed $\LEXBR$.

\end{lem}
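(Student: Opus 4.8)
The plan is to view each best-response task as a single-agent decision problem over a finite state space and to extract the optimizer by a finite (for pure actions) or a Weierstrass-type compactness (for mixed actions) argument. When a type of player~$i$ best-responds to a fixed profile $\sigma^{-i}$, the state $\omega\in\Omega$ is the only unknown, $\Omega$ is finite by definition of a game with ambiguity, and each $\omega$ together with the fixed actions $\sigma^{-i}$ determines an outcome; so the induced utility is a real-valued function $u(a,\omega)$ of the chosen action $a$ and of $\omega\in\Omega$. In particular, for each fixed $a$, the numbers $\min_{\omega\in\Omega}u(a,\omega)$ and $\max_{\omega\in\Omega}u(a,\omega)$ are minima and maxima of finitely many reals, hence well defined.

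For pure $\MINBR$ this is immediate: $a\mapsto\min_{\omega\in\Omega}u(a,\omega)$ is a real function on the finite set $\ActionSet^{i}$, so its $\argmax$ is non-empty. For pure $\LEXBR$, restrict to that non-empty finite set of $\MIN$-optimal actions and maximize over it the real quantity $a\mapsto\max_{\omega\in\Omega}u(a,\omega)$; by the definition of $\LEX$, any maximizer is $\LEX$-optimal in $\ActionSet^{i}$. Equivalently, $\LEX$ is the lexicographic order on $\Re^{2}$ pulled back along $a\mapsto\bigl(\min_{\omega}u(a,\omega),\max_{\omega}u(a,\omega)\bigr)$, and a total order on a finite set has a maximal element. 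This construction also makes explicit that $\LEXBR$ refines $\MINBR$.

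For the mixed versions I would invoke compactness. The mixed-action set $\Delta(\ActionSet^{i})$ is compact, and since $u$ is linear (hence continuous) in player~$i$'s mixed action, each $a\mapsto u(a,\omega)$ is continuous; therefore $a\mapsto\min_{\omega\in\Omega}u(a,\omega)$, a minimum of finitely many continuous functions, is continuous and attains a maximum value $V^{*}$ on $\Delta(\ActionSet^{i})$, which gives mixed $\MINBR$. For mixed $\LEXBR$, set $S=\{a\in\Delta(\ActionSet^{i}):\min_{\omega\in\Omega}u(a,\omega)=V^{*}\}$. Since $V^{*}$ is the maximal value of a continuous function, $S$ equals the superlevel set $\{a:\min_{\omega}u(a,\omega)\ge V^{*}\}$, hence is closed, hence a non-empty compact subset of the simplex. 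The continuous function $a\mapsto\max_{\omega\in\Omega}u(a,\omega)$ attains a maximum on $S$ at some $a^{*}$, and $a^{*}$ is $\LEX$-optimal over all of $\Delta(\ActionSet^{i})$: every $a\notin S$ has $\min_{\omega}u(a,\omega)<V^{*}=\min_{\omega}u(a^{*},\omega)$ and is thus strictly $\LEX$-worse, while every $a\in S$ ties $a^{*}$ on the worst outcome and does not exceed it on the best outcome.

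The only step that is not a one-line finiteness or Weierstrass argument is the mixed $\LEXBR$ case, and there the single point to get right is that the set $S$ of $\MIN$-optimal mixed actions is compact — being a superlevel set of a continuous function at its maximum — so that the second, best-outcome maximization is again over a compact set and hence attained. The footnote's extension to infinitely many states would go through verbatim once one imposes conditions (e.g.\ compact action set, continuous $u$) that keep the relevant minima, maxima, and superlevel sets attained and compact.
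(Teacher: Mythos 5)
Your proposal is correct and follows essentially the same route as the paper's proof in Appendix~\ref{Appendix:Proof of lem:BRexists}: finiteness for the pure cases, continuity of $\sigma\mapsto\min_{\omega}u(\sigma,\omega)$ over the compact simplex for mixed $\MINBR$, and compactness of the set of $\MIN$-optimal mixed actions to extract a maximizer of the best-outcome map for mixed $\LEXBR$. The only difference is cosmetic: you spell out why that set is compact (a superlevel set of a continuous function at its maximum value), which the paper simply asserts.
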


\subsection{Equilibria under ambiguity}

Next we define the corresponding (interim) Nash equilibrium ($\NE$)
concepts as the profiles of strategies in which each type best-responds
to the strategies of the other players. From the definition of $\LEX$
it is clear that any equilibrium according to $\LEX$ is also an equilibrium
according to $\MIN$. Hence we regard $\LEXNE$ as an equilibrium-selection
notion or a refinement of $\MINNE$, in cases in which we find $\MINNE$
to be unreasonable. Our main theorem for this section is showing that
any game with ambiguity has an equilibrium according to $\MIN$ ($\MINNE$)
in mixed strategies.\footnote{Throughout this paper, unless stated otherwise, when we refer to $\MINNE$
and $\LEXNE$ we mean equilibria in pure actions.} 

\begin{thm}
\label{thm:NEexists}\MyLyxThmNewline{}
Every game with ambiguity has a $\MINNE$ in mixed actions.

\end{thm}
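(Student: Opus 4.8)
The plan is to reduce the claim to the existence of a Nash equilibrium in an auxiliary finite game and then apply a standard fixed-point theorem. Because a player has no prior over his own types (indeed over $\Omega$), there is no way to aggregate player~$i$'s types into a single payoff; instead, following the paper's own observation that each type carries out an independent decision process, I would pass to the \emph{agent (type) normal form}: introduce one agent for every pair $(i,t^i)$ with $i\in\Players$ and $t^i\in\TypeSet^i$, let the action set of agent $(i,t^i)$ be $\Delta(\ActionSet^i)$, so that a profile of agent-actions is exactly a mixed strategy profile $\sigma=(\sigma^i)_{i\in\Players}$ ranging over the compact convex polytope $\Sigma=\prod_{i\in\Players}\prod_{t^i\in\TypeSet^i}\Delta(\ActionSet^i)$. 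For $\omega\in\Omega$ and $j\ne i$ write $t^{j}(\omega)$ for the unique type of player~$j$ with $\omega\in t^{j}(\omega)$, and give agent $(i,t^i)$ the interim $\MIN$-payoff
\[
V^i_{t^i}(\sigma)\;=\;\min_{\omega\in t^i}\; u^i\!\left(\omega,\;\bigl(\sigma^i(t^i),\,\bigl(\sigma^{j}(t^{j}(\omega))\bigr)_{j\ne i}\bigr)\right).
\]
By construction $\sigma$ is a $\MINNE$ of the game with ambiguity in mixed actions if and only if it is a Nash equilibrium of this agent normal form: the requirement that every type best-responds according to $\MIN$ is precisely the requirement that every agent best-responds to $\sigma$.

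Next I would check the regularity of the $V^i_{t^i}$. Since $u^i$ is multilinear on $\prod_{j}\Delta(\ActionSet^j)$, for fixed behavior of the other agents the expression inside the minimum is affine in agent $(i,t^i)$'s own action $\sigma^i(t^i)$; hence $V^i_{t^i}$, being a pointwise minimum of finitely many affine functions, is concave (a fortiori quasi-concave) in that agent's own action. It is also jointly continuous in $\sigma$, as a minimum of finitely many continuous (indeed multilinear) functions, and it depends only on the coordinates $\sigma^i(t^i)$ and $\bigl(\sigma^{j}(t^{j}(\omega))\bigr)_{j\ne i,\ \omega\in t^i}$, so it is a bona fide agent-normal-form payoff. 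Consequently, by Berge's maximum theorem, the best-response correspondence $B^i_{t^i}$ of each agent is nonempty (this also follows from Lemma~\ref{lem:BRexists}), convex- and compact-valued, and upper hemicontinuous; the product correspondence $\sigma\mapsto\prod_{i,t^i}B^i_{t^i}(\sigma)$ therefore maps the nonempty compact convex set $\Sigma$ into itself with nonempty convex compact values and a closed graph. Kakutani's fixed-point theorem then gives a fixed point, which is a Nash equilibrium of the agent normal form, hence a $\MINNE$ of the game in mixed actions. Equivalently, one may invoke the Debreu--Glicksberg--Fan existence theorem directly, since the agent normal form is a finite game with compact convex action sets and payoffs that are continuous and quasi-concave in the own action.

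The continuity/concavity bookkeeping is routine once the payoffs are written down, and finiteness of $\Omega$ and of the $\ActionSet^i$ makes all compactness claims immediate, so I do not expect a serious obstacle. The one point that requires care --- and the only genuinely model-specific step --- is the passage to the agent normal form: one must verify that the interim problem solved by a \emph{type} $t^i$ has exactly the payoff $V^i_{t^i}$ above (the states of the world relevant to $t^i$ are the $\omega\in t^i$, and the payoff-relevant summary of $\sigma^{-i}$ is $(\sigma^{j}(t^{j}(\omega)))_{j\ne i}$), so that fixed points of the product best-response map coincide with $\MINNE$ of the game. After that, the argument is just Nash's existence proof with the linear payoff replaced by the concave function $V^i_{t^i}$ --- which is exactly why concavity (rather than linearity), and hence Kakutani rather than a Brouwer-map argument, is the appropriate tool here.
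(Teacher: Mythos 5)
Your proposal is correct and is essentially the paper's own argument: both apply Kakutani's fixed-point theorem to the product, over all type-agents, of the $\MIN$ best-response correspondences on the product of simplices, using Berge's maximum theorem for upper hemicontinuity and the fact that a pointwise minimum of affine functions is concave (which is what the paper's explicit convexity computation amounts to). The only cosmetic difference is that you package this as the agent normal form and note the Debreu--Glicksberg--Fan shortcut, and you write the minimum over $\omega\in t^i$ rather than over all of $\Omega$, which is the more careful rendering of the interim problem.
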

\begin{proof-sketch}
(The full details of the proof can be found in Appendix~\ref{Appendix:Proof of thm:NEexists})\global\long\def\SET{\mathbb{S}}
\MyLyxThmNewline{}

We take $\SET$ to be the set of all profiles of mixed strategies
of the types and define the following set-valued function $F:\mathbb{\SET}\rightarrow\mathbb{\SET}$.
Given a strategy profile $s$, $F\left(s\right)$ is the product of
the best responses to $s$ (according to $\MIN$) of the different
types. We prove the existence of a mixed $\MINNE$ by applying Kakutani's
fixed point theorem~\cite{Kakutani1941} to $F$. A fixed point of
$F$ is a profile $s$ satisfying $s\in F\left(s\right)$; i.e., each
type best-responds to the others in the profile $s$, and hence $s$
is a $\MINNE$.
\end{proof-sketch}
Since the existence of $\MINNE$ is the result of applying Kakutani's
fixed point theorem to the best response function,\footnote{The best response correspondence can be computed in polynomial time.\VspaceBeforeMath{-.7em}{}
\[
\begin{array}[b]{rl}
\BR\left(s\right) & =\argmax_{\sigma^{i}}\min_{\omega\in\Omega}u\left(\omega,\sigma^{i},s^{-i}\left(t^{-i}\left(\omega\right)\right)\right)\\
 & =\argmax_{\sigma^{i}}\min_{\omega\in\Omega}\Exp_{a\sim\sigma^{i}}C_{a,\omega}
\end{array}\text{ for }C_{a,\omega}=u\left(\omega,a,s^{-i}\left(t^{-i}\left(\omega\right)\right)\right).
\]
The maximal value a player can guarantee himself, $v^{*}=\max_{\sigma^{i}}\min_{\omega\in\Omega}\Exp_{a\sim\sigma^{i}}C_{a,\omega}$,
is the solution to the following program which is linear in $v$ and
$\sigma^{i}$\VspaceBeforeMath{-.7em}{}
\[
\max v\st\forall w\,\Exp_{a\sim\sigma^{i}}C_{a,\omega}\geqslant v,
\]
that can be solved  in polynomial time. Given $v^{\star}$, $\BR\left(s\right)$
is the intersection of $\sizeof{\Omega}$ hyperplanes of the form
$\Exp_{a\sim\sigma^{i}}C_{a,\omega}\geqslant v^{*}$.} we get as a corollary the complexity of the problem of finding $\MINNE$. 
\begin{cor}
\MyLyxThmNewline{}

The problem of finding a $\MINNE$ is in \ProblemClass{PPAD}~\cite{Papadimitriou1994,Papadimitriou2015}.
Moreover, it is a \ProblemClassComplete{PPAD} problem since a special
case of it, namely, finding a Nash equilibrium, is a \ProblemClassHard{PPAD}
problem~\cite{Chen2009}.
\end{cor}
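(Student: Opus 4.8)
The corollary has two parts, and the entire content is in the membership claim; the hardness is immediate. For hardness I would simply note that a game with ambiguity in which $\Omega$ is a singleton is nothing but an ordinary finite game: each partition $\TypeSet^{i}$ is then trivial, every player has a single type, and the minimum in the definition of $\MIN$ ranges over one state and is vacuous, so $\MIN$ preferences coincide with ordinary preference maximization, $\MINBR$ with ordinary best response, and the $\MINNE$ of the game are exactly its Nash equilibria. Hence ``find a $\MINNE$'' generalizes ``find a Nash equilibrium of a finite game'', which is \ProblemClassHard{PPAD}~\cite{Chen2009}; together with the membership below this yields \ProblemClassComplete{PPAD}.

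For membership the plan is to arithmetize the fixed-point argument behind Thm.~\ref{thm:NEexists}: that argument exhibits the $\MINNE$ as the fixed points of the best-response correspondence $F$, and the preceding footnote records that $F$ (and the per-type guaranteed value $v^{*}$) is computable in polynomial time via linear programming. So I would reduce, in polynomial time, the problem of finding a $\MINNE$ to finding a fixed point of a polynomial-time-computable correspondence of the standard form, hence to \ProblemName{Brouwer}, which is in \ProblemClass{PPAD}~\cite{Papadimitriou1994,Papadimitriou2015}. Concretely I would do this in one of two equivalent ways. \emph{(a) An auxiliary ordinary game.} Build a finite game $\widehat{G}$ whose players are the decision problems of the types of $G$ (one per pair of a player~$i$ and a type $t^{i}\in\TypeSet^{i}$) together with, for each such pair, an \emph{adversary} player whose action set is the partition cell $t^{i}\subseteq\Omega$; given a mixed action $\sigma^{i}$ of type $t^{i}$, a mixed action $p$ of its adversary, and mixed actions of the other type-players, let the payoff of $t^{i}$ be
\[
\Exp_{\omega\sim p}\;u^{i}\!\left(\omega,\sigma^{i},\sigma^{-i}\!\left(t^{-i}(\omega)\right)\right)
\]
and let the adversary's payoff be its negation. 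These payoffs are multilinear, so $\widehat{G}$ is an ordinary finite game, with polynomially many players and actions and with payoffs computable in polynomial time from the description of $G$, and it is constructible from $G$ in polynomial time. \emph{(b) Regularization.} Alternatively I would replace the set-valued best response of each type, $\argmax_{\sigma^{i}}\min_{\omega\in t^{i}}u^{i}(\omega,\sigma^{i},\sigma^{-i}(t^{-i}(\omega)))$, by its strictly-concave regularization (subtract a small multiple of $\|\sigma^{i}\|^{2}$), which is single-valued and continuous, so that the product over type-players is a continuous self-map of $\prod_{i}\Delta(\ActionSet^{i})$ whose Brouwer fixed points are approximate $\MINNE$.

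To check route (a) I would invoke minimax duality. Fix all strategies except those of a type $t^{i}$ and its adversary; the remaining interaction is a finite zero-sum game in which $t^{i}$ maximizes and the adversary minimizes $\Exp_{\omega\sim p}u^{i}(\omega,\sigma^{i},\sigma^{-i}(t^{-i}(\omega)))$, whose value is $\max_{\sigma^{i}}\min_{\omega\in t^{i}}u^{i}(\omega,\sigma^{i},\sigma^{-i}(t^{-i}(\omega)))$ and whose equilibrium strategies for $t^{i}$ are precisely its maxmin strategies, i.e.\ its $\MINBR$ to the other players. Hence the restriction of any Nash equilibrium of $\widehat{G}$ to the type-players is a $\MINNE$ of $G$; conversely any $\MINNE$ of $G$ extends to a Nash equilibrium of $\widehat{G}$ by letting each adversary play a minmax strategy of its zero-sum subgame. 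A $\MINNE$ of $G$ can thus be extracted from a Nash equilibrium of $\widehat{G}$ in polynomial time, totality is guaranteed by Thm.~\ref{thm:NEexists}, and membership in \ProblemClass{PPAD} follows.

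The hardness direction is trivial and the reductions are routine once set up; the step that needs care — and the reason the reduction is not a one-liner — is that the $\MIN$ payoff $\sigma^{i}\mapsto\min_{\omega}u^{i}(\omega,\sigma^{i},\cdot)$ is concave and piecewise-linear but \emph{not} multilinear, so a game with ambiguity is not literally a finite game and the Nash/Brouwer machinery cannot be applied to it verbatim (in particular the naive ``gain toward each pure action'' Brouwer map from the textbook proof of Nash's theorem need not have the $\MINNE$ as its fixed points). Introducing the adversary players — equivalently, regularizing the best response, or appealing to the theory of concave games — is exactly what re-linearizes the inner minimum and removes this obstruction. A secondary, purely presentational caveat, inherited already from the statement that finding a Nash equilibrium is \ProblemClassComplete{PPAD}~\cite{Chen2009}, is that for three or more players the appropriate solution concept is the $\varepsilon$-approximate one; this does not affect the corollary as phrased.
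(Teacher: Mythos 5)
Your proposal is correct, and its hardness half coincides with the paper's: the footnote accompanying the corollary makes exactly your observation that without ambiguity $\MINNE$ and Nash equilibria coincide, so the problem is \ProblemClassHard{PPAD}. For membership, however, you take a genuinely different route. The paper's argument is the one-sentence appeal recorded just before the corollary: the existence proof realizes the $\MINNE$ as Kakutani fixed points of the best-response correspondence, the footnote notes that this correspondence is polynomial-time computable by linear programming, and membership in \ProblemClass{PPAD} is then inherited from the general Papadimitriou framework for fixed points of polynomial-time-computable maps. Your route (a) instead gives an explicit polynomial-time reduction to \emph{ordinary} Nash equilibrium: the adversary players re-linearize the inner minimum, minimax duality identifies the type-players' equilibrium strategies with their $\MIN$-best responses, and \ProblemClass{PPAD} membership follows from the known status of the Nash problem. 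This is more self-contained and makes precise a point the paper leaves implicit --- that the $\MIN$ payoff is concave and piecewise-linear but not multilinear, so a game with ambiguity is not literally a finite game and the Nash/Brouwer machinery cannot be quoted verbatim; your caveat about exact versus $\varepsilon$-approximate equilibria for three or more players applies equally to the paper's own statement. Both arguments are sound: yours buys an explicit, checkable reduction at the cost of the auxiliary construction, while the paper's buys brevity at the cost of leaning on the general fixed-point-to-\ProblemClass{PPAD} translation.
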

Next we show that there are games that  with no equilibrium according
to $\LEX$. We show that this is true even for a simple generic game:
a two-player game with \emph{type ambiguity} on \emph{one side only}. 

\begin{lem}
\MyLyxThmNewline{}There are games for which there is no $\LEXNE$.
\end{lem}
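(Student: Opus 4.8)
The plan is to exhibit a concrete two-player game with type ambiguity on one side only, and show by exhaustive (but short) case analysis that no strategy profile can be a $\LEXNE$. The natural candidate is a small perturbation of Battle of the Sexes, matching the paper's stated motivation: Player~1 has no type (one state as far as Player~1 is concerned — equivalently, Player~1 is a single dummy type), while Player~2 has two possible types, say $t_A$ and $t_B$, who disagree about which coordinated outcome they prefer. Both players have action set $\{A,B\}$. Coordination is always better than miscoordination for everyone; among coordinated outcomes, Player~1 strictly prefers $(A,A)$, type $t_A$ of Player~2 strictly prefers $(A,A)$, and type $t_B$ of Player~2 strictly prefers $(B,B)$. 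The two states of the world are $\omega_A$ (Player~2 is type $t_A$) and $\omega_B$ (Player~2 is type $t_B$), and Player~1 cannot distinguish them. I would write out the $2\times 2$ payoff bimatrices for the two states explicitly so the computation is transparent.

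First I would pin down the behavior of Player~2: since each type of Player~2 knows the state, a type's decision problem has a unique state, so $\LEX$ (like $\MIN$) reduces to ordinary payoff maximization against Player~1's action. So given Player~1's (possibly mixed) action $a^1$, type $t_A$ best-responds by matching toward $A$ and type $t_B$ toward $B$ in the appropriate way; in particular if $a^1$ puts enough weight on $A$ then $t_A$ plays $A$, and symmetrically. The key structural fact I want is that, because the two types pull in opposite directions, \emph{for every} action of Player~1 the induced strategy of Player~2 is such that the two states $\omega_A,\omega_B$ give Player~1 genuinely different outcomes — typically one is a coordination and one is a miscoordination, or the two coordinate on opposite actions. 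I would then turn to Player~1's decision problem, which is a $\MIN$-then-$\MAX$ problem over the two states $\omega_A,\omega_B$ with outcomes determined by Player~2's strategy, and show that whatever Player~2 does, Player~1's unique $\LEXBR$ action is \emph{not} consistent with the Player~2 strategy we assumed — i.e., the fixed-point condition fails.

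The core of the argument is a short finite case split on Player~2's pure strategy profile $(\sigma^2(t_A),\sigma^2(t_B)) \in \{A,B\}^2$ (and then an argument that genuinely mixed strategies of Player~2 cannot help either, because a type of Player~2 mixes only when exactly indifferent, which happens only on a measure-zero set of $a^1$, and one checks those boundary cases separately). For the four pure cases: if both types play $A$, then Player~1 faces a constant outcome and best-responds with $A$ — but then type $t_B$ is facing $A$ and strictly prefers $B$, contradiction; symmetrically both playing $B$ fails via type $t_A$. The interesting cases are $(\sigma^2(t_A),\sigma^2(t_B))=(A,B)$ and $(B,A)$. In the "separating" case $(A,B)$, Player~1 playing $A$ yields coordination in $\omega_A$ and miscoordination in $\omega_B$, while playing $B$ yields miscoordination in $\omega_A$ and coordination in $\omega_B$; so $\MIN$ is indifferent (both give a miscoordination payoff as worst case), and $\LEX$ breaks the tie by the \emph{best} outcome — Player~1 strictly prefers coordinating on $A$ to coordinating on $B$, so the unique $\LEXBR$ is $A$; but then against $a^1=A$, type $t_B$ is not best-responding (it should deviate to $A$ to coordinate), contradiction. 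The "anti-separating" case $(B,A)$ is handled the same way, and one also checks the mixed boundary cases. Since every Player~2 strategy leads to a contradiction, no $\LEXNE$ exists. (I would also note in passing why this game \emph{does} have a $\MINNE$, to underline that the phenomenon is specific to the refinement.)

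The main obstacle — really the only delicate point — is handling mixed strategies cleanly: one must rule out the possibility that some mixed $(\sigma^1,\sigma^2)$ is a $\LEXNE$. The right way to dispose of this is to observe that each type of Player~2, facing a single state, has a strict best response except on the single hyperplane of $a^1$-values where it is exactly indifferent; so in any equilibrium Player~2's strategy is one of the four pure profiles unless $a^1$ sits exactly on such a boundary, and those finitely many boundary values of $a^1$ can be checked directly to see that Player~1's $\LEXBR$ there is still a pure action inconsistent with equilibrium. I would also double-check the $\LEX$ tie-breaking arithmetic in the separating case, since that is the step that genuinely uses the $\MAX$-component of $\LEX$ and is the whole reason a $\MINNE$ exists here but a $\LEXNE$ does not; getting the payoff inequalities in the bimatrix to make $\MIN$ exactly indifferent while $\MAX$ strictly prefers $A$ is what the perturbation is for, and I would state those inequalities explicitly.
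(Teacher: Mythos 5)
There is a genuine gap, and it sinks the whole construction. In your first case (``both types of Player~2 play $A$'') you claim that type $t_B$, facing Player~1 at $A$, ``strictly prefers $B$'' --- but this contradicts your own stipulation that coordination is always better than miscoordination for everyone. If Player~1 plays $A$, then $t_B$ playing $B$ yields a miscoordination, which by hypothesis is worse for $t_B$ than coordinating at $A$; so $t_B$'s best response is $A$, not $B$. Consequently the pooling profile in which Player~1 and both types of Player~2 all play $A$ \emph{is} a $\LEXNE$ of your game (everyone is best-responding, with certainty of the outcome, so $\MIN$, $\LEX$ and plain maximization all agree), and the non-existence claim fails for this example. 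This is not a fixable detail of your case analysis: the paper itself proves in its coordination-games section that profiles in which all types of all players choose the same location are always $\LEXNE$ of any coordination game, so no game of the ``perturbed Battle of the Sexes'' type you describe can serve as a counterexample. The role of that perturbed Battle of the Sexes in the paper is to show that $\MINNE$ is too \emph{coarse} (too many equilibria), which is a different point.

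The paper's actual counterexample is structurally different and worth contrasting with your attempt. The uninformed (row) player there has a safe action $T$ guaranteeing $0$ and a risky action $B$ paying $-1$ or $1$ depending on the column player's choice; the two types of the column player have payoffs engineered so that the unique $\MINNE$ forces one type to mix $\tfrac12 L+\tfrac12 R$ and the row player to mix $\tfrac23 T+\tfrac13 B$. Every row action then has worst case exactly $0$, so $\MIN$ is indifferent, but $\LEX$ breaks the tie by the best case and strictly prefers the pure action $B$ (best case $1$) to the prescribed mixture --- so the unique $\MINNE$ is not a $\LEXNE$, and since every $\LEXNE$ is a $\MINNE$, none exists. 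The essential ingredients you are missing are (i) a game whose only $\MINNE$ is properly mixed, and (ii) a deviation that $\MIN$ cannot see but the $\MAX$-component of $\LEX$ does. Your instinct to reduce the informed player's problem to ordinary maximization and to run a fixed-point-failure argument is sound in outline, but it must be applied to a game without the automatic pooling equilibria that any common-interest-in-coordination structure provides.
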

\begin{proof}
\MyLyxThmNewline{}
Let $G$ be the following two-player game with two actions for each
of the players. The row player's utility is $\begin{array}{|c|c|c|}
\hline  & L & R\\
\hline T & 0 & 0\\
\hline B & -1 & 1
\\\hline \end{array}\vspace{.5em}$. The column player is one of two types: either having utility $\begin{array}{|c|c|c|}
\hline  & L & R\\
\hline T & 0 & 1\\
\hline B & 0 & -2
\\\hline \end{array}$ or $\begin{array}{|c|c|c|}
\hline  & L & R\\
\hline T & 0 & 2\\
\hline B & 0 & -1
\\\hline \end{array}\vspace{.2em}$ (and the row player does not know which).

Then, in the unique $\MINNE$ the first type of the column player
mixes $\half L+\half R$, the second type of the column player plays
$R$, and the row player mixes $\frac{2}{3}T+\frac{1}{3}B$ (all his
mixed actions give him a worst-case payoff of 0). But this is not
a $\LEXNE$ since the row player prefers to deviate to playing $B$
for the possibility of getting $1$, and hence the game does not have
$\LEXNE$ in mixed strategies.\footnote{Technical comment: The reason Kakutani's theorem cannot be applied
here (besides its result being wrong) is twofold:
\begin{itemize}
\item The best-response set is not convex: 

Consider a player who has two possible pure actions, $T$ and $B$,
and his utility (as a function of the action of the opponent) is $\begin{array}{|c|c|c|c|}
\hline  & L & M & R\\
\hline T & 0 & 1 & 2\\
\hline B & 0 & 2 & 1
\\\hline \end{array}\vspace{.5em}$. Next, consider he faces one of three types of his opponent who play
the three actions, respectively. He is indifferent between his two
actions (both give him $0$ in the worst case and $2$ in the best
case), but strictly prefers the two pure actions to any mixture of
the two (giving him less than $2$ in the best case).
\item The best-response function is not upper semi-continuous:

In the example in the lemma, when the row player faces one type that
plays the pure strategy $R$ and another type that mixes $\left(\half+\epsilon\right)L+\left(\half-\epsilon\right)R$,
his unique best response is to play $T$ for any $\epsilon>0$, but
to play $B$ for $\epsilon=0$.
\end{itemize}
} 

\end{proof}

\section{Axiomatization of $\protect\LEX$ }

\label{sec:-axiomatizations}

In this section we justify using equilibria under $\LEX$ preferences
for the analysis of games with type ambiguity. To do so, we present
three properties for decision making under ambiguity and characterize
$\LEX$ as the finest refinement of $\MIN$ that satisfies them (Thm.~\ref{thm:LEX characterization}).
We claim that these properties are necessary for modeling decision
making under ambiguity about the other players' types. In doing so,
we justify our application of $\LEXNE$.

\subsection{The decision-theoretic framework}

Let $\Omega$ be a finite set of states of the world. We characterize
a preference, i.e., a total order, of a decision maker (DM) over the
action set $\ActionSet$ where an action is a function $a\colon\Omega\rightarrow\Re$
that yields a utility for each state of the world.\footnote{Replacing $\Re$ with any other ordered set would not change our results.
}

Our first two properties are natural and we claim that any reasonable
preference under ambiguity should satisfy them. The first property
we present is a basic rationality assumption: monotonicity. It requires
that if an action $a$ results in a higher or equal utility than an
action $b$ in all states of the world, then the DM should weakly
prefer $a$ to $b$.
\begin{ax}[Monotonicity]
\MyLyxThmNewline{} For any two actions $a$ and b, if $a\left(\omega\right)\geqslant b\left(\omega\right)$
for all $\omega\in\Omega$, then either the DM is indifferent between
the two or he prefers $a$ to $b$.
\end{ax}
The second property, state symmetry, states that the DM should choose
between actions based on properties of the actions and not of the
states. I.e., if we permute the states' names, his preference should
not change. Since we can assume that the states themselves have no
intrinsic utility beyond the definition of the actions, this property
formalizes the property that the DM, sue to the ambiguity about the
state, should satisfy the \emph{Principle of Insufficient Reason}
and treat the states symmetrically.\footnote{Note that this property rules out any subjective expectation maximization
preference, except for expectation under the uniform distribution.}
\begin{ax}[State symmetry]
\MyLyxThmNewline{} For any two actions $a$ and $b$ and a bijection
$\psi\colon\Omega\rightarrow\Omega$, if $a$ is preferred to $b$,
then $a\circ\psi$ is preferred to $b\circ\psi$ ($a\circ\psi\left(\omega\right)$
is defined to be $a\left(\psi\left(\omega\right)\right)$, i.e., the
outcome of the action $a$ in the state $\psi\left(\omega\right)$).
\end{ax}
The last property we present is independence of irrelevant information.
This property requires that if the DM considers one of the states
of the world as being two states, by way of considering some new parameter,
 his preference should not change. We  illustrate the desirability
of this property for games with type ambiguity using the following
example. Consider the following variant of the Battle of the Sexes
game between Alice and Bob, who need to decide on a joint activity:
either a Bach concert ($B$) or a Stravinsky concert ($S$). Taking
the perspective of Alice, assume that she faces one of two types of
Bob: $Bob^{B}$ whom she expects to choose $B$, or $Bob^{S}$ whom
she expects to choose $S$. Assume that Alice prefers $B$, ans so
her valuation of actions is \vspace{.5em}$\begin{array}{|c|c|c|}
\hline  & Bob^{B} & Bob^{S}\\
\hline B: & 2 & 0\\
\hline S: & 0 & 1
\\\hline \end{array}$ ($0$ if they do not meet and $2$ or $1$ if they jointly go to
a concert). But there might be other information Alice does not know
about Bob. For example, it might be that in case Bob prefers (and
chooses) $S$, Alice also does not know his favorite soccer team.\footnote{Of course, his favorite soccer team is clear in case he prefers Bach.}
So she might actually conceive the situation as \vspace{.5em}$\begin{array}{|c|c|c|c|}
\hline  & Bob^{B} & Bob^{S,\star} & Bob^{S,\dag}\\
\hline B: & 2 & 0 & 0\\
\hline S: & 0 & 1 & 1
\\\hline \end{array}$. Since this new soccer information is irrelevant to the game, it
should not change the action of a rational player. Notice that if
Alice chooses according to the recursive $\MIN$ rule we described
in the introduction, she will choose according to the second-worst
outcome and hence choose $B$ in the first scenario and $S$ in the
second scenario. We find a decision model of a rational player which
is susceptible to this problem  to be an ill-defined model.

\global\long\def\doublehatomega{\WILLNOTWORK}
\newcommand{\doublehat}[1]{\widehat{\vphantom{\rule{1pt}{5.5pt}}\smash{\widehat{#1}}}}
\renewcommand{\doublehatomega}{\doublehat{\omega}}
\begin{ax}[Independence of irrelevant information]
 \MyLyxThmNewline{} Let $a$ and $b$ be two actions on $\Omega$
s.t. $a\prefBy b$, and let $\widehat{\omega}\in\Omega$ be a state
of the world. Define a new state space $\Omega'=\Omega\dot{\cup}\left\{ \doublehatomega\right\} $
and let $a'$ and $b'$ be two actions on $\Omega'$ satisfying $a'\left(\omega\right)=a\left(\omega\right)$
and $b'\left(\omega\right)=b\left(\omega\right)$ for all states $\omega\in\Omega\setminus\left\{ \hat{\omega}\right\} $,
$a'\left(\doublehatomega\right)=a'\left(\hat{\omega}\right)=a\left(\hat{\omega}\right)$
, and $b'\left(\doublehatomega\right)=b'\left(\hat{\omega}\right)=b\left(\hat{\omega}\right)$.
Then $a'\prefBy b'$.
\end{ax}
We show that $\LEX$ is the finest refinement of $\MIN$ that satisfies
the above three axioms.\footnote{Notice that the $\MIN$ preference satisfies these properties.}
\begin{thm}
\label{thm:LEX characterization}\MyLyxThmNewline{}
$\LEX$ is the unique preference that satisfies
\begin{itemize}
\item Monotonicity.
\item State symmetry.
\item Independence of irrelevant information.
\item It is a refinement of $\MIN$.\footnote{That is, for any two actions $a$ and $b$, if $a$ is strongly preferred
to $b$ by a DM holding a $\MIN$ preference, then $a$ is strongly
preferred to $b$ by a DM holding a $\LEX$ preference.}
\item It is the finest preference that satisfies the above three properties.
That is, it is a refinement of any preference that satisfies the above
properties.
\end{itemize}

\end{thm}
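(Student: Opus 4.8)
The plan is to split the argument into three parts: first verify that $\LEX$ itself obeys the three axioms and refines $\MIN$; second, prove the structural lemma that every preference $P$ satisfying the three axioms is already determined by the worst and the best outcome of each action; and third, combine that lemma with Monotonicity and the refinement-of-$\MIN$ hypothesis to show $\LEX$ refines every such $P$, which yields the ``finest'' clause and, applied to $\LEX$ itself, uniqueness. Part one is immediate: for an action $a$ the numbers $\min_\omega a(\omega)$ and $\max_\omega a(\omega)$ are invariant under relabelling the states (State symmetry) and under splitting a state into two copies carrying a value already present (Independence of irrelevant information), they are weakly monotone in $a$ (Monotonicity), and $\min_\omega a(\omega)>\min_\omega b(\omega)$ already forces $\bigl(\min a,\max a\bigr)$ to lexicographically exceed $\bigl(\min b,\max b\bigr)$, so $\LEX$ refines $\MIN$.

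For part two I would prove: if $P$ satisfies the three axioms and $a,b$ are actions on the same finite $\Omega$ with $\min a=\min b$ and $\max a=\max b$, then $a\sim_P b$. Since $\Omega$ is finite, every bijection $\psi$ has finite order, so iterating State symmetry along the cycle $c\prefBy_P c\circ\psi\prefBy_P c\circ\psi^2\prefBy_P\cdots$ gives $c\sim_P c\circ\psi$ for every action $c$; hence actions may be freely re‑sorted and $P$ can depend only on the multiset of values. Write $m=\min a$, $M=\max a$, $k=\sizeof\Omega$, and let $\bar a$ (resp.\ $\hat a$) be the action that takes value $m$ on $k-1$ states and $M$ on one (resp.\ $M$ on $k-1$ states and $m$ on one), with the singleton state placed at a fixed position. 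A suitably sorted copy of $a$ lies pointwise between $\bar a$ and $\hat a$, so Monotonicity gives $\bar a\prefBy_P a\prefBy_P\hat a$; thus it suffices to show $\bar a\sim_P\hat a$. I would do this by induction on $k$: for $k=2$ the two actions coincide; for $k\ge 3$, apply the induction hypothesis to the corresponding pair on $k-1$ states, use State symmetry to arrange those two $(k-1)$-state actions so that one common state carries value $m$ in the first and value $M$ in the second, and then split that state via Independence of irrelevant information — this turns the indifferent pair on $k-1$ states into relabellings of $\bar a$ and $\hat a$ on $k$ states and carries the indifference upward. Consequently $P$ induces a total order $\trianglelefteq$ on pairs $(m,M)$ with $m\le M$, weakly increasing in each coordinate by Monotonicity.

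Part three: let $P$ satisfy the three axioms and refine $\MIN$, and suppose $a\strongPrefOver_P b$. By part two, $(\min a,\max a)\ne(\min b,\max b)$, so either $(\min a,\max a)$ lexicographically exceeds $(\min b,\max b)$ — giving $a\strongPrefOver_{\LEX}b$ as wanted — or it is lexicographically smaller. In the latter case, if $\min a<\min b$ then $b\strongPrefOver_{\MIN}a$, hence $b\strongPrefOver_P a$ because $P$ refines $\MIN$, contradicting $a\strongPrefOver_P b$; and if $\min a=\min b$ with $\max a<\max b$ then the canonical forms satisfy $\bar a\le\bar b$ pointwise, so Monotonicity (and part two) gives $a\prefBy_P b$, again a contradiction. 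Hence $a\strongPrefOver_{\LEX}b$, i.e.\ $\LEX$ refines $P$. Taking $P=\LEX$ (legitimate by part one) shows $\LEX$ is the finest preference among those satisfying the three axioms and refining $\MIN$; and if $P$ is any preference meeting all of the listed conditions, then $\LEX$ refines $P$ by the above, while $P$ refines $\LEX$ since $\LEX$ is itself such a preference and $P$ is finest, so $P=\LEX$, which is uniqueness.

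The crux is the inductive step in part two — lifting $\bar a\sim_P\hat a$ from $k-1$ to $k$ states. The subtlety is that Independence of irrelevant information splits a single state that is common to both actions, so one must first use State symmetry to move that state into a position where it simultaneously carries value $m$ in $\bar a$ and value $M$ in $\hat a$; verifying this can always be arranged (and separately handling $k=3$, where there is the least room to maneuver, and the base case) is where the care goes. Everything else is bookkeeping with $\min$ and $\max$.
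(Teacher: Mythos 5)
Your proof is correct, and its top-level decomposition mirrors the paper's exactly: verify that $\LEX$ satisfies the axioms and refines $\MIN$, establish the structural lemma that any preference satisfying the three axioms depends only on each action's worst and best outcomes, and then combine that lemma with monotonicity and the refinement-of-$\MIN$ hypothesis to conclude that $\LEX$ refines every such preference, with uniqueness falling out of the uniqueness of a finest refinement. Where you genuinely diverge is in the proof of the structural lemma (the paper's Lemma~\ref{lem:MIN and MAX only}). The paper argues by contradiction in the downward direction: it reduces a putative strict preference between two actions with equal extremes to the canonical forms $(m,\ldots,m,M)$ and $(m,M,\ldots,M)$ via monotonicity, then uses independence of irrelevant information to \emph{merge} the middle states down to a three-state and finally a two-state space, where the surviving strict preference between $(m,M)$ and $(M,m)$ contradicts state symmetry. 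You instead argue constructively in the upward direction: after the same monotonicity sandwich between the two canonical forms, you prove their indifference by induction on $\sizeof{\Omega}$, using independence of irrelevant information as a state-\emph{splitting} operation that lifts indifference from $k-1$ states to $k$ states (with state symmetry used to align the split state so it carries $m$ in one canonical form and $M$ in the other). Both uses of the axiom are legitimate --- the paper's merging step is just the contrapositive of the splitting form --- and both arguments are sound; the paper's is shorter and avoids induction, while yours makes more explicit where each axiom does work (in particular, you derive the indifference form of state symmetry from the stated form via the finite order of a permutation, which the paper merely asserts as an equivalent formulation). The remaining parts (the case analysis showing $\LEX$ refines any such $\ArbRule$, and the uniqueness argument) are essentially identical to the paper's.
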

We claim that the three axioms are necessary for modeling a rational
decision making under type ambiguity: Monotonicity is a basic rationality
axiom, and the other two capture that the DM does not have any additional
information distinguishing between the states of the world besides
the outcomes of his actions. Following Wald, one could define the
family of all refinements of $\MIN$ that satisfy the axioms, and
analyze the equilibria when all players follow models from this family.
Showing that $\LEX$ is a refinement of any of the preferences in
this family, says that the set of equilibria when all players follow
models from  this family, must include all equilibria  for the case
when the players follow the $\LEX$ model (i.e., all $\LEXNE$). Moreover,
$\LEXNE$ are the only profiles which are equilibria whenever all
players follow models from this family. We interpret this result as
robustness of the $\LEXNE$ notion: these are the equilibria an outside
party can expect (e.g., the self-enforcing contracts he can offer
to the players), while not knowing the exact preferences of the players.

\begin{proof-of-something}{\ref{thm:LEX characterization}}{Theorem}
\MyLyxThmNewline{}

In order to prove this theorem we first prove that any preference
that satisfies the first three properties can be defined using the
worst (minimal) and best (maximal) outcomes of the actions.\footnote{Arrow and Hurwicz~\cite{Arrow1972} showed a similar result for decision
rules. They defined four properties (A\textendash D) and showed that
under these properties the decision rule can be defined using the
worst and best outcomes only. Their properties are of a similar nature
to the properties we present: Property~A in~\cite{Arrow1972} derives
that the decision rule is derived by a preference (\cite[Prop.~1.D.2, p.~13]{Mas-Colell1995}),
and Properties B, C, and D are of a similar flavor to the properties
of state symmetry, independence of irrelevant information, and monotonicity
in the outcomes, respectively. In order to avoid defining the framework
of decision rules, and because our proof is simple and different from
theirs, we prove Lemma~\ref{lem:MIN and MAX only} directly and do
not rely on their result.}
\begin{lem}
\MyLyxThmFN{This lemma is proved in Appendix~\ref{Appendix:Proof of lem:MIN and MAX only}}\label{lem:MIN and MAX only}\MyLyxThmNewline{}

Any preference that satisfies monotonicity, state symmetry, and independence
of irrelevant information can be defined as a function of the worst
and the best outcomes of the actions.

\end{lem}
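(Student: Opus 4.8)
The plan is to show that the ranking of an action collapses, through three successive reductions, to depend on the multiset of its outcomes, then on the mere set of distinct outcomes, and finally only on the pair $\bigl(\min_{\omega} a(\omega),\max_{\omega} a(\omega)\bigr)$. First I would use state symmetry together with the finiteness of $\Omega$ to reduce to multisets: if $a$ and $b$ have the same multiset of outcomes, pick a bijection $\psi$ of $\Omega$ with $b=a\circ\psi$ and let $k$ be its (finite) order; assuming without loss of generality $a\succcurlyeq b$ and applying state symmetry repeatedly along $\psi$ yields the chain $a\succcurlyeq a\circ\psi\succcurlyeq a\circ\psi^{2}\succcurlyeq\cdots\succcurlyeq a\circ\psi^{k}=a$, forcing all terms, and in particular $a$ and $b$, to be indifferent. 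So the preference depends only on the outcome multiset $\{a(\omega)\}_{\omega\in\Omega}$.

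Next I would invoke independence of irrelevant information, used in both of its forms: a state may be split into two copies carrying the same outcome for every action, and dually two states carrying identical outcomes may be merged, each without changing the ranking. Splitting a state adds one copy of the value it carries to the outcome multiset and merging removes one; hence any two actions whose outcome multisets share the same \emph{support} can be linked by replicating every value in both up to a common large multiplicity, rearranging (by the previous paragraph) so the two resulting multisets coincide, and then reading back. Thus the preference depends only on the set of distinct outcomes of an action.

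Finally I would use monotonicity to delete the interior values. Suppose an action's outcome set is $T=\{v_{1}<v_{2}<\cdots<v_{m}\}$ with $m\ge 3$, fix an interior index $1<j<m$, realize $T$ as a bijection $a\colon\Omega\to T$ on an $m$-state space, and define $a^{+}$ (resp.\ $a^{-}$) by changing the outcome at the preimage of $v_{j}$ to $v_{j+1}$ (resp.\ $v_{j-1}$) and leaving the rest fixed. Monotonicity gives $a^{-}\preccurlyeq a\preccurlyeq a^{+}$, while $a^{+}$ and $a^{-}$ both have outcome set exactly $T\setminus\{v_{j}\}$, so by the middle reduction $a^{-}\sim a^{+}$; hence $a\sim a^{+}$, an action with a strictly smaller outcome set still containing $\min T$ and $\max T$. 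Iterating removes every interior value, so every action is indifferent to the (constant or two-valued) action supported on $\{\min a,\max a\}$; the preference therefore descends to a total order on the pairs $(m,M)$ with $m\le M$, i.e.\ it is a function of the worst and best outcomes.

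The steps using state symmetry and monotonicity are routine; the care goes into the middle reduction, where independence of irrelevant information relates preferences across state spaces of different sizes, so the phrase ``sits at the same position of the preference'' must be made precise by a finite chain of splittings, mergers and rearrangements rather than a single comparison. The genuinely clever point is the sandwich in the last step: perturbing an interior value \emph{both} upward and downward yields two actions with the \emph{same} reduced outcome set, and the set-only property then squeezes $a$ between two indifferent actions, which is what lets monotonicity's weak inequalities deliver an equivalence.
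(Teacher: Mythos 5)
Your overall architecture --- permutation invariance to reduce to multisets, independence of irrelevant information to vary multiplicities, and a monotone sandwich to delete interior values --- is the same skeleton as the paper's proof, and the final sandwich is sound. The genuine problem is in your middle reduction, and it is exactly at the point you flagged as delicate. The independence axiom is directional: it says that $a\prefBy b$ on $\Omega$ implies $a'\prefBy b'$ on the enlarged space, so weak preference (hence indifference) travels \emph{upward} under splitting, and by contraposition strict preference travels \emph{downward} under merging; nothing in the axiom lets you carry an indifference established on the enlarged space back down to $\Omega$. Your plan runs in the forbidden direction: you split up to a common multiset, obtain $a'\sim b'$ there by state symmetry, and then ``read back,'' but it is consistent with the axiom as stated that $a\prec b$ strictly on $\Omega$ while $a'\sim b'$ above, so the read-back step has no justification. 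A second, smaller issue is that a split duplicates the value of \emph{both} actions at the chosen state, so you cannot ``replicate every value in both up to a common large multiplicity'' independently; for instance, for $a=(0,0,0,1)$ and $b=(0,0,1,1)$ no sequence of simultaneous splits equalizes the two multisets unless you first interleave permutations of one of the actions.

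The fix is to run the chain in the legitimate direction. Assume two actions with the same outcome set satisfy $a\prec b$ strictly; use state symmetry to re-pair them so that two states carry the same pair of values (one value for $a$ and one for $b$, the two not necessarily equal to each other), merge those two states --- strict preference survives merging --- and repeat until both actions are bijections onto the common support, at which point they are permutations of one another and state symmetry gives the contradiction. This is essentially what the paper does, after first using monotonicity to sandwich the given actions between the two extremal actions ($m$ everywhere except one $M$, and $M$ everywhere except one $m$), whose constant blocks make the aligned merging immediate and reduce everything to the two-state comparison of $(m,M)$ against $(M,m)$. Your interior-value sandwich in the last step is a correct, if slower, variant of the same idea, but it rests on the middle reduction, so the proof is incomplete as written.
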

Using this lemma, we turn to prove the characterization of $\LEX$.

It is easy to verify that $\LEX$ indeed satisfies the first four
properties. We note that uniqueness is an immediate consequence of
uniqueness of a finest refinement and prove that indeed $\LEX$ is
a finest refinement. Let $\ArbRule$ be an arbitrary refinement of
$\MIN$ that satisfies Monotonicity, state symmetry, and Independence
of Irrelevant Information. Let $a$ and $b$ be two actions s.t. a
DM holding a $\ArbRule$ preferences strongly prefers $a$ to $b$,
and we'll show that $a$ is preferred to $b$ also by a DM holding
a $\LEX$ preference. Applying Lemma~\ref{lem:MIN and MAX only},
we get that $\ArbRule$ can be defined using the minimal and maximal
outcomes. We denote the respective minimal and maximal outcomes of
$a$ by $m_{a}$ and $M_{a}$, and of $b$ by $m_{b}$ and $M_{b}$.

Assume for contradiction that a DM holding a $\LEX$ preference weakly
prefers $b$ to $a$. Since $\ArbRule$ is a refinement of $\MIN$,
it cannot be that $m_{b}>m_{a}$, and hence we get that $m_{b}=m_{a}$
and $M_{b}\geqslant M_{a}$. Both $\ArbRule$ and $\LEX$ can be defined
as a function of the minimal and maximal outcome, so with no loss
of generality we can assume that $a$ and $b$ result in the same
minimal outcome $m_{a}$ is some state $\omega_{m}$, result in their
respective maximal outcomes in the same state $\omega_{M}$, and result
in the same intermediate outcomes in all other states. Since $\ArbRule$
is monotone, we get that a DM holding a $\ArbRule$ preference weakly
prefers $b$ to $a$ and by that get a contradiction.
\end{proof-of-something}
As a corollary of Lemma~\ref{lem:MIN and MAX only}, we get in addition
an axiomatization of Wald's $\MIN$ rule as the unique rule that satisfies
both the above three properties and the natural axioms of Gilboa and
Schmeidler~\cite{Gilboa1989}: certainty independence, continuity,
monotonicity, and uncertainty aversion.

\begin{prop}
\MyLyxThmFN{This lemma is proved in Appendix~\ref{Appendix:Proof of prop:GS+mine=00003DMINexp}}\MyLyxThmNewline{}\label{prop:GS+mine=00003DMINexp}

When there are at least three states of the world ($\sizeof{\Omega}\geqslant3$),
$\MIN$ is the unique preference over $\Re^{\Omega}$ (i.e., actions
that return cardinal outcomes) that satisfies
\begin{itemize}
\item Let $a$ and $b$ be two actions, $c\in\Re$, and $\alpha\in\left(0,1\right)$.
Then for the two actions $a'$ and $b'$ defined by $a'\left(\omega\right)=\alpha\cdot a\left(\omega\right)+\left(1-\alpha\right)\cdot c$
and $b'\left(\omega\right)=\alpha\cdot b\left(\omega\right)+\left(1-\alpha\right)\cdot c$
for all $\omega\in\Omega$.:\VspaceBeforeMath{-.7em}{}
\[
a\SprefOver b\iff a'\SprefOver b'.
\]
\item Let $a$, $b$, and $c$ be three actions s.t. $a\SprefOver b\SprefOver c$.
Then,
\begin{itemize}
\item[] there exists a scalar $\alpha\in\left(0,1\right)$ and an action
$f_{\alpha}$ defined by $f_{\alpha}\left(\omega\right)=\alpha\cdot a\left(\omega\right)+\left(1-\alpha\right)\cdot c\left(\omega\right)$
for all $\omega\in\Omega$ s.t. $f_{\alpha}\SprefOver b$,
\item[] and there exists a scalar $\beta\in\left(0,1\right)$ and an action
$f_{\beta}$ defined by $f_{\beta}\left(\omega\right)=\beta\cdot a\left(\omega\right)+\left(1-\beta\right)\cdot c\left(\omega\right)$
for all $\omega\in\Omega$ s.t. $f_{\beta}\SprefBy b$.
\end{itemize}
\item For any two actions $a$ and $b$ s.t. $a\sim b$ (i.e., $a\prefOver b$
and $b\prefOver a$), it holds that $c_{\alpha}\prefOver a$ for any
action $c_{\alpha}$ defined by $c_{\alpha}\left(\omega\right)=\alpha\cdot a\left(\omega\right)+\left(1-\alpha\right)\cdot b\left(\omega\right)$
for all $\omega\in\Omega$, for some $\alpha\in\left[0,1\right]$.
\item Monotonicity.
\item State symmetry.
\item Independence of irrelevant information.
\end{itemize}

\end{prop}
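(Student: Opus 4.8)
The plan is to combine the Gilboa--Schmeidler representation theorem with Lemma~\ref{lem:MIN and MAX only}, and then to pin down the resulting two-parameter functional using the cardinal structure of the outcomes. The first four properties in the list---certainty independence, the Archimedean/continuity property, uncertainty aversion, and Monotonicity---are exactly the axioms of Gilboa and Schmeidler~\cite{Gilboa1989} (together with the non-degeneracy requirement implicit in their framework, which I assume; it is genuinely needed, since the degenerate preference that declares all actions indifferent satisfies every listed property yet is not $\MIN$). Hence the preference is represented by a functional $U(a)=\min_{p\in C}\Exp_{p}\left[a\right]$ for a unique non-empty compact convex set of priors $C\subseteq\Delta(\Omega)$. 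Reading off the representation, $U$ is positively homogeneous ($U(\lambda a)=\lambda U(a)$ for $\lambda\geqslant0$), constant-additive ($U(a+c)=U(a)+c$ for $c\in\Re$), and satisfies $\min_{\omega}a(\omega)\leqslant U(a)\leqslant\max_{\omega}a(\omega)$.

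Next I would invoke the remaining three properties. Since the preference also satisfies Monotonicity, State symmetry, and Independence of irrelevant information, Lemma~\ref{lem:MIN and MAX only} tells us that $U(a)$ depends only on $m_{a}:=\min_{\omega}a(\omega)$ and $M_{a}:=\max_{\omega}a(\omega)$, say $U(a)=g(m_{a},M_{a})$. Homogeneity and constant-additivity of $U$ translate into $g(\lambda m,\lambda M)=\lambda\,g(m,M)$ and $g(m+c,M+c)=g(m,M)+c$, so $g$ is completely determined by the single value $v:=g(0,1)$, via $g(m,M)=(M-m)v+m$. Equivalently, $U(a)=v\,M_{a}+(1-v)\,m_{a}$, a Hurwicz-type criterion, and this is $\MIN$ precisely when $v=0$; the sandwich inequality already gives $v=g(0,1)\geqslant0$, so the whole proof reduces to excluding $v>0$.

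This is the crux, and the point where the hypothesis $\sizeof{\Omega}\geqslant3$ is used. Fix three distinct states $\omega_{0},\omega_{1},\omega_{2}$. For $t\in[0,1]$, let $a_{t}$ be the action equal to $0$ at $\omega_{0}$, to $1$ at $\omega_{1}$, and to $t$ at every other state; each $a_{t}$ has $(m_{a_{t}},M_{a_{t}})=(0,1)$, so $U(a_{t})=v$ for all $t$. Unwinding the representation, $v=U(a_{t})=t+\min_{p\in C}\big((1-t)\,p(\omega_{1})-t\,p(\omega_{0})\big)$; letting $t\to1$ gives $\max_{p\in C}p(\omega_{0})=1-v$, and substituting a maximizer $q$ back into the identity for a general $t\in(0,1)$ forces $q(\omega_{1})\geqslant v$, hence (as $q(\omega_{0})+q(\omega_{1})\leqslant1$) $q(\omega_{1})=v$ and $q$ is supported on $\{\omega_{0},\omega_{1}\}$. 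Thus $C$ contains the prior $q$ with $q(\omega_{0})=1-v$, $q(\omega_{1})=v$. Now let $b$ be the action equal to $0$ at $\omega_{0}$, to $1$ at $\omega_{1}$, to $2$ at $\omega_{2}$, and to $1$ elsewhere; then $(m_{b},M_{b})=(0,2)$, so $U(b)=g(0,2)=2v$ by homogeneity, while $\Exp_{q}\left[b\right]=v$. Since $U(b)=\min_{p\in C}\Exp_{p}\left[b\right]\leqslant\Exp_{q}\left[b\right]$, we get $2v\leqslant v$, so $v=0$ and the preference is $\MIN$. (For $\sizeof{\Omega}=2$ the statement fails, as every $\alpha$-criterion $v\,M_{a}+(1-v)\,m_{a}$ is of the form $\min_{p}\Exp_{p}[a]$ over a permutation-symmetric interval of priors and satisfies all listed axioms---so the bound $\sizeof{\Omega}\geqslant3$ cannot be relaxed.) That $\MIN$ itself satisfies all the listed properties is immediate.

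The step I expect to be the only non-routine one is the extraction of the prior $q$ from the constancy of $U$ along $\{a_{t}\}_{t}$: one has to use compactness of $C$ to know the relevant infimum is attained, and one has to check that the resulting inequalities pin down both coordinates of $q$, not just $q(\omega_{0})$. The rest is bookkeeping with the Gilboa--Schmeidler representation and with Lemma~\ref{lem:MIN and MAX only}.
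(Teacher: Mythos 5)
Your proof is correct, and it shares the paper's overall architecture --- invoke the Gilboa--Schmeidler representation $U(a)=\min_{P\in C}\Exp_{\omega\sim P}\left[u\left(a\left(\omega\right)\right)\right]$, then use Lemma~\ref{lem:MIN and MAX only} to reduce $U$ to a function of the worst and best outcomes --- but the concluding step is genuinely different. The paper works directly with the lower envelope $\epsilon_{\omega}=\min_{P\in C}P\left(\omega\right)$: comparing the action that is low at one state and high elsewhere with the action that is high at one state and low elsewhere (both have the same min and max, hence are indifferent) yields $\sum_{\omega\neq\omega^{\star}}\epsilon_{\omega}\leqslant\epsilon_{\omega^{\dagger}}$ for all pairs, which for $\sizeof{\Omega}\geqslant3$ forces every $\epsilon_{\omega}=0$, and then $\min_{P\in C}\Exp_{P}\left[u\circ a\right]=\min_{\omega}u\left(a\left(\omega\right)\right)$ follows by comparing $a$ to the action achieving its max at a single state. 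You instead use homogeneity and translation invariance of the GS functional to pin the preference down to a Hurwicz criterion $vM_{a}+\left(1-v\right)m_{a}$, and then kill $v$ by extracting an extreme prior $q$ supported on two states and testing it against a three-valued action. Both are valid; the paper's route is slightly more direct, while yours has the merit of exhibiting the intermediate Hurwicz form, of making explicit where $\sizeof{\Omega}\geqslant3$ enters, and of flagging the non-degeneracy issue (the trivial all-indifferent preference satisfies every listed axiom), which the paper's statement indeed glosses over.

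Two small points to tighten. First, you write the representation as $\min_{p\in C}\Exp_{p}\left[a\right]$, i.e., with the vNM utility on outcomes taken to be the identity; this is justified over $\Re^{\Omega}$ because the mixture operation in the stated axioms is literal convex combination of reals, so the induced vNM utility on constant acts is affine and can be normalized away --- but this deserves a sentence, since the paper's appendix version deliberately carries a general increasing $u$ throughout (and its statement is in the Anscombe--Aumann lottery framework $Y^{\Omega}$, with state symmetry and independence of irrelevant information imposed only on the pure-outcome acts). Second, your parenthetical about $\sizeof{\Omega}=2$ depends on reading independence of irrelevant information as vacuous for a fixed two-state space; since that axiom relates preferences across state spaces of different sizes, the claim is defensible but not automatic, and it is in any case not needed for the proof.
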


\section{Bilateral trade}

\label{sec:Bilateral-Trade}

In order to demonstrate this new notion of equilibrium, $\LEXNE$,
we apply it to two economic scenarios that have type ambiguity. For
both of them we show a $\LEXNE$ in pure strategies always exists
and analyze these equilibria. 
The first scenario we analyze is bilateral trade games.  Bilateral
trade is one of the most basic economic models, which  captures many
common scenarios. It describes an interaction between two players,
a \emph{seller} and a \emph{buyer}. The seller has in his possession
a single indivisible item that he values at $v_{s}$ (e.g., the cost
of producing the item), and the buyer values the item at $v_{b}$.
We assume that both values are private information, i.e., each player
knows only his own value, and we would like to study the  cases in
which the item changes hands in return for money, i.e., a transaction
occurs.\footnote{Another branch of the literature on bilateral trade studies the process
of bargaining (getting to a successful transaction). Since we would
like to study the impact of ambiguity, we restrict our attention to
the outcome.}  Chatterjee and Samuelson~\cite{Chatterjee1983} presented bilateral
trade as a model for negotiations between two strategic agents, such
as settlement of a claim out of court, union-management negotiations,
and of course a model for negotiation on transaction between two individuals
and a model for trade in financial products. The important feature
the authors note is that an agent, while certain of the potential
value he places on a transaction, has only partial information concerning
its value for the other player.\footnote{For instance, in haggling over the price of a used car, neither buyer
nor seller knows the other's walk-away price.} Bilateral trade is also of a theoretical importance, and moreover
a multi-player generalization of it, \emph{double auction}.,\footnote{In a double auction~\cite{Friedman1993}, there are several sellers
and buyers, and we study mechanisms and interactions matching them
to trading pairs.} These models have been used as a tool to get insights into how to
organize trade between buyers and sellers, as well as to study how
prices in markets are determined. 

In this section we assume that there is ambiguity about the players'
values (their types), and we study trading mechanisms, i.e., procedures
for deciding whether the item  changes hands, and how much the buyer
pays for it. We assume that the players are strategic, and hence a
mechanism should be analyzed according to its expected outcomes in
equilibrium. 

 We concentrate on a family of simple  mechanisms (a generalization
of the bargaining rules of Chatterjee and Samuelson~\cite{Chatterjee1983}):
 the seller and the buyer post simultaneously their respective bids,
$a_{s}$ and $a_{b}$, and if $a_{s}\leqslant a_{b}$ the item is
sold for $x\left(a_{s},a_{b}\right)$, for $x$ being a known  monotone
function satisfying $x\left(a_{s},a_{b}\right)\in\left[a_{s},a_{b}\right]$.
For ease of presentation, we add to the action sets of both players
a \enquote{no participation} action $\bot$, which models the option
of a player not to participate in the mechanism; i.e., there is no
transaction whenever one of the players plays $\bot$. This simplifies
the presentation by grouping together profiles in which a player chooses
extreme bids that would not be matched by the other player.   Hence,
the utilities of a seller of type $v_{s}$ and a buyer of type $v_{b}$
from an action profile $\left(a_{s},a_{b}\right)$ are (w.l.o.g.,
we normalize the utilities of both players to zero in the case where
there is no transaction):

\[
\begin{array}{l}
u_{s}\left(v_{s};a_{s},a_{b}\right)=\begin{cases}
a_{s}\leqslant a_{b} & x\left(a_{s},a_{b}\right)-v_{s}\\
a_{s}>a_{b} & 0\\
a_{s}=\bot\vee a_{t}=\bot & 0
\end{cases}\\
u_{b}\left(v_{b};a_{s},a_{b}\right)=\begin{cases}
a_{s}\leqslant a_{b} & v_{b}-x\left(a_{s},a_{b}\right)\\
a_{s}>a_{b} & 0\\
a_{s}=\bot\vee a_{t}=\bot & 0.
\end{cases}
\end{array}
\]
Under full information (i.e., the values $v_{s}$ and $v_{b}$ are
commonly known), there is essentially only one kind of equilibrium:
the \emph{one-price equilibrium}. If $v_{s}\leqslant v_{b}$, the
equilibria in which there is a transaction are all the profiles $\left(a_{s},a_{b}\right)$
s.t. $a_{s}=a_{b}\in\left[v_{s},v_{b}\right]$ (i.e., the players
agree on a price), and the equilibria in which there is no transaction
are all profiles in which both players choose not to participate,
regardless of their type. 

Introducing type ambiguity, we define the seller type set $V_{s}$
and the buyer type set $V_{b}$, where each set holds the possible
valuations of the player for the item. We show that under type ambiguity,
there are at most three kinds of equilibria, and we fully characterize
the equilibria set. We show that in addition to the above no-transaction
equilibria and one-price equilibria, we get a new kind of equilibrium:
the \emph{two-price equilibri}um. In such an equilibrium, both the
seller and the buyer participate regardless of their valuations, and
bid one of two possible prices: $p_{L}$ and $p_{H}$. For some type
sets, namely $V_{s}$ and $V_{b}$, these two prices are the only
full-participation equilibria, i.e., equilibria in which both players
choose to announce a price and participate, regardless of their value. 
\begin{lem}
\MyLyxThmFN{This result is also valid, and even more natural, for infinite type
sets.}\label{lem:Bilateral Trade - Main Result}\MyLyxThmNewline{}
\global\long\def\ls{\min V_{s}}
\global\long\def\hs{\max V_{s}}
\global\long\def\lb{\min V_{b}}
\global\long\def\hb{\max V_{b}}

Let $G$ be a bilateral trade game defined by a price function $x\left(a_{s},a_{b}\right)$
and two type sets $V_{s}$ and $V_{t}$, both having a minimum and
maximum.\footnote{We state the result here for the case where both sets have a minimal
valuation and have a maximal valuation. Dropping this assumption does
not change the result in any essential way: some of the inequalities
are changed to strict inequalities.

} Then all the $\LEXNE$ of $G$ are of one of the following classes:
\begin{enumerate}
\item \textbf{No-transaction equilibria} (These equilibria exist for any
two sets $V_{s}$ and $V_{b}$) 

In these equilibria, neither the buyer nor the seller participates
(i.e., they play $\bot$, or bid a too extreme bid for all types of
the other player), regardless of their valuations.
\item \textbf{One-price equilibria} (These equilibria are defined only when
$\ls\leqslant\hb$, i.e., when an ex-post transaction is possible.) 

In a one-price equilibrium, both the seller and the buyer choose to
participate for some of their types. It is defined by a price $p\in\left[\ls,\hb\right]$
s.t. the equilibrium strategies are:
\begin{enumerate}
\item[] The seller bids $p$ for valuations $v_{s}\leqslant p$, and $\bot$
otherwise (the second clause might be vacuously true).
\item[] The buyer bids $p$ for valuations $v_{b}\geqslant p$, and $\bot$
otherwise (the second clause might be vacuously true).
\end{enumerate}
Hence, the outcome is\qquad{}%
\begin{tabular}{@{}r|c|c|}
\backslashbox{Buyer}{Seller}  &
Low:\quad{}$v_{s}\leqslant p$  &
High:\quad{}$v_{s}>p$\tabularnewline
\hline 
Low:\quad{}$v_{b}<p$  &
no transaction  &
no transaction\tabularnewline
\hline 
High:\quad{}$v_{b}\geqslant p$  &
$p$  &
no transaction\tabularnewline
\hline 
\end{tabular}.
\item \textbf{Two-price equilibria} (These equilibria are defined only when
 $\ls\leqslant\lb$ and $\hs\leqslant\hb$, i.e., when there is a
value for the seller s.t. an ex-post transaction is possible for any
value of the buyer, and vice versa.) 

In a two-price equilibrium, all types of both the seller and the buyer
choose to participate,  and their bids depend on their valuations.
It is defined by two prices $p_{L}<p_{H}$ s.t. $\left\{ \begin{array}{l}
\ls\leqslant p_{L}<\hs\leqslant p_{H}\\
p_{L}\leqslant\lb<p_{H}\leqslant\hb
\end{array}\right.$ and the equilibrium strategies are:
\begin{enumerate}
\item[] The seller bids $p_{L}$ for valuations $v_{s}\leqslant p_{L}$,
and $p_{H}$ otherwise.
\item[] The buyer bids $p_{H}$ for valuations $v_{b}\geqslant p_{H}$,
and $p_{L}$ otherwise.
\end{enumerate}
Hence, the outcome is\qquad{}%
\begin{tabular}{@{}r|c|c|}
\backslashbox{Buyer}{Seller}  &
Low:\quad{}$v_{s}\leqslant p_{L}$  &
High:\quad{}$v_{s}>p_{L}$\tabularnewline
\hline 
Low:\quad{}$v_{b}<p_{H}$  &
$p_{L}$  &
no transaction\tabularnewline
\hline 
High:\quad{}$v_{b}\geqslant p_{H}$  &
$x\left(p_{L},p_{H}\right)\in\left(p_{L},p_{H}\right)$  &
$p_{H}$\tabularnewline
\hline 
\end{tabular}. 
\end{enumerate}

\end{lem}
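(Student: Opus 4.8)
The plan is to start from an arbitrary pure-strategy $\LEXNE$ $(\sigma_s,\sigma_b)$ and read off its shape. Write $A'=\sigma_s(V_s)\setminus\{\bot\}$ and $B'=\sigma_b(V_b)\setminus\{\bot\}$ for the sets of prices actually posted by the seller and the buyer, and say a side \emph{abstains} if some type on that side plays $\bot$. The first step is to compute, for each type, the best worst-case payoff it can secure, since under $\LEX$ this is the dominant criterion; and because $\LEX$ sees only the minimum and maximum of an action's outcomes over the state space (here, the opponent's type set), only $A'$, $B'$, and the two abstention bits matter. The key computations are: playing $\bot$, or any price above $\max B'$, gives a seller $0$ in every state; a price $a_s\le\min B'$ is matched by every posted buyer bid, so (if the buyer never abstains) yields worst outcome $x(a_s,\min B')-v_s$, maximized over such $a_s$ at $a_s=\min B'$ by monotonicity of $x$; and any price $a_s>\min B'$ leaves some posted buyer bid unmatched, hence worst outcome $\le 0$. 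So a seller of type $v_s$ can guarantee exactly $\max\{0,\min B'-v_s\}$ if the buyer never abstains, and exactly $0$ if the buyer abstains; symmetrically a buyer of type $v_b$ can guarantee $\max\{0,v_b-\max A'\}$ if the seller never abstains and $0$ otherwise.

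The second step turns these guarantees into a threshold description of each type's $\LEXBR$. A type that can guarantee a strictly positive amount has an essentially unique worst-case-optimal bid — the most conservative safe bid $\min B'$ for a low-value seller, $\max A'$ for a high-value buyer — and must play it. A type that can guarantee only $0$ is worst-case-indifferent between abstaining, bidding outside the opponents' range, and certain safe matched bids; the $\LEX$ tie-break then makes it choose the one with the best best-case outcome, which one checks is the most \emph{aggressive} still-safe posted price — i.e.\ the lowest opponent price for a seller, the highest for a buyer. Feeding this back into itself (a small consistency/fixed-point argument) forces the posted sets to be very restricted: in an equilibrium with at least one transaction the seller posts only prices in $\{p_L,p_H\}$, where $p_L=\min(A'\cup B')$ and $p_H=\max(A'\cup B')$, with types having $v_s\le p_L$ posting $p_L$ and all other types posting $p_H$ or $\bot$, and dually for the buyer. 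In particular at most two distinct prices occur, and the seller's and buyer's price sets coincide.

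The third step is a case analysis on $|A'\cup B'|\in\{0,1,2\}$. If it is $0$ no transaction ever occurs and we are in class~(1). If it is $1$ with price $p$, the threshold description from step two is exactly the one-price profile of class~(2); the feasibility condition $p\in[\ls,\hb]$ is forced because $p$ must actually be posted by some seller type (so $\ls\le p$) and some buyer type (so $p\le\hb$). If it is $2$ with $p_L<p_H$, step two gives exactly the two-price profile of class~(3); the inequalities $\ls\le p_L<\hs\le p_H$ and $p_L\le\lb<p_H\le\hb$ are forced because (a) each of $p_L,p_H$ must be posted by some type on each side, which excludes $p_L<\ls$, $p_H>\hb$, $\hs\le p_L$, and $\lb\ge p_H$, and (b) every participating type must have nonnegative worst case, which is exactly what rules out a seller type with $v_s>p_H$ posting $p_H$ and hence yields $\hs\le p_H$, and symmetrically on the buyer side. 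Finally I would verify the converse — that each profile in (1)--(3), under its stated inequalities, really is a $\LEXNE$ — by the direct best-response check illustrated in the introduction's $[20,40]$/$[10,30]$ example, reusing the worst-case formulas of step one and the $\LEX$ tie-break.

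The main obstacle is step two, and within it the "at most two prices" conclusion: one must juggle worst-case optimality and the best-case tie-break simultaneously, and dispose of several boundary phenomena — a type whose value equals a posted price, a posted bid that happens never to be matched (so a player nominally participates but is effectively abstaining), the gap between weak and strict monotonicity of $x$ (which can create harmless ties among $\LEX$-optimal bids), and the asymmetry between the "$\bot$ is posted" and "$\bot$ is not posted" subcases on each side. Getting the two-price bound out of the best-response analysis cleanly, rather than by brute enumeration, is where the real work lies; once it is in hand, step three is just bookkeeping with the inequalities.
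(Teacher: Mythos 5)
Your proposal follows essentially the same route as the paper's proof: exploit that $\LEX$ sees only the worst and best outcomes, so a type's behaviour is governed by the extreme posted prices of the other side and by whether that side ever abstains; compute each type's worst-case guarantee; show positive-guarantee types must post the conservative extreme and zero-guarantee types tie-break to the aggressive extreme; and then case-split on the number of posted prices to recover the three classes and their feasibility inequalities. One local correction: your clause ``the lowest opponent price for a seller, the highest for a buyer'' for the zero-guarantee tie-break is backwards --- a zero-guarantee seller's best best-case is obtained by matching the \emph{highest} posted buyer bid ($\max B'$) and a zero-guarantee buyer's by matching the \emph{lowest} posted seller bid ($\min A'$), which is what the paper uses and what your own subsequent summary ($v_s\le p_L$ posts $p_L$, the rest post $p_H$) correctly reflects; with that sentence fixed the argument goes through as in the paper.
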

\begin{proof-sketch}
(The full details of the proof can be found in Appendix~\ref{Appendix:Proof of lem:Bilateral Trade - Main Result})\MyLyxThmNewline{} 

It is easy to verify that these profiles are indeed equilibria. We
will prove that they are the only equilibria. 

Assume for contradiction there is another $\LEXNE$ profile of actions
(i.e., bids or $\bot$), namely, $\left(a_{s},a_{b}\right)$. We define
$P_{s}$ to be the set of bids that are bid by the seller, i.e., $P_{s}=\left\{ a_{s}\left(v_{s}\right)\SetSt a_{s}\left(v_{s}\right)\neq\bot\right\} $,
and similarly we define $P_{b}=\left\{ a_{b}\left(v_{b}\right)\SetSt a_{s}\left(v_{b}\right)\neq\bot\right\} $.
Both these sets are not empty since $\left(a_{s},a_{b}\right)$ is
not a no-transaction equilibrium. First, we notice that if $P_{s}$
is of size one, i.e., whenever the seller participates he announces
$p$; then, if the buyer chooses to participate (based on his valuation),
he chooses to match $p$ in order to minimize the price (and vice
versa). Analyzing the valuations for which they choose to participate
proves that this profile is a one-price equilibrium. Now, assume that
both these sets are of size at least two. If both players participate
regardless of their value (never choose $\bot$), then the worst and
best cases for a player are those of facing the highest and lowest
bids of the other player. Hence his best response will be to match
one of the two, and we get a two-price equilibrium. If the seller
chooses whether to participate based on his value, i.e., there is
a value $v_{s}$ for which he chooses $\bot$, then the buyer cannot
guarantee himself more than zero (for instance, if he meets $v_{s}$).
Hence, he will choose one of the actions that guarantee him zero in
the worst case (e.g., $\bot$), and choose among these actions according
to their best case (meeting the lowest-bidding type of the seller).
Hence, given his value, the buyer either chooses $\bot$, or matches
the lowest bidding type of the seller. This proves this equilibrium
is either a no-transaction equilibrium or a one-price equilibrium.
The case in which the buyer chooses whether to participate based on
his value is symmetrical.
\end{proof-sketch}
We find it interesting that the set of equilibria depends on the possible
types of the players, and not on the price mechanism $x\left(a_{s},a_{b}\right)$.
In addition to the two classic equilibrium kinds, no-transaction equilibrium
and one-price equilibrium, we get a new kind of equilibrium. We see
that in this equilibrium each of the players announces one of two
bids, which is tantamount to  announcing whether his value is above
some threshold or not. This decision captures the (non-probabilistic)
trade-off a player is facing: whether to trade for sure, i.e., with
all types of the other player, or to get a better price. For example.,
the buyer decides whether to bid the high price and buy the item for
sure, taking the risk of paying more than his value; or whether to
bid the low price and buy at a lower price, taking the risk of not
buying at all. Since $\LEX$ is a function of the worst-case and best-case
outcomes only, it does not seem surprising that we get this dichotomous
trade-off and at most two bids (messages) for each player in equilibrium. 

This result might explain the emergence of market scenarios in which
a participant needs to choose which one of two markets to attend,
e.g., florists who choose whether to sell in a highly competitive
auction or in an outside market, and he needs to choose between the
two while not knowing the demand for that day. In a continuation work,
we follow this story, and analyze double auctions with several buyers
and sellers.

\section{Coordination games}

\label{sec:Coordination-Games}
In this section we study a second application of $\LEXNE$ to an economic
scenario: analyzing coordination games with type ambiguity. These
games model scenarios in which the participants prefer to coordinate
their actions with others, e.g., due to positive externalities.Some
examples are choosing a meeting place (a generalization the Battle
of the Sexes game~\cite[Ch.~5, Sec.~3]{Luce1957}), choosing a cellular
company, and placing a public good or bad when the cost is shared.
 We analyze coordination games in which all players prefer to maximize
the number of other players they coordinate with (while being indifferent
about their identity), but they might differ in their tie-breaking
rule between two maximizing actions. 
\begin{defn}[Coordination games with type ambiguity]
\MyLyxThmFN{Most of the current literature (e.g., \cite[pp.~54--74]{Schelling1980},
\cite[pp.~90--91]{Luce1957}, \cite[pp.~15--16]{Osborne1994}) deal
with two-player coordination games (games in which the best response
of a player is to copy the other player's action). We use the same
name here for multi-player (generalized) coordination games, which
capture the same kind of scenarios.}\MyLyxThmNewline{} A (finite) coordination game of $n$ players over
$m$ \emph{locations} is a game in which all players have the same
set of actions of size $m$ (and we refer to the actions as \emph{locations}),
and the preference of each player (his type) over the action profiles
is defined by a strict (ordinal) preference $\alpha$ over the locations
in the following way: Player~$i$, holding a preference $\alpha$
over the locations, strongly prefers an action profile $a=\left(a^{1},\ldots,a^{n}\right)\in\ActionSet$
to an action profile $b=\left(b^{1},\ldots,b^{n}\right)\in\ActionSet$
if either he meets more players under $a$ than under $b$ ($\sizeof{\left\{ j\neq i\SetSt a^{j}=a^{i}\right\} }>\sizeof{\left\{ j\neq i\SetSt b^{j}=b^{i}\right\} }$)
or if he meets the same (non-zero) number of players under both profiles
and he prefers the meeting location in $a$ to the one in $b$ ($a^{i}$
is preferred to $b^{i}$ according to $\alpha$). That is, the set
of types of player~$i$, $\TypeSet^{i}$, is a set of strict preferences
over the $m$ locations. In particular, a player is indifferent between
the outcomes in which he does not meet any of the other players. 
\end{defn}
First, we note that the concept $\MINNE$ is a too coarse for analyzing
coordination games; almost all pure action profiles of a (large enough)
coordination game are $\MINNE$.
\begin{lem}
\MyLyxThmNewline{}\label{lem:Almost all profiles are MINNE}
Let $G$ be a coordination game over $m$ locations with $n$ players,
each of them having at least $t$ types. Then more than $\left(1-\sfrac{1}{m^{t-1}}\right)^{n}$-fraction
of the pure action profiles are $\MINNE$ of the game.

Specifically, a profile is a $\MINNE$ of $G$ iff either there exists
a location $l$ s.t. all types of every player choose $l$ in the
profile, or there does not exist a location $l$ and a player $i$
s.t. all the types of player $i$ choose $l$ in the profile.

\end{lem}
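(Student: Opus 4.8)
The plan is to first determine exactly what a $\MIN$ best response looks like in a coordination game, then read the characterization off of it, and finally do an elementary count.

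\emph{Best-response structure.} Fix a pure strategy profile $\sigma$, a player $i$, and a type $\alpha$ of $i$, and for a location $l$ let $n^i_l$ be the number of opponents $j\neq i$ who play $l$ under \emph{all} of their types. I would first argue that if $n^i_l=0$ then the worst outcome of playing $l$ is ``meet nobody'': since the types of distinct players vary independently, every opponent that is not constant at $l$ has a type avoiding $l$ (it is either constant elsewhere or takes at least two values), so there is an opponent type vector under which nobody plays $l$; and by definition of a coordination game this ``meet nobody'' outcome is the globally worst one and is the same for every such $l$. If $n^i_l\geq 1$, then the worst outcome of $l$ is ``meet exactly $n^i_l$ players, at $l$'' (again steer all other opponents off $l$), which strictly beats ``meet nobody'' and improves with $n^i_l$. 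Hence, writing $L_i=\{l:n^i_l\geq 1\}$: if $L_i=\emptyset$ every location is a $\MIN$ best response of every type of $i$; and if $L_i\neq\emptyset$ the pure $\MIN$ best responses of type $\alpha$ are exactly the $\alpha$-maximal location(s) of $\argmax_l n^i_l$ --- in particular, when $\argmax_l n^i_l$ is a singleton, that location is the unique best response of \emph{every} type of $i$, regardless of preferences.

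\emph{The two families are equilibria.} If $\sigma$ has no player with a constant strategy, then $n^i_l=0$ for all $i,l$, so every type is indifferent among all actions and $\sigma$ is trivially a $\MINNE$. If all players play a common location $\ell^\ast$ under all their types, then $n^i_{\ell^\ast}=n-1$ and $n^i_l=0$ for $l\neq\ell^\ast$, so (for $n\geq 2$) $\ell^\ast$ is the unique best response of every type of every player and $\sigma$ is a $\MINNE$. For $n=1$ everything is trivially a $\MINNE$ and the two families already exhaust all profiles, so I assume $n\geq 2$ from now on.

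\emph{These are the only equilibria.} Suppose $\sigma$ is a $\MINNE$ in which some player is constant; I must deduce that all players play one common location. Let $d(l)$ be the number of players constant at $l$, put $D=\max_l d(l)\geq 1$, and let $C_{\max}=\{l:d(l)=D\}$. First I would show $\sizeof{C_{\max}}=1$: if $\ell_1\neq\ell_2$ both lie in $C_{\max}$, pick a player $q$ constant at $\ell_1$; then $n^q_{\ell_1}=D-1<D=n^q_{\ell_2}$, so $\ell_1\notin\argmax_l n^q_l$, while $L_q\neq\emptyset$ forces every type of $q$ to best-respond inside $\argmax_l n^q_l$ --- contradicting that $\sigma^q$ is constant at $\ell_1$. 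So $C_{\max}=\{\ell^\ast\}$. Now take any player $i$ not constant at $\ell^\ast$: then $n^i_{\ell^\ast}=D$ while $n^i_l\leq d(l)\leq D-1$ for every $l\neq\ell^\ast$, so $\argmax_l n^i_l=\{\ell^\ast\}$ and $\ell^\ast$ is the unique best response of every type of $i$, forcing $\sigma^i$ to be constant at $\ell^\ast$, a contradiction. Hence every player is constant at $\ell^\ast$, which is the first family; this establishes the characterization.

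\emph{Counting and the main obstacle.} A pure action profile is a tuple of maps $\TypeSet^i\to\ActionSet^i$, so there are $\prod_i m^{\sizeof{\TypeSet^i}}$ of them; the two families are disjoint, the first has $m$ members, and the second has $\prod_i\bigl(m^{\sizeof{\TypeSet^i}}-m\bigr)$ members, so the fraction of $\MINNE$ is
\[
\frac{m+\prod_i\bigl(m^{\sizeof{\TypeSet^i}}-m\bigr)}{\prod_i m^{\sizeof{\TypeSet^i}}}>\prod_i\Bigl(1-\tfrac{1}{m^{\sizeof{\TypeSet^i}-1}}\Bigr)\geq\Bigl(1-\tfrac{1}{m^{t-1}}\Bigr)^{n},
\]
using $\sizeof{\TypeSet^i}\geq t$. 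The main obstacle is the best-response structure, and within it the ``meet nobody'' claim when $n^i_l=0$: it rests on steering \emph{all} the non-covering opponents off $l$ simultaneously (legitimate because types are independent across players and each such opponent has a type avoiding $l$) together with the coordination-game feature that all ``meet nobody'' outcomes are indifferent and are dominated by meeting anyone. Once that is in place, the equilibrium conditions reduce to elementary inequalities among the integers $d(l)$ and the rest is bookkeeping.
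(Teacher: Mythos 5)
Your proof is correct. The paper states this lemma without supplying a proof (it is the one result in the coordination-games section with no appendix reference), so there is nothing to compare against; your route --- computing the $\MIN$ worst case of each location through the count $n^i_l$ of opponents who are constant at $l$, reading off the best-response structure (indifference among everything when no location is covered, otherwise the type-best element of $\argmax_l n^i_l$), deducing the two-family characterization via the quantities $d(l)$, and then counting --- is the natural argument, and every step checks out, including the strict inequality coming from the $m$ all-constant profiles. The one interpretive point worth making explicit is that you allow only \emph{pure} deviations, which is the reading the paper's footnote (``equilibria in pure actions'') forces: under $\MIN$ a mixed action can have a strictly better worst case than every pure action (mixing over the locations played by a non-constant opponent hedges across his types), so the family of profiles with no constant player would generally fail to be equilibria if mixed deviations were admitted, and the lemma is true only under the pure-deviation convention you adopted.
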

We analyze the refinement of $\MINNE$, pure $\LEXNE$,\footnote{Since any $\LEXNE$ is also a $\MINNE$ this can be interpreted as
an equilibrium selection process.} which we show always exists. We show that every $\LEXNE$ profile
$a$ is uniquely defined by the set of locations chosen in $a$,\VspaceBeforeMath{-.7em}{}
\[
L\left(a\right)=\left\{ l\SetSt\exists i\in\Players,\,t^{i}\in\TypeSet^{i}\st\text{player\,}i\text{ of type }t^{i}\text{ plays }l\text{ in the profile }a\right\} .
\]

\begin{lem}
\MyLyxThmFN{This lemma is proved in Appendix~\ref{Appendix:Proof of lem:Coord General LEX characterization}}\label{lem:Coord General LEX characterization}\MyLyxThmNewline{}
Let $G$ be a coordination game and let $L$ be a non-empty set of
locations. There exists a $\LEXNE$ profile $a$ s.t. $L\left(a\right)=L$
if and only if for all $i$ the mapping $f^{i}:\TypeSet^{i}\rightarrow L$
that maps a type to his best location in $L$ is onto.

Moreover, the action profile $a$ in which every type of every player
chooses his best location in $L$ is the unique pure $\LEXNE$ that
satisfies $L\left(a\right)=L$.

\end{lem}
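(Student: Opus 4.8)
The plan is to prove the two directions of the "if and only if" and then the uniqueness/moreover clause, working throughout with the structure of $\LEX$ best responses, i.e., that a type first maximizes its worst-case outcome and then, among ties, its best-case outcome. The key reduction, which I would establish first, is the following dichotomy for a single type $t^i$ facing a profile $a^{-i}$ of the other players: either some location $l$ is played by \emph{every} type of \emph{every} other player that $t^i$ might meet — in which case $t^i$ can guarantee meeting $n-1$ players by going to $l$, this is the strictly best possible worst-case, and $t^i$'s $\LEXBR$ is the unique such $l$; or no single location is universally played by all of some player's types, in which case for every action of $t^i$ there is a type profile of the others under which $t^i$ meets nobody, so all of $t^i$'s actions are worst-case equivalent (worst case $=$ "meet nobody", about which the type is indifferent), and hence $t^i$ breaks the tie by the best case, namely by choosing the location in $L(a)$ that it most prefers — provided its most preferred location overall, if not in $L(a)$, cannot be reached with anyone; more carefully, the best case for action $l$ is "meet everyone who plays $l$", so $t^i$ strictly prefers any $l \in L(a)$ to any $l \notin L(a)$, and among $L(a)$ picks its $\alpha$-best. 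I would isolate this as the engine of the whole proof.

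For the "only if" direction, suppose $a$ is a $\LEXNE$ with $L(a)=L$. If $|L|=1$, say $L=\{l\}$, then by the dichotomy every type of every player best-responds by going to $l$, so $f^i$ is the constant map to $l$, which is trivially onto $L$. If $|L|\ge 2$, then no single location is played by all types of any player (otherwise, by the first case of the dichotomy, the other players' best responses would collapse onto that location, contradicting $|L|\ge 2$ unless that forced collapse is consistent — I would need to argue this carefully, using that $|L| \ge 2$ already rules out the "universal location" case for at least the relevant players, and then a short propagation argument shows it fails for all players). Hence the second case of the dichotomy applies to every type, so each type $t^i$'s $\LEXBR$ is exactly $f^i(t^i)$, its best location in $L$. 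Since $a$ is an equilibrium, $a^i(t^i)=f^i(t^i)$ for all $i,t^i$, and therefore $L = L(a) = \bigcup_{t^i} \{f^i(t^i)\} = \operatorname{image}(f^i)$ for each $i$; that is, every $f^i$ is onto $L$.

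For the "if" direction, assume every $f^i:\TypeSet^i\to L$ is onto. Define $a$ by $a^i(t^i)=f^i(t^i)$. Then $L(a)=\bigcup_i \operatorname{image}(f^i)=L$. I must check $a$ is a $\LEXNE$: fix a type $t^i$ and consider its best response to $a^{-i}$. Since every $f^j$ ($j\ne i$) is onto $L$ and $|L|\ge 1$: if $|L|\ge 2$, no location is played by all types of any $j$, so the dichotomy's second case gives $\LEXBR(t^i)=f^i(t^i)=a^i(t^i)$; if $|L|=1$, the first case gives the same conclusion. Either way $a$ is an equilibrium. The uniqueness ("moreover") claim is immediate from the same analysis: in \emph{any} $\LEXNE$ $a$ with $L(a)=L$, the dichotomy forces $a^i(t^i)$ to equal $t^i$'s best location in $L$, which is exactly the profile described.

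The main obstacle I anticipate is the careful handling of the "universal location" case when $|L|\ge 2$: one must rule out the possibility that some player $i$ has all its types playing a common location $l_0$ while the overall set of played locations still has size $\ge 2$. The resolution is that if player $i$'s types all play $l_0$, then by the first case of the dichotomy every type of every other player strictly prefers $l_0$ (it guarantees meeting $i$, and possibly more) to anything not guaranteeing a meeting — so in equilibrium all other players also collapse to $l_0$, forcing $L(a)=\{l_0\}$ and contradicting $|L|\ge 2$. I would present this as a short standalone observation before splitting into cases, so that in the $|L|\ge 2$ regime the clean "second case" dichotomy is available uniformly. A minor secondary point to get right is the precise meaning of "best case" for a type that is indifferent among no-meeting outcomes: the best case of action $l$ is meeting all other players who play $l$ under some type profile, and one must confirm that for $l\in L$ at least one other player has a type playing $l$ (true since each $f^j$ is onto $L$), so that the best case strictly dominates "meet nobody" and the $\alpha$-ranking among $L$ genuinely determines the choice.
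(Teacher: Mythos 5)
Your overall architecture is the same as the paper's (trivial $\sizeof L=1$ case; rule out any player having all types at one location when $\sizeof L\geqslant2$; then argue every type's worst case is ``meet nobody'' and the $\LEX$ tie-break forces each type to a best location in $L$), but there is a genuine gap in the only-if direction, at the step where you claim that once all actions are worst-case equivalent the type simply ``picks its $\alpha$-best among $L(a)$.'' The $\LEX$ tie-break is by the best \emph{outcome}, and in a coordination game outcomes are ranked first by the \emph{number} of players met and only then by the location. The best case of action $l$ is meeting $m(l)=\#\left\{ j\neq i\SetSt l\in L^{j}\left(a\right)\right\}$ players, and nothing you have established guarantees that $m$ is constant on $L(a)$. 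If, say, $l\in L(a)$ is played by types of two other players and $l'\in L(a)$ by types of only one, every type strictly prefers $l$ to $l'$ regardless of $\alpha$, so the best response is \emph{not} the $\alpha$-best location in $L(a)$. The same omission undermines your concluding chain $L=\bigcup_{t^{i}}\left\{ f^{i}\left(t^{i}\right)\right\} =\operatorname{image}\left(f^{i}\right)$ ``for each $i$'': the union of the images over all players equalling $L$ does not make each individual $f^{i}$ onto. The paper closes exactly this hole with a short exchange argument proving $L^{i}\left(a\right)=L\left(a\right)$ for every player $i$: if $l\in L^{p}(a)$ and $l'\in L^{p'}(a)\setminus L^{p}(a)$, then counting who plays $l$ versus $l'$ from $p$'s viewpoint and then from $p'$'s viewpoint shows $p'$ strictly prefers $l$ to $l'$ in the best case, contradicting $l'\in L^{p'}(a)$. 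Once $L^{i}(a)=L(a)$ holds for all $i$, every location in $L$ is played by some type of every other player, $m$ is constant on $L$, and your tie-break and onto-ness conclusions go through.

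A secondary, more minor imprecision: in your ``propagation'' argument you infer that if some player universally plays $l_{0}$ then all others collapse to $l_{0}$; this fails verbatim if two players universally play \emph{different} locations (others then collapse to whichever guaranteed location maximizes the guaranteed count, not necessarily $l_{0}$). The paper splits this into the cases $\sizeof S=1$ and $\sizeof S>1$ for $S=\left\{ i\SetSt\sizeof{L^{i}\left(a\right)}=1\right\}$; your version is easily patched along those lines. Your if-direction and uniqueness arguments are fine, since there onto-ness of every $f^{j}$ is assumed and hence $m$ is automatically constant on $L$.
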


Abusing notation, we say that a location set $L$ is a $\LEXNE$ (and
shortly $L\in\LEXNE\left(G\right)$) if there exists a $\LEXNE$ profile
$a$ s.t. $L\left(a\right)=L$. Two immediate corollary from the lemma
are that there is always a pure $\LEX$ equilibrium and that any location
set $L\in\LEXNE$ satisfies $\sizeof L\leqslant\min_{i}\sizeof{\TypeSet^{i}}$,
and, in particular, if there is no ambiguity about the type of at
least one player ($\exists i\st\sizeof{\TypeSet^{i}}=1$), then the
only $\LEXNE$ are those in which all types of all players choose
the same location. Notice that for any vector of type sets $\left\{ \TypeSet^{i}\right\} $
an action profile in which all types of all players choose the same
location, which does not depend of the players' preferences, is always
a $\LEXNE$. In these profiles, any deviation of a (type of a) player
results in the deviator not meeting any of the other players  and
hence it is not beneficial for the deviator. 

From now on we assume that $\sizeof{\TypeSet^{i}}>1$ for all $i$
(there is a real ambiguity about the preference of each of the players),
and study the equilibria (location sets) according to $\LEX$. We
analyze the non-trivial equilibria that emerge due to type ambiguity.
We call a profile $p$ a \emph{non-trivial profile }if $\sizeof{L\left(p\right)}>1$,\footnote{I.e., there exist at least two different locations, each of which
is chosen by some type of some player.} and call a location set a \emph{non-trivial location set} if it includes
at least two locations.

From the characterization above, we prove several easy properties
of $\LEXNE$.

\begin{cor}
\MyLyxThmNewline{}Let $G$ be a coordination game over $m$ locations
and $n$ players with type sets $\TypeSet^{i}$.
\begin{itemize}
\item Let $a$ be an equilibrium profile; then the set\VspaceBeforeMath{-.7em}{}
\[
\left\{ l\SetSt\exists t^{i}\in\TypeSet^{i}\st\text{player }i\text{ of type }t^{i}\text{ plays }l\text{ in the profile }a\right\} 
\]
 is independent of $i$ (i.e., in equilibrium all players choose the
same set of actions).
\item (\emph{Increasing ambiguity}) Let $G'$ be a coordination game over
$m$ locations and $n$ players with type sets $\widehat{\TypeSet^{i}}$
satisfying $\TypeSet^{i}\subseteq\widehat{\TypeSet^{i}}$ for all
$i$. Then, $\LEXNE\left(G\right)\subseteq\LEXNE\left(G'\right)$.\footnote{In particular, the equilibria of the case with no ambiguity are also
equilibria of any coordination game.}
\item (\emph{$\LEXNE$ is downward closed}) If $L\in\LEXNE\left(G\right)$,
then any $L'\subseteq L$ is also an equilibrium of $G$. 
\item (\emph{Irrelevant information})
\begin{itemize}
\item Let $L$ be an equilibrium of $G$ and let $\widehat{\TypeSet^{i}}$
be the result of changing the preferences of the types of player~$i$,
while keeping the preferences over the locations in $L$. Then $L$
is also an equilibrium of the coordination game over $m$ locations
and $n$ players with type sets $\left\langle \TypeSet^{-i},\,\widehat{\TypeSet^{i}}\right\rangle $.\footnote{That is, the same type sets for all players except player~$i$, and
the perturbed type set for player~$i$.}
\item Let $L$ be an equilibrium of $G$ and let $G'$ be an extension of
$G$ to $m+1$ locations s.t. for any player and any of his types
the preference over the first $m$ locations is the same as in $G$.
Then $L$ is also an equilibrium of $G'$.
\end{itemize}
\end{itemize}
\end{cor}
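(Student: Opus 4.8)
The plan is to derive every item directly from the characterization in Lemma~\ref{lem:Coord General LEX characterization}, after restating that lemma in a convenient combinatorial form. First I would record the reformulation: for a non-empty location set $L$, let $f^i_L\colon\TypeSet^i\to L$ denote the map sending a type (a strict preference $\alpha$ over the locations) to its $\alpha$-best element of $L$. Lemma~\ref{lem:Coord General LEX characterization} then says that $L\in\LEXNE\left(G\right)$ if and only if $f^i_L$ is onto for every player $i$; equivalently, for every player $i$ and every $l\in L$ there is a type of player~$i$ that ranks $l$ first among the locations in $L$. The single elementary observation I would use throughout is that $f^i_L\left(\alpha\right)$ depends only on the restriction of $\alpha$ to $L$, and that if $l\in L'\subseteq L$ then $f^i_L\left(\alpha\right)=l$ forces $f^i_{L'}\left(\alpha\right)=l$, since a best element of a larger set that happens to lie in a subset is still a best element of that subset.

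With this in hand I would dispatch the first three bullets. For the first bullet, given an equilibrium profile $a$, the ``moreover'' clause of Lemma~\ref{lem:Coord General LEX characterization} says that $a$ is exactly the profile in which every type of every player plays its best location in $L:=L\left(a\right)$; hence the set of locations used by the types of player~$i$ is the image of $f^i_L$, which equals $L$ by surjectivity, and in particular is independent of~$i$. For increasing ambiguity, if $L\in\LEXNE\left(G\right)$ then already the image of $f^i_L$ over the smaller set $\TypeSet^i$ is all of $L$, so a fortiori $f^i_L$ over $\widehat{\TypeSet^i}\supseteq\TypeSet^i$ is onto, giving $L\in\LEXNE\left(G'\right)$. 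For downward closure, take a non-empty $L'\subseteq L$ with $L\in\LEXNE\left(G\right)$: for each $l\in L'$ pick a type $\alpha$ of player~$i$ with $f^i_L\left(\alpha\right)=l$; by the elementary observation $f^i_{L'}\left(\alpha\right)=l$ as well, so $f^i_{L'}$ is onto and $L'\in\LEXNE\left(G\right)$.

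Finally, the two ``irrelevant information'' statements are both instances of the same principle: each operation is designed so that no restriction to $L$ of any preference that matters changes, hence every map $f^i_L$ stays literally the same and surjectivity is trivially preserved. In the first statement only the preferences of player~$i$ are perturbed, and by hypothesis each perturbed type agrees with the corresponding original type on $L$, so the image of $f^i_L$ over $\widehat{\TypeSet^i}$ is still $L$ while nothing changes for the other players. In the second statement $L$ lies among the first $m$ locations, on which every player's preferences are unchanged, so again all the maps $f^i_L$ are unchanged. I expect no genuine obstacle here; the only points needing care are the bookkeeping of the reformulation (surjectivity of $f^i_L$, and the ``best element of a subset'' remark), the implicit non-emptiness of equilibrium location sets --- which is why ``$L'\subseteq L$'' in the downward-closure bullet must be read as a non-empty subset --- and the precise reading of ``$G'$ extends $G$'' in the last bullet as: every type of $G$ admits at least one extension to a type of $G'$ agreeing with it on the first $m$ locations.
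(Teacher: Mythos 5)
Your proof is correct and follows the same route the paper intends: the paper offers no separate argument for this corollary, stating only that it follows "from the characterization above," i.e., from Lemma~\ref{lem:Coord General LEX characterization}, and each of your bullet-by-bullet derivations (surjectivity of $f^i_L$ is monotone in the type set, a best element of $L$ lying in $L'\subseteq L$ is a best element of $L'$, and $f^i_L$ depends only on preferences restricted to $L$) is exactly the intended reduction. The caveats you flag --- non-emptiness of $L'$ and the reading of "extension" in the last bullet --- are the right ones and do not affect correctness.
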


\subsection{Coordination games with single-peaked consistent preferences}

Next, we study the equilibria of coordination games under type ambiguity
for several special cases in which the type sets satisfy some natural
constraints. In this work, we present cases in which for every player
the set of preferences $\TypeSet^{i}$ is single-peaked consistent
with regard to a line. 
\begin{defn}[single-peaked consistent preferences with regard to a line~\cite{Black1948}]
\MyLyxThmNewline{}

 A preference $\alpha$ over a set of locations $S\subseteq\Re$
is said to be \emph{single-peaked with regard to $\Re$}, if there
exists a utility function $f\colon\Re\rightarrow\Re$ s.t. for any
two locations $y$ and $z$, $y$ is preferred to $z$ if $f\left(y\right)>f\left(z\right)$
(i.e., $f$ represents $\alpha$); the top-ranked location  $x^{\star}$
in $\alpha$ is the unique maximizer of $f$; and for any two locations
$y$ and $z$, if $x^{\star}<y<z$ or $x^{\star}>y>z$, then $y$
is preferred to $z$.

A set of preferences over a set of locations $S$ is \emph{single-peaked
consistent w.r.t. a line, if} there exists an embedding function $e\colon S\rightarrow\Re$
(which we refer to as the \emph{order} of the locations) s.t. for
any preference $\alpha$ in the set, $e\left(\alpha\right)$ is single-peaked
with regard to $\Re$.
\end{defn}
For ease of presentation, we state the results for the case $S\subseteq\Re$
(and so the embedding is the identity function). For example,  consider
the scenario of deciding on locating a common good on a linear street.
It is known that each player holds an ideal location (his location),
and that his preference is monotone in the path between this ideal
location and the common good. Yet, there might be ambiguity about
the ideal location, or about the preference between locations that
are not on the same side of the ideal location (e.g, the preference
might be a function of properties of the path).

We interpret the single-peakedness assumption as constraining the
type ambiguity in a natural way: Player~$i$ knows the order over
the locations common to all types of player~$j$, and hence he has
some information he can use to anticipate the actions of other players.
 Note that a set of single-peaked consistent preferences might be
single-peaked with regard to more than one order of the locations
(i.e., an embedding).\footnote{\hspace*{\fill}\hspace*{\fill}\begin{minipage}[t]{.94\columnwidth}\vspace{-.5em}%
\begin{example}
\label{exa:several orders}\global\long\def\ORDER#1#2#3#4#5{#1\!\!-\!\!#2\!\!-\!\!#3\!\!-\!\!#4\!\!-\!\!#5}
\global\long\def\PREF#1#2#3#4#5{#1\SprefOver#2\SprefOver#3\SprefOver#4\SprefOver#5}
The preferences $\PREF 12345$, $\PREF 21345$, $\PREF 23145$, and
$\PREF 32145$ are single-peaked w.r.t. the following four orders
(and their inverse): $\ORDER 12345$, $\ORDER 51234$, $\ORDER 51234$,
and $\ORDER 54123$. 
\end{example}
\end{minipage}

Escoffier et al.~\cite{Escoffier2008} proved that for any number
of locations $n$ and $r\leqslant2^{n-1}$, there exist  $\frac{1}{r}2^{n-1}$
different preferences that are single-peaked consistent w.r.t. $r$
different orders (for the tight bounds see \cite{Escoffier2008}).

} In cases where $\TypeSet^{i}$ is single-peaked consistent w.r.t.
several orders of the locations (which can be thought of as a stronger
 constraint), our results hold w.r.t. any of the orders, giving rise
to stricter characterizations. In the two examples  to follow, we
limit the ambiguity further (i.e., adding more information on players)
in two ways.

The first case we analyze is when assuming no ambiguity about the
players' ideal locations. That is, for every player there exists a
location $x^{\star}$ (his ideal location) that is shared by all his
types in $\TypeSet^{i}$. We show that in this case there can be at
most two locations in the equilibrium set, one situated to the right
of all ideal locations and one situated to the left of all of them
(\enquote{right} and \enquote{left} w.r.t. the order of each player).

\begin{lem}
\label{lem:Coordination Line SP  peak known} \MyLyxThmNewline{} 

Let $G$ be a coordination game with single-peaked consistent type
sets $\TypeSet^{i}$ s.t. there is no ambiguity about the players'
ideal locations, and let $x^{i}$ be player~$i$'s ideal location.
Then any (non-trivial) $\LEXNE$ $L$ satisfies that there exist two
different locations $\alpha$ and $\beta$ s.t. $L=\left\{ \alpha,\beta\right\} $
and for every player~$i$ and order $<^{i}$ s.t. $\TypeSet^{i}$
is single-peaked consistent w.r.t $<^{i}$ it holds that $\alpha<^{i}x^{i}<^{i}\beta$.

This condition is tight. Any set $L$ satisfying the above condition
is a $\LEXNE$ of the game in case the type sets $\TypeSet^{i}$ are
rich enough.\footnote{In particular, if for all players the type set $\TypeSet^{i}$ contains
all single-peaked preferences with ideal location $x^{i}$ w.r.t.
to some order (and only them), then the above characterizes all (non-trivial)
$\LEXNE$.}

In particular, in case \textup{$\cup_{i}\TypeSet^{i}$ is single-peaked
consistent w.r.t. an order $<$, }any (non-trivial) $\LEXNE$ $L$
satisfies that there exist two different locations $\alpha$ and $\beta$
s.t. $L=\left\{ \alpha,\beta\right\} $ and $\alpha<\min_{i}x^{i}\leqslant\max_{i}x^{i}<\beta$.
\end{lem}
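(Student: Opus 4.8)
The plan is to derive everything from the characterization in Lemma~\ref{lem:Coord General LEX characterization}: a non-empty location set $L$ is realized by a $\LEXNE$ if and only if for every player $i$ the map $f^i\colon\TypeSet^i\to L$ sending a type to its best location in $L$ is onto. The only place single-peakedness enters is through one elementary fact, which I would isolate first. Fix any order $<^i$ witnessing that $\TypeSet^i$ is single-peaked consistent. Since there is no ambiguity about player~$i$'s ideal location, \emph{every} type of player~$i$ has the same peak $x^i$; and single-peakedness then forces the best location in any fixed set $L$ of any such type to be one of only two candidates: the $<^i$-largest element of $L$ lying $<^i$-below $x^i$, or the $<^i$-smallest element lying $<^i$-above $x^i$ (and it is $x^i$ itself when $x^i\in L$). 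In particular $\left|f^i(\TypeSet^i)\right|\le 2$, and this does essentially all the work.

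For the structural (necessity) half I would argue as follows. Let $L$ be a non-trivial $\LEXNE$, so $|L|\ge 2$, and fix a witnessing order $<^i$. Since $f^i$ is onto by Lemma~\ref{lem:Coord General LEX characterization} while $\left|f^i(\TypeSet^i)\right|\le 2$, we get $|L|=2$; write $L=\{\alpha,\beta\}$. If $x^i\in L$, then $x^i$ is the unique best location in $L$ for every type of player~$i$, so $f^i$ would be constant; hence $x^i\notin L$. If both elements of $L$ lay on the same side of $x^i$ in $<^i$ --- say both $<^i$-below $x^i$ --- then by single-peakedness every type of player~$i$ has the $<^i$-larger of the two as its best location in $L$, and $f^i$ would again be constant; so $L$ has exactly one element on each side of $x^i$. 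Labelling them so that $\alpha<^i x^i<^i\beta$ gives the claimed straddling property, and since the argument applies to every player and every witnessing order it holds throughout; when all the $\TypeSet^i$ share a common order (in particular in the $S\subseteq\Re$ presentation) the labelling is uniform across players.

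For tightness, suppose $L=\{\alpha,\beta\}$ satisfies the straddling condition and each $\TypeSet^i$ is rich enough --- it is enough, say, that $\TypeSet^i$ contains all single-peaked preferences with peak $x^i$ with respect to some witnessing order $<^i$. As $\alpha<^i x^i<^i\beta$, both $\alpha$ and $\beta$ are the boundary candidates above, so I only need one type of player~$i$ whose best location in $L$ is $\alpha$ and one whose best is $\beta$; the single-peaked preference ranking all locations $<^i$-below $x^i$ above all locations $<^i$-above $x^i$ supplies the former, and its mirror image supplies the latter. Thus $f^i$ is onto for every $i$, and Lemma~\ref{lem:Coord General LEX characterization} gives that $L$ is a $\LEXNE$. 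Finally, the displayed special case is immediate: if $\cup_i\TypeSet^i$ is single-peaked consistent with respect to a single order $<$, apply the necessity argument with $<^i=<$ for all $i$; writing $\alpha<\beta$, the inequalities $\alpha<x^i<\beta$ for every player are precisely $\alpha<\min_i x^i\le\max_i x^i<\beta$.

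The step I expect to be the main obstacle is the opening observation: checking carefully that, once the order is fixed, the best location in $L$ is always drawn from a fixed two-element ``boundary'' set, and, relatedly, that ``single-peaked consistent'' --- rather than merely each type being single-peaked in isolation --- is exactly what guarantees that all types of a player share both that order and the peak $x^i$. Once this is established, the remainder is a short and routine application of Lemma~\ref{lem:Coord General LEX characterization}; the one mild bookkeeping point is the consistent choice of the labels $\alpha,\beta$, which is automatic in the single-order setting that the corollary (and the $S\subseteq\Re$ presentation) refers to.
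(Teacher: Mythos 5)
Your proof is correct and follows essentially the same route as the paper's: both reduce the statement to the onto-map characterization of Lemma~\ref{lem:Coord General LEX characterization}, use the fact that a single-peaked preference with known peak can only select one of the two $L$-locations adjacent to $x^{i}$ (so two locations on the same side of $x^{i}$ cannot both be chosen), and establish tightness by exhibiting one rich type preferring each of $\alpha$ and $\beta$. Your version is in fact slightly more careful than the paper's in explicitly ruling out $x^{i}\in L$ and in deriving $\sizeof L=2$ from surjectivity, but the underlying argument is the same.
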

\begin{proof}
\MyLyxThmNewline{}

Notice that the result is equivalent to requiring that for every player~$i$
there be at most one location $\alpha$ in $L$ s.t. $\alpha<^{i}x^{i}$
and at most one location $\beta$ in $L$ s.t. $x^{i}<^{i}\beta$.
Assume for contradiction that either there are two locations in $L$
to the left of $x^{i}$ or two locations to the right of $x^{i}$.
Then one of them must be on the path between $x^{i}$ and the other
location, and hence preferred to the other location regardless of
the type of player~$i$, in contradiction to $L$ being an equilibrium.

In order to prove tightness, given a set $L=\left\{ \alpha,\beta\right\} $
as above, for any player~$i$ there are single-peaked preferences
with $x^{i}$ at the top in which $\alpha$ is preferred to $\beta$,
and there are single-peaked preferences with $x^{i}$ at the top in
which $\beta$ is preferred to $\alpha$. If the set $\TypeSet^{i}$
includes both a type that prefers $\alpha$ to $\beta$ and a type
that prefers $\beta$ to $\alpha$, then the mapping $f^{i}:\TypeSet^{i}\rightarrow L$
of a type to his preferred location in $L$ is onto. Hence, if this
condition is satisfied by all type sets, then $L$ is a $\LEXNE$.
\end{proof}
We see that under this homogeneity constraint, the (non-trivial) equilibria
are constrained to be two extreme locations. This result can also
be interpreted as the power of extreme players in scenarios where
there is a common agreement on the order. Note that when the types
are single-peaked w.r.t. several orders, the equilibria should satisfy
the above characterization w.r.t. all of them. For instance, consider
the scenario of two players and type sets $\TypeSet^{1}=\TypeSet^{2}=\left\{ \PREF 21345\,,\,\PREF 23145\right\} $
(a subset of the set of preferences in Example~\ref{exa:several orders}).
These preferences are single-peaked w.r.t. the order $\ORDER 51234$,
but $L=\left\{ 4,5\right\} $ is not a $\LEXNE$, albeit satisfying
the property of the lemma because both types prefer $4$ to $5$ and
will not choose $5$, and indeed $\left\{ 4,5\right\} $ does not
satisfy the property w.r.t. the order $\ORDER 12345$. 

The second case we analyze is having ambiguity about the ideal location
only, while assuming a structure on the preferences (in a sense, this
case is the complement of the first case). A simple example of such
a structure is that of Euclidean preferences. A Euclidean preference
is uniquely derived from its ideal location by ordering the locations
according to their distance from the ideal location. Notice that this
restriction too can be stated as a homogeneity constraint on the preferences
given a common embedding: it can be stated as a common metric on the
location set that is shared by all preferences.

\begin{lem}
\label{lem:Coordination Line SP  Euclidean} \MyLyxThmNewline{} 

Let $G$ be a coordination game on the real line s.t. all players
have Euclidean preferences. Then a (non-trivial) location set $L=\left\{ l_{1}<l_{2}<\cdots<l_{k}\right\} $
is a $\LEXNE$ if and only if   for every player $i$ there are
types derived by possible ideal locations for him $x_{1}<x_{2}<\cdots<x_{k}$
s.t. for $t=1,2,\ldots,k-1$\VspaceBeforeMath{-.7em}{}
\[
x_{t}<\dfrac{l_{t}+l_{t+1}}{2}<x_{t+1},
\]
i.e., the median between $l_{t}$ and $l_{t+1}$ is between $x_{t}$
and $x_{t+1}$.\footnote{In the proof we do not use the fact that all type sets $\TypeSet^{i}$
are Euclidean w.r.t. the same embedding, and actually the condition
in the lemma can be weakened to hold only with respect to each of
the embeddings of $\TypeSet^{i}$. } 
\end{lem}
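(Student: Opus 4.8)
The plan is to apply Lemma~\ref{lem:Coord General LEX characterization} and then unpack its onto-condition for Euclidean preferences. By that lemma, a set $L=\left\{l_{1}<\cdots<l_{k}\right\}$ is a $\LEXNE$ of $G$ if and only if for every player~$i$ the map $f^{i}\colon\TypeSet^{i}\rightarrow L$ sending a type to its favorite location in $L$ is onto. Thus the statement is really a per-player claim, and the argument will only ever use the embedding (Euclidean structure) of one player at a time, which is precisely the strengthening pointed out in the footnote.

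The next step is to describe the fibers of $f^{i}$. Write $c_{t}:=\tfrac{l_{t}+l_{t+1}}{2}$ for $t=1,\ldots,k-1$, and set $c_{0}:=-\infty$ and $c_{k}:=+\infty$. For a Euclidean preference with ideal location $x$, the favorite location in $L$ is the one nearest to $x$; since $l_{t}$ is closer to $x$ than $l_{t+1}$ exactly when $x<c_{t}$ and closer than $l_{t-1}$ exactly when $x>c_{t-1}$ (and these two conditions already imply $l_{t}$ is closest of all, as the $l_{j}$ are sorted), the favorite among $l_{1}<\cdots<l_{k}$ is $l_{t}$ precisely when $c_{t-1}<x<c_{t}$. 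Ties occur only at the points $c_{t}$, which do not arise since the members of $\TypeSet^{i}$ are strict preferences, so $f^{i}$ is well defined. Consequently $l_{t}$ lies in the image of $f^{i}$ if and only if $\TypeSet^{i}$ contains a type whose ideal location lies in the open interval $\left(c_{t-1},c_{t}\right)$, and $f^{i}$ is onto if and only if this holds for every $t=1,\ldots,k$.

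It then remains to match this with the midpoint condition of the lemma. For the ``if'' direction, suppose $x_{1}<\cdots<x_{k}$ are ideal locations of types in $\TypeSet^{i}$ with $x_{t}<c_{t}<x_{t+1}$ for all $t=1,\ldots,k-1$. Each $x_{t}$ then satisfies $x_{t}<c_{t}$ and, for $t\geqslant2$, also $c_{t-1}<x_{t}$ (the instance $c_{t-1}<x_{(t-1)+1}$ of the hypothesis), so $x_{t}\in\left(c_{t-1},c_{t}\right)$ and witnesses that $l_{t}$ is in the image of $f^{i}$; hence $f^{i}$ is onto. For the ``only if'' direction, if $f^{i}$ is onto, choose for each $t$ a type in $\TypeSet^{i}$ with ideal location $y_{t}\in\left(c_{t-1},c_{t}\right)$; then $y_{t}<c_{t}<y_{t+1}$ for every $t$ (the right inequality because $y_{t+1}\in\left(c_{t},c_{t+1}\right)$), so $y_{1}<\cdots<y_{k}$ and the $y_{t}$ are the $x_{t}$ required by the lemma. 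Combining both directions for all players and invoking Lemma~\ref{lem:Coord General LEX characterization} gives the claimed equivalence.

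I do not expect a genuine obstacle here; the points that need care are purely bookkeeping: handling the two boundary cells $t=1$ and $t=k$ (absorbed by the conventions $c_{0}=-\infty$, $c_{k}=+\infty$), using the strictness of the preferences to exclude ties at the midpoints $c_{t}$ so that $f^{i}$ is well defined, and noting that the witness types produced in each direction are automatically distinct, since they lie in disjoint intervals $\left(c_{t-1},c_{t}\right)$ and therefore already rank the locations of $L$ differently.
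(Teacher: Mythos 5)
Your proof is correct and follows essentially the same route as the paper's: both reduce the claim to the onto-condition of Lemma~\ref{lem:Coord General LEX characterization} and then translate ontoness into the midpoint condition, with adjacent witness types separated by the median $\frac{l_{t}+l_{t+1}}{2}$. Your version merely makes explicit the fiber description $f^{i}{}^{-1}\left(l_{t}\right)=\left(c_{t-1},c_{t}\right)$ that the paper leaves implicit.
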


A special case of the lemma is where there are exactly two types of
each player that are derived from two possible ideal locations $x^{i}<y^{i}$.
In this scenario we get that a (non-trivial) location set $L$ is
a $\LEXNE$ if and only if there exist two locations $\alpha$ and
$\beta$ s.t. $L=\left\{ \alpha,\beta\right\} $ and $m=\half\left(\alpha+\beta\right)$
is (strictly) between $x^{i}$ and $y^{i}$ for all players. Hence,
a (non-trivial) $\LEXNE$ exists if and only if the intersection of
the segments $\left(x^{i},y^{i}\right)$ is non-empty.
\begin{proof}
\MyLyxThmNewline{}

$\underline{\Rightarrow}$: Let $L=\left\{ l_{1}<l_{2}<\cdots<l_{k}\right\} $
be an equilibrium and let $i$ be a player. Since $L$ is an equilibrium,
there are types of player~$i$,\footnote{We identify the types with their ideal locations.}
$x_{1}<x_{2}<\cdots<x_{k}$, s.t. player~$i$ chooses $l_{t}$ when
his type is $x_{t}$. In particular, for $t<k$ his preference between
$l_{t}$ and $l_{t+1}$ when his type is $x_{t}$ is different from
his preference when his type is $x_{t+1}$. Hence, $x_{t}$ and $x_{t+1}$
lie on different sides of the median between $l_{t}$ and $l_{t+1}$.

$\underline{\Leftarrow}$: Following the same reasoning, it is easy
to see that if $L$ satisfies the property of the lemma, then player~$i$
of type $x_{t}$ prefers $l_{t}$ to any other location in $L$, and
hence $L$ is an equilibrium.
\end{proof}

\section{Summary and future directions}

 How people choose an action to take when possessing only partial
information, and how they should choose their action, are basic questions
in economics, and  on them the definition of equilibrium is built
(both as a prediction tool and as a self-enforcing contract).  The
main stream of game-theoretic literature assumes that economic agents
are expectation maximizers (according to some objective or subjective
prior) and, moreover, that there is some consistency between the players'
priors (commonly, the common prior assumption). 

In this work, we chose the  inverse scenario and studied cases in
which the players have no information on the state of the world. 
We defined a general framework of games with ambiguity following Harsanyi's
model of games with incomplete information~\cite{Harsanyi1967},
and  in particular, games with type ambiguity. We axiomatized a family
of decision models under ambiguity that we claim a rational agent
is expected to follow, and characterized the finest refinement of
this family, $\LEX$.  This family can be interpreted as all \emph{rational}
models of decision in cases of (extreme) ambiguity that follow Wald's
MiniMax principle, defined by the different  ways to decide in cases
in which the MiniMax principle is mute. In many scenarios with type
ambiguity Wald's MiniMax principle is too coarse, and so the corresponding
equilibrium ($\MINNE$) has almost no predictive power. The way to
justify a selection process from these equilibria is to refine the
players' preference, that is, assume they act according to a decision
model in this family. We showed $\LEX$ is the unique model that follows
Wald's MiniMax principle and breaks all possible instances of indifference
(without violating the rationality axioms we assumed).  Finally,
we studied the respective equilibrium notion, $\LEXNE$, and applied
it to two families of games: coordination games and bilateral trade
games.

 One might ask himself why to choose $\LEX$ as the analysis tool,
and not a different decision model in the family. First, we note that,
just like the $\MIN$ model, the $\LEX$ model has a simple and intelligible
cognitive interpretation, and so it does not require a complex epistemological
assumption on the players, which other models might impose. In addition,
  we note that $\LEXNE\left(G\right)$, i.e., the equilibria of
$G$ under the $\LEX$, are in fact also equilibria of $G$ under
any profile of rational refinements of Wald's MiniMax principle. Moreover,
$\LEXNE\left(G\right)$ can be equivalently defined as the set of
all such robust equilibria  of $G$. For instance, these are the
profiles an outside actor can suggest as a self-enforcing contract,
even if he does not know the exact preferences of the players.

This scenario of extreme ambiguity  might seem unrealistic. Yet,
we claim this model approximates many partial-information real-life
scenarios better than the subjective expectation maximization model.
Clearly, if players have information which they can use to construct
a belief about the other players, and we expect they will use it,
then the expectation maximization decision model is a better analysis
tool. In intermediate scenarios, when players have some information
but it is unreasonable to expect them to form a distribution over
the world, it is reasonable to model the players as following one
of the intermediate models for decision making under ambiguity,\footnote{We refer to them as intermediate since they do not satisfy the principle
of indifference, and hence they differentiate between the states of
the world.} e.g., the multi-prior model (for an overview of such models, the
interested reader is referred to \cite{Gilboa2011}). It remains an
open question  what are the common features of equilibria under
$\LEX$ and equilibria under other decision models under ambiguity.
By characterizing these common features, we would like to analyze
the sensitivity of the equilibria we've found in this work to the
specific analysis tool.  One could also justify the model we presented,
via justifying the axioms, for other scenarios. E.g., for scenarios
in which the players might have some information on the preferences
of others, but due to extreme risk aversion or bounded rationality
constraints, they follow Wald's MiniMax model. Moreover, in cases
in which one justified the axioms we presented (and mostly the invariance
to irrelevant information axiom), e.g., on cognitive grounds, we get
that the $\LEXNE$ is the right analysis tool for the same reasons
we presented above.

In order to study further  the notion of equilibrium under $\LEXNE$,
we hope to analyze other families of games, e.g., finer cases of coordination
games when adding homogeneity constraints that are either knowledge
on other players, or intra-player agreement. In addition, we see
few further directions of research.

\uline{Variance of information between types}: In the examples
we analyzed the information of a player did not depend on his type
(in other words, a player cannot deduce from his type any information
on the feasible types of others). The model we presented in Sec.~\ref{sec:Model}
includes  more general scenarios. We have preliminary results, which
we omitted here, for Schelling's Homeowner-Burglar game~\cite[p.~207]{Schelling1980}.
 For this game, we got similar predictions to the prediction of Schelling,
replicating the power of partial-knowledge of high degrees.

\uline{Mechanism design}: In a continuation work, we extend the
result regarding bilateral trade, to characterization of incentive
compatible, individually rational, deterministic mechanisms for bilateral
trade. We show that (essentially) the implementable allocation rules,
are those that are implemented using the price announcement mechanisms
we analyzed. Also here, implementability under $\LEXNE$ can be interpreted
as robust implementability under ambiguity, i.e., implementability
without assuming a specific decision model, but analyzing the profiles
that are equilibria for any profile of decision models of the players
from the family we characterized. Another direction of research is
analyzing what a designer can gain by adding ambiguity to a mechanism,
and specifically whether it is possible to increase the participation
of the players, similarly to the full-participation result we proved.

\uline{Information update}: The main drawback of modeling the knowledge
of a player using a set of types for other players (or, in the general
case, of states of the world) instead of a richer structure, is that
there is no reasonable way to define information update in this model.
This prevents us to extend this work to two natural directions: analysis
of extensive form games (e.g., when the players play in turns, learning
the type of each other as the evolve), and value of information (the
question of how much a player should invest in order to decrease his
ambiguity). In the decision theory literature, there are several non-Bayesian
information update rules, e.g., Dempster-Shafer~\cite{Dempster1967,Shafer1976}
 and Jeffrey update rule~\cite{Jeffrey1965}. These rules usually
assume a finer representation of knowledge than the representation
we had in this work, but we think that after basing a rational decision
model in our simplified knowledge representation, it should not be
hard to extend the decision model to these finer knowledge models.

\phantomsection\addcontentsline{toc}{section}{\refname}


\begin{thebibliography}{10}

\bibitem{Aliprantis2006}
Charalambos~D. Aliprantis and Kim Border.
\newblock {\em Infinite Dimensional Analysis: A Hitchhiker's Guide}.
\newblock Springer, 3 edition, 2006.

\bibitem{Anscombe1963}
F.~J. Anscombe and Robert~J. Aumann.
\newblock A definition of subjective probability.
\newblock {\em The Annals of Mathematical Statistics}, 34(1):199--205, 1963.

\bibitem{Arrow1972}
Kenneth~J. Arrow and Leonid Hurwicz.
\newblock An optimality criterion for decision-making under ignorance.
\newblock In Charles~Frederick Carter and J.~L. Ford, editors, {\em Uncertainty
  and Expectations in Economics}, chapter~1, pages 1--29. Oxford: Basil
  Blackwell, 1972.

\bibitem{Aumann1976}
Robert~J. Aumann.
\newblock Agreeing to disagree.
\newblock {\em The Annals of Statistics}, 4(6):1236--1239, 1976.

\bibitem{Aumann1999}
Robert~J. Aumann.
\newblock Interactive epistemology i: Knowledge.
\newblock {\em International Journal of Game Theory}, 28(3):263--300, 1999.

\bibitem{Berge1963}
C.~Berge.
\newblock {\em Topological Spaces: Including a Treatment of Multi-valued
  Functions, Vector Spaces, and Convexity}.
\newblock Dover Publications, 1963.

\bibitem{Black1948}
Duncan Black.
\newblock On the rationale of group decision-making.
\newblock {\em Journal of Political Economy}, 56(1):23--34, 1948.

\bibitem{Bose2006}
Subir Bose, Emre Ozdenoren, and Andreas Pape.
\newblock Optimal auctions with ambiguity.
\newblock {\em Theoretical Economics}, 1(4):411--438, 2006.

\bibitem{Cerreia-Vioglio2015}
Simone Cerreia-Vioglio, David Dillenberger, and Pietro Ortoleva.
\newblock Cautious expected utility and the certainty effect.
\newblock {\em Econometrica}, 83(2):693--728, 2015.

\bibitem{Chatterjee1983}
Kalyan Chatterjee and William Samuelson.
\newblock Bargaining under incomplete information.
\newblock {\em Operations Research}, 31(5):835--851, 1983.

\bibitem{Chen2009}
Xi~Chen, Xiaotie Deng, and Shang-Hua Teng.
\newblock Settling the complexity of computing two-player {N}ash equilibria.
\newblock {\em Journal of the ACM (JACM)}, 56(3), 2009.

\bibitem{Dempster1967}
A.~P. Dempster.
\newblock Upper and lower probabilities induced by a multivalued mapping.
\newblock {\em The Annals of Mathematical Statistics}, 38(2):325--339, 1967.

\bibitem{Escoffier2008}
Bruno Escoffier, J\'{e}r\^{o}me Lang, and Meltem {\"{O}}zt{\"{u}}rk.
\newblock Single-peaked consistency and its complexity.
\newblock In {\em Proceedings of \Nth{18} European Conference on Artificial
  Intelligence (ECAI)}, pages 366--370, 2008.

\bibitem{Friedman1993}
Daniel Friedman and John Rust.
\newblock {\em The Double Auction Market: Institutions, Theories, and
  Evidence}.
\newblock Addison-Wesley Publishing Company, 1993.

\bibitem{Gilboa2011}
Itzhak Gilboa and Massimo Marinacci.
\newblock Ambiguity and the {Bayes}ian paradigm.
\newblock In Daron Acemoglu, Manuel Arellano, and Eddie Dekel, editors, {\em
  Advances in Economics and Econometrics}, volume~1, pages 179--242. Cambridge
  University Press, 2013.

\bibitem{Gilboa1989}
Itzhak Gilboa and David Schmeidler.
\newblock Maxmin expected utility with non-unique prior.
\newblock {\em Journal of Mathematical Economics}, 18(2):141--153, 1989.

\bibitem{Harsanyi1967}
John~C. Harsanyi.
\newblock Games with incomplete information played by "{B}ayesian" players,
  {I}-{III}. part {I}. the basic model.
\newblock {\em Management Science}, 14(3):159--182, 1967.

\bibitem{conf/uai/HyafilB04}
Nathanael Hyafil and Craig Boutilier.
\newblock Regret minimizing equilibria and mechanisms for games with strict
  type uncertainty.
\newblock In {\em Proceedings of \Nth{20} Conference in Uncertainty in
  Artificial Intelligence (UAI)}, pages 268--277, 2004.

\bibitem{Jeffrey1965}
Richard~C. Jeffrey.
\newblock {\em The logic of decision}.
\newblock McGraw-Hill, 1965.

\bibitem{Kakutani1941}
Shizuo Kakutani.
\newblock A generalization of {B}rouwer's fixed point theorem.
\newblock {\em Duke Mathematical Journal}, 8(3):457--459, 1941.

\bibitem{Luce1957}
Robert~Duncan Luce and Howard Raiffa.
\newblock {\em Games and Decisions: Introduction and Critical Survey}.
\newblock Dover Publications, 1957.

\bibitem{Mas-Colell1995}
Andreu Mas-{C}olell, Michael Whinston, and Jerry Green.
\newblock {\em Microeconomic Theory}.
\newblock Oxford University Press, 1995.

\bibitem{Maschler2013}
Michael Maschler, Eilon Solan, and Shmuel Zamir.
\newblock {\em Game Theory}.
\newblock Cambridge University Press, 2013.

\bibitem{Osborne1994}
Martin~J. Osborne and Ariel Rubinstein.
\newblock {\em Course in Game Theory}.
\newblock {MIT} Press, 1994.

\bibitem{Papadimitriou1994}
Christos Papadimitriou.
\newblock On the complexity of the parity argument and other inefficient proofs
  of existence.
\newblock {\em Journal of Computer and System Sciences}, 48(3):498--532, 1994.

\bibitem{Papadimitriou2015}
Christos Papadimitriou.
\newblock The complexity of computing equilibria.
\newblock In H.~Peyton Young and Shmuel Zamir, editors, {\em Handbook of Game
  Theory with Economic Applications}, volume~4, chapter~14, pages 779--810.
  Elsevier, 2015.

\bibitem{savage1972foundations2ndEd}
Leonard~J. Savage.
\newblock {\em The Foundations of Statistics}.
\newblock Dover Publications, 1972.

\bibitem{Schelling1980}
Thomas~Crombie Schelling.
\newblock {\em The Strategy of Conflict}.
\newblock Harvard University Press, 2 edition, 1980.

\bibitem{Shafer1976}
Glenn Shafer.
\newblock {\em A Mathematical Theory of Evidence}.
\newblock Princeton University Press, 1976.

\bibitem{Wald1950}
Abraham Wald.
\newblock {\em Statistical decision functions}.
\newblock Wiley, 1950.

\bibitem{Wolitzky2015}
Alexander Wolitzky.
\newblock Mechanism design with maxmin agents: Theory and an application to
  bilateral trade.
\newblock {\em Theoretical Economics}, 11(3):971--1004, 2016.

\end{thebibliography}

\newpage{}

\appendix

\section{Proof of Lemma~\ref{lem:BRexists} (Best response is well defined)}

\label{Appendix:Proof of lem:BRexists}This lemma can be easily extended
to infinite number of states case by assuming some structure on the
actions set and the utility function.
\begin{lem*}
\MyLyxThmNewline{}
\end{lem*}
\begin{proof}
\MyLyxThmNewline{}
We prove the lemma in the more general decision theory framework.
We show that if there are finitely many states of the world and finitely
many pure actions, there is always at least one optimal mixed action.
Since the number of players and the number of types of each player
are finite, a player faces one of a finite number of profiles (states
of the world) and we get the desired result.

The existence of an optimal pure action is trivial since there are
finitely many pure actions. Next, we prove the existence of optimal
mixed actions. \global\long\def\OptAct{\operatorname{OptAct}}
 The set of optimal actions according to $\MIN$ $\OptAct_{\MIN}$
is\VspaceBeforeMath{-.7em}{}
\[
\OptAct_{\MIN}=\argmax_{\sigma\in\Delta\left(\ActionSet\right)}\min\nolimits _{\omega\in\Omega}u\left(\sigma,\omega\right).
\]

For every state of the world $\omega\in\Omega$, $u\left(\sigma,\omega\right)$
is a continuous function in $\sigma$ (it is a linear transformation).
$\min\nolimits _{\omega\in\Omega}u\left(\sigma,\omega\right)$ is
also continuous in $\sigma$ as the minimum for finitely many continuous
functions. Hence, $\argmax_{a\in\sigma\in\Delta\left(\ActionSet\right)}\min\nolimits _{\omega\in\Omega}u\left(\sigma,\omega\right)$
is non-empty as the maxima of a continuous function over a simplex.
Moreover, $\OptAct_{\MIN}$ is a compact set. 

Similarly, the set of optimal actions according to $\LEX$\VspaceBeforeMath{-.7em}{}
\[
\OptAct_{\LEX}=\argmax_{a\in\OptAct_{\MIN}}\max\nolimits _{\omega\in\Omega}u\left(\sigma,\omega\right)
\]
is non-empty since $\OptAct_{\MIN}$ is compact and $\max\nolimits _{\omega\in\Omega}u\left(\sigma,\omega\right)$
is continuous in $\sigma$. 

\end{proof}

\section{Proof of Theorem~\ref{thm:NEexists} ($\protect\MINNE$ existence)}

\label{Appendix:Proof of thm:NEexists}
\begin{thm*}
\MyLyxThmNewline{}
\end{thm*}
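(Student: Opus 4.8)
The plan is to realize a mixed $\MINNE$ as a fixed point of the joint best-response correspondence, following the sketch in the main text; the substance of the proof is to check that this correspondence meets the hypotheses of Kakutani's theorem~\cite{Kakutani1941}.

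First I would set up the fixed-point space. Since a type of a player faces an independent decision problem, a mixed strategy profile is an element of the finite product of simplices
\[
\mathbb{S}=\prod_{i\in\Players}\;\prod_{t^{i}\in\TypeSet^{i}}\Delta\left(\ActionSet^{i}\right),
\]
which is a non-empty, convex, compact subset of a Euclidean space. For $s\in\mathbb{S}$ I define $F(s)\subseteq\mathbb{S}$ to be the product, over every player~$i$ and every type $t^{i}\in\TypeSet^{i}$, of the sets of $\MIN$-best responses of type $t^{i}$ against the strategies of the other players read off from $s$. Unwinding the definitions, $s^{\star}\in F(s^{\star})$ holds exactly when every type best-responds to the rest, i.e.\ when $s^{\star}$ is a mixed $\MINNE$; so it is enough to produce a fixed point of $F$.

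Next I would verify the Kakutani hypotheses for $F$. \emph{Non-emptiness of values}: each factor is non-empty by Lemma~\ref{lem:BRexists} (mixed $\MINBR$), hence so is the product. \emph{Convexity and compactness of values}: for fixed strategies of the others, the worst-case payoff of type $t^{i}$ as a function of its own mixed action $\sigma$ is a minimum of finitely many affine functions of $\sigma$, hence concave on $\Delta\left(\ActionSet^{i}\right)$; the maximizer set of a concave function over a simplex is convex and compact, and a finite product of convex compact sets is convex and compact. \emph{Closed graph}: the worst-case payoff is jointly continuous in $(s,\sigma)$ (a finite minimum of functions that are multilinear, hence continuous, in these variables), and the feasible set $\Delta\left(\ActionSet^{i}\right)$ does not vary with $s$; by Berge's maximum theorem~\cite{Berge1963} each single-type best-response correspondence is upper hemicontinuous with non-empty compact values, and a finite product of such correspondences is again upper hemicontinuous, so $F$ has a closed graph. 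Kakutani's theorem then yields $s^{\star}\in F(s^{\star})$, which, as noted above, is a mixed $\MINNE$.

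I expect the only genuine obstacle to be the convexity and upper-hemicontinuity checks, and it is worth flagging that these are exactly where the argument is particular to $\MIN$: concavity of the worst-case payoff is what makes the best-response sets convex and also underlies the continuity estimate. Both properties fail for $\LEX$ — the $\LEX$-best response to several pure opponents can be a two-point set of undominated pure actions, and an arbitrarily small perturbation of an opponent's mixture can flip the best-outcome tie-break — which is precisely why the analogous existence statement for $\LEXNE$ is false, as the example in the main text shows.
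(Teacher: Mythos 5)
Your proposal is correct and follows essentially the same route as the paper: the same product-of-simplices space, the same product-of-best-responses correspondence, non-emptiness from Lemma~\ref{lem:BRexists}, upper hemicontinuity via Berge's maximum theorem, and Kakutani's fixed point theorem. The only cosmetic difference is that you obtain convexity of the best-response sets from concavity of the worst-case payoff (a minimum of affine functions), whereas the paper spells out the same fact by a direct two-inequality argument; your closing remark on why both checks fail for $\LEX$ matches the paper's footnote as well.
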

\begin{proof}
\MyLyxThmNewline{}
\global\long\def\SET{\mathbb{S}}
We prove the existence of a mixed $\MINNE$ by applying Kakutani's
fixed point theorem~\cite{Kakutani1941}\footnote{Let $\SET\subseteq\Re^{n}$ be a non-empty compact convex set. Let
$F:\SET\rightarrow\SET$ be a set-valued upper semi-continuous function
on $\SET$ such that $F\left(s\right)$ is non-empty and convex for
all $s\in\SET$. Then $F$ has a fixed point, i.e., a point $s\in\SET$
such that $s\in F\left(s\right)$.} to the following set $\SET$ and set-valued function $F$. We set
$\SET\subseteq\Re^{K}$ to be the set of all profiles of mixed strategies
of the types,\footnote{for $K$ being the sum of the number of types over the players.}
i.e., a cross product of the corresponding simplexes. Given a strategy
profile $s$, we define $F\left(s\right)$ to be the product of the
best response correspondences of the different types to the profile
$s$.

Clearly, $\SET$ is a convex compact non-empty set. We proved (Lemma~\ref{lem:BRexists})
that the best response always exists and hence $F\left(s\right)$
is non-empty for all $s\in\SET$. We show that the best response correspondence
of a type is an upper semi-continuous function (and hence also $F$)
by applying Berge's Maximum theorem~\cite[Thm.~2, p.~116]{Berge1963}\footnote{If $f\colon X\times Y\rightarrow\Re$ is a continuous function, then
the mapping $\mu\colon X\rightarrow Y$ defined by $\mu\left(x\right)=\argmax_{y\in Y}f\left(y\right)$
is a upper semi-continuous mapping. \\
(The statement of the theorem is taken from \cite[Lemma~17.31,~p.~570]{Aliprantis2006})}  to (where $t^{-i}\left(\omega\right)$ are the types player~$i$
is facing according to $\omega$, and $s^{-i}\left(t^{-i}\left(\omega\right)\right)$
are their actions according to $s$)\VspaceBeforeMath{-0.5em}{} 
\[
\mbox{f\ensuremath{\left(\sigma^{i},s\right)}=}\min_{\omega\in\Omega}u\left(\omega,\sigma^{i},s^{-i}\left(t^{-i}\left(\omega\right)\right)\right).
\]
We notice that the best response of a type of player~$i$ to a strategy
profile $s\in\SET$ is the maxima over his mixed strategies $\sigma^{i}$
of $\mbox{f\ensuremath{\left(\sigma^{i},s\right)}}$;  that $u$
is a linear function in $\sigma^{i}$ and in $s$; and that $f\left(\sigma^{i},s\right)$
is a continuous function in $s$ and $\sigma^{i}$. Hence, we get
that the best response is an upper semi-continuous correspondence. 

Next, we show that for any profile $s$ and a type of player~$i$,
the best response set is convex, and hence also $F\left(s\right)$
is a convex set. In order to prove the convexity of the best response
set, let $\sigma$ and $\tau$ be two best responses and let\VspaceBeforeMath{-0.5em}{}
\[
m=\min_{\omega\in\Omega}u\left(\omega,\sigma^{i},s^{-i}\left(t^{-i}\left(\omega\right)\right)\right)
\]
be their worst-case value. For any convex combination $\zeta$ of
$\sigma$ and $\tau$, on one hand we get that \VspaceBeforeMath{-0.5em}{}
\[
\min_{\omega\in\Omega}u\left(\omega,\zeta^{i},s^{-i}\left(t^{-i}\left(\omega\right)\right)\right)\leqslant m
\]
from the optimality of $\sigma$, and on the other hand \VspaceBeforeMath{-0.5em}{}
\[
u\left(\omega,\zeta^{i},s^{-i}\left(t^{-i}\left(\omega\right)\right)\right)\geqslant m\text{ for all }\omega\in\Omega.
\]
since we assumed this lower-bound both on $\sigma$ and on $\tau$.
Hence, \VspaceBeforeMath{-0.5em}{}
\[
\min_{\omega\in\Omega}u\left(\omega,\zeta^{i},s^{-i}\left(t^{-i}\left(\omega\right)\right)\right)=m
\]
and $\zeta$ is a best response.

By applying Kakutani's fixed point theorem we get that there is a
profile $p$ s.t. $p\in F\left(p\right)$. That is, in the profile
$p$ every type best-responds to the others, so $p$ is a $\MINNE$. 

\end{proof}

\section{Proof of Lemma~\ref{lem:MIN and MAX only} (Axiomatization)}

\label{Appendix:Proof of lem:MIN and MAX only}
\begin{lem*}
\MyLyxThmNewline{}
\end{lem*}
\begin{proof}
\MyLyxThmNewline{}
For this proof it is easier to use the following property that is
equivalent (when $\Omega$ is finite) to state symmetry.
\begin{ax*}
\MyLyxThmNewline{}For any action $a$ and bijection $\psi\colon\Omega\rightarrow\Omega$,
the DM is indifferent between $a$ and $a\circ\psi$.
\end{ax*}
Let $a$ and $a'$ be two actions over the states set $\Omega$ s.t.
$\min_{\omega\in\Omega}a\left(\omega\right)=\min_{\omega\in\Omega}b\left(\omega\right)$
and $\max_{\omega\in\Omega}a\left(\omega\right)=\max_{\omega\in\Omega}b\left(\omega\right)$.
We show that the DM is indifferent between the two, and notice that
this will prove the lemma. Assuming this claim, given any two pairs
of actions: $a$ and $a'$ that have the same minimal and maximal
outcome, and $b$ and $b'$ that have the same minimal and maximal
outcome, $a$ is preferred to $b$ if and only if $a'$ is preferred
to $b'$, because the DM is indifferent between $a$ and $a'$, and
between $b$ and $b'$.

Assume for contradiction that (w.l.o.g) the DM strictly prefers $a'$
to $a$. Since the preference satisfies state symmetry, we can assume
that $a$ and $a'$ are co-monotone, that is, for any two states $\omega$
and $\omega'$, the actions satisfy that $\left(a\left(\omega\right)-a\left(\omega'\right)\right)\left(a'\left(\omega\right)-a'\left(\omega'\right)\right)\geqslant0$.
We denote by $m$ and $M$ the minimal and maximal outcomes of $a$
and by $\omega_{m}$ and $\omega_{M}$ the two respective states of
world.\VspaceBeforeMath{-.7em}{}
\[
\begin{array}[b]{c|ccc}
 & \omega_{m} & \text{other states} & \omega_{M}\\
\hline a & m & \in\left[m,M\right] & M\\
a' & m & \in\left[m,M\right] & M
\end{array}
\]

We define two new actions over $\Omega$. An action $b$ that results
in $m$ in all states besides $\omega_{M}$ and $M$ otherwise, and
an action $b'$ that results in $M$ in all states besides $\omega_{m}$
and $m$ otherwise.\VspaceBeforeMath{-.7em}{}
\[
\begin{array}[b]{c|ccc}
 & \omega_{m} & \text{other states} & \omega_{M}\\
\hline a & m & \in\left[m,M\right] & M\\
b & m & m & M\\
a' & m & \in\left[m,M\right] & M\\
b' & m & M & M
\end{array}.
\]

Due to monotonicity the DM (weakly) prefers $a$ to $b$, and $b'$
to $a'$, and hence the DM strictly prefers $b'$ to $b$. Next, we
define an auxiliary space $\widehat{\Omega}=\left\{ \omega_{m},\omega_{o},\omega_{M}\right\} $
and two actions on it $c$ and $c'$ by unifying the middle states
into one as follows:\VspaceBeforeMath{-.7em}{}
\[
\begin{array}[b]{c|ccc}
 & \omega_{m} & \omega_{o} & \omega_{M}\\
\hline c & m & m & M\\
c' & m & M & M
\end{array}.
\]
Since the preference satisfies Independence of Irrelevant Information
we get that the DM strictly prefers $c'$ to $c$. Now we define a
new action $c''$ (see below), and by the state symmetry property
we get that also $c''$ is strictly preferred to $c$. \VspaceBeforeMath{-.7em}{}
\[
\begin{array}[b]{c|ccc}
 & \omega_{m} & \omega_{o} & \omega_{M}\\
\hline c & m & m & M\\
c' & m & M & M\\
c'' & M & M & m
\end{array}.
\]

Using a collapsing argument similar to the one above, we get that
for the following collapsed space  and two actions\VspaceBeforeMath{-.7em}{}
\[
\begin{array}[b]{c|cc}
 & \omega_{m} & \omega_{M}\\
\hline d & m & M\\
d'' & M & m
\end{array},
\]
 $d''$ is strictly preferred to $d$, but this contradicts the state
symmetry property. 

\end{proof}

\section{Proof of Proposition~\ref{prop:GS+mine=00003DMINexp} (Axiomatization
of $\protect\MIN$)}

\label{Appendix:Proof of prop:GS+mine=00003DMINexp}
\begin{prop*}
\MyLyxThmNewline{}

Let $X$ be a set of outcomes and let $Y$ be the set of distributions
over $X$ with finite supports, and we identify $X$ with the Dirac
distributions $\left\{ y\in Y\SetSt y\left(x\right)=1\text{ for some }x\in X\right\} $.

Let $\Omega$ be a finite set of states of the world s.t. $\sizeof{\Omega}\geqslant3$
and let $L=Y^{\Omega}$ (i.e., actions which in each state of the
world return a lottery over $X$). For $f,g\in L$ and $\alpha\in\left[0,1\right]$,
we define $\alpha f+\left(1-\alpha\right)g$ as the state-wise compound
lottery, that is, the action that returns for each state $\omega\in\Omega$
the lottery $\alpha f\left(\omega\right)+\left(1-\alpha\right)g\left(\omega\right)\in Y$. 

We will denote by $L_{c}\subset L$ the constant functions in $L$,
and by $L_{o}=X^{\Omega}\subset L$ the functions that return pure
outcomes.

Let $\prefOver$ be a preference order over $L$ that satisfies Gilboa-Schmeidler
Axioms~$A.2$-$A.5$~\cite{Gilboa1989} 
\begin{itemize}
\item \emph{$A.2$: Certainty independence:} For all $f,g$ in $L$ and
$h$ in $L_{c}$ and for all $\alpha\in\left(0,1\right)$:\VspaceBeforeMath{-.7em}{}
\[
f\SprefOver g\iff\alpha g+\left(1-\alpha\right)h\SprefOver\alpha f+\left(1-\alpha\right)h.
\]
\item \emph{$A.3$: Continuity:} For all $f$, $g$ and $h$ in $L$: if
$f\SprefOver g$ and $g\SprefOver h$ then there are $\alpha$ and
$\beta$ in $\left(0,1\right)$ such that $\alpha f+\left(1-\alpha\right)h\SprefOver g$
and $g\SprefOver\beta f+\left(1-\beta\right)h$.
\item \emph{$A.4$: Monotonicity:} For all $f$ and $g$ in $L$: if $f\left(\omega\right)\prefOver g\left(\omega\right)$
for all $\omega\in\Omega$ then $f\prefOver g$.
\item \emph{$A.5$: Uncertainty aversion:} For all $f,g\in L$ and $\alpha\in\left(0,1\right)$:
$f\prefOver g$ and $g\prefOver f$ implies $\alpha f+\left(1-\alpha\right)g\prefOver f$.
\end{itemize}
and satisfies state symmetry and independence of irrelevant information
on $L_{o}$.

Then $\prefOver$ is represented by a function $u\colon X\rightarrow\Re$
such that for any $f,g\in Y$\VspaceBeforeMath{-.7em}{} 
\[
f\prefOver g\,\iff\,\min_{\omega\in\Omega}\Exp_{x\sim f\left(\omega\right)}\left[u\left(x\right)\right]\geqslant\min_{\omega\in\Omega}\Exp_{x\sim g\left(\omega\right)}\left[u\left(x\right)\right].
\]

\end{prop*}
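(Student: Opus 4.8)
The plan is to invoke the Gilboa--Schmeidler maxmin representation and then use the two extra axioms to force the set of priors to be the whole simplex. First I would dispose of the degenerate case: if $\prefOver$ ranks every pair of acts as indifferent, it is represented by any constant $u\colon X\to\Re$ and there is nothing to prove, so assume $\prefOver$ is non-trivial. Since $\prefOver$ is a preference order (Gilboa--Schmeidler's axiom $A.1$) satisfying $A.2$--$A.5$, their representation theorem~\cite{Gilboa1989} supplies an affine function $u$ on $Y$ --- which by non-triviality is non-constant, and which we identify with $u\colon X\to\Re$ extended by expectation --- and a non-empty closed convex set $C$ of probability vectors on $\Omega$ such that $f\prefOver g$ if and only if $\min_{P\in C}\Exp_{\omega\sim P}\left[\Exp_{x\sim f(\omega)}\left[u(x)\right]\right]\geqslant\min_{P\in C}\Exp_{\omega\sim P}\left[\Exp_{x\sim g(\omega)}\left[u(x)\right]\right]$. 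Writing $\Phi(v)=\min_{P\in C}\Exp_{\omega\sim P}\left[v(\omega)\right]$ for $v\in\Re^{\Omega}$, it suffices to show $C=\Delta(\Omega)$: then $\Phi(v)=\min_{\omega\in\Omega}v(\omega)$ (a linear functional on the simplex is minimized at a vertex), and the displayed equivalence becomes precisely the asserted representation.

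Next I would bring in state symmetry and independence of irrelevant information, which hold on $L_{o}=X^{\Omega}$. Composing with $u$, the restriction of $\prefOver$ to $L_{o}$ descends to a well-defined preference on $u(X)^{\Omega}\subseteq\Re^{\Omega}$ (well-definedness and monotonicity both follow from axiom $A.4$), and this preference is monotone, state-symmetric, and independent of irrelevant information; since the proof of Lemma~\ref{lem:MIN and MAX only} only ever manipulates acts whose outcome values already occur among those being compared, that lemma applies on this sub-domain and shows that $\Phi$, restricted to acts valued in $u(X)$, depends only on the pair $\left(\min_{\omega}v(\omega),\max_{\omega}v(\omega)\right)$. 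Fix $m<M$ in $u(X)$. For a proper non-empty $A\subseteq\Omega$ let $w_{A}$ take the value $M$ on $A$ and $m$ off $A$; then $w_{A}$ has minimum $m$ and maximum $M$, and $\Phi(w_{A})=m+(M-m)\,\underline P(A)$ where $\underline P(A)=\min_{P\in C}P(A)$. Since $\Phi$ is constant on acts with minimum $m$ and maximum $M$, the value $\underline P(A)$ is one and the same number $\underline p$ for every proper non-empty $A$.

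Here the hypothesis $\sizeof{\Omega}\geqslant3$ enters. The lower envelope $\underline P$ is super-additive: for disjoint $A,B$, taking $P^{\ast}\in C$ that attains $\underline P(A\cup B)$ gives $\underline P(A\cup B)=P^{\ast}(A)+P^{\ast}(B)\geqslant\underline P(A)+\underline P(B)$. Choosing disjoint non-empty $A$ and $B$ with $A\cup B\neq\Omega$ --- possible exactly because $\sizeof{\Omega}\geqslant3$ --- yields $\underline p\geqslant2\underline p$, hence $\underline p=0$. Therefore $\underline P(\Omega\setminus\{\omega\})=0$ for every $\omega$; equivalently $\max_{P\in C}P(\{\omega\})=1$, and this maximum is attained since $C$ is compact, so $\delta_{\omega}\in C$ for all $\omega$. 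By convexity $C\supseteq\mathrm{conv}\{\delta_{\omega}:\omega\in\Omega\}=\Delta(\Omega)$, so $C=\Delta(\Omega)$, which finishes the argument.

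I expect the only genuinely new step to be the super-additivity argument that collapses $\underline p$ to $0$, together with the observation that this is precisely why $\sizeof{\Omega}\geqslant3$ is needed (over two states every act is pinned down by its minimum and maximum, so the two extra axioms add nothing beyond $A.2$--$A.5$). The parts that need care are bookkeeping: verifying that Lemma~\ref{lem:MIN and MAX only} genuinely applies to the $u$-image of $L_{o}$ --- in particular that independence of irrelevant information, which compares preferences across state spaces of different sizes, is available on that sub-domain --- and checking that only two distinct values of $u$ are ever used, so that no hidden richness assumption on $X$ creeps in.
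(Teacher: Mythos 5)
Your proposal is correct and follows essentially the same route as the paper: invoke the Gilboa--Schmeidler representation, apply Lemma~\ref{lem:MIN and MAX only} to the pure acts to conclude that acts sharing the same worst and best outcomes are indifferent, deduce that the lower envelope of $C$ vanishes (your constancy-of-$\underline{P}$-plus-superadditivity argument is just a repackaging of the paper's comparison of the acts $l_{\omega^{\star}}$ and $h_{\omega^{\dagger}}$, with $\sizeof{\Omega}\geqslant3$ entering at the same point), and conclude $C=\Delta\left(\Omega\right)$. Your explicit handling of the degenerate preference and of the final step $\underline{P}\left(\Omega\setminus\left\{ \omega\right\} \right)=0\Rightarrow\delta_{\omega}\in C\Rightarrow C=\Delta\left(\Omega\right)$ is slightly more careful than the paper's, but not a different argument.
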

\begin{proof}
\MyLyxThmNewline{}

Let $\prefOver$ be a preference order which satisfies the above.
Gilboa and Schmeidler~\cite{Gilboa1989} proved that $\prefOver$
is represented by a function $u\colon X\rightarrow\Re$ and a non-empty,
closed and convex set $C$ of probability measures on $\Omega$ such
that for any $f,g\in L$\VspaceBeforeMath{-.7em}{} 
\[
f\prefOver g\,\iff\,\min_{P\in C}\Exp_{\omega\sim P}\left[\Exp_{x\sim f\left(\omega\right)}\left[u\left(x\right)\right]\right]\geqslant\min_{P\in C}\Exp_{\omega\sim P}\left[\Exp_{x\sim g\left(\omega\right)}\left[u\left(x\right)\right]\right]
\]
and in particular for any $a,b\in L_{o}$\VspaceBeforeMath{-.7em}{}
\[
a\prefOver b\,\iff\,\min_{P\in C}\Exp_{\omega\sim P}\left[u\left(a\left(\omega\right)\right)\right]\geqslant\min_{P\in C}\Exp_{\omega\sim P}\left[u\left(b\left(\omega\right)\right)\right].
\]

Let $l$ and $h$ be two outcomes s.t. the DM strictly prefers the
action that gives him $h$ in all states of the world over the action
that gives him $l$ in all states of the world, and so $u\left(h\right)>u\left(l\right)$.
We denote $u\left(h\right)$ and $u\left(l\right)$ by $u_{h}$ and
$u_{l}$, respectively.

For every $\omega\in\Omega$, we define $\epsilon_{\omega}=\min_{P\in C}P\left(\omega\right)$,
and we define the actions $l_{\omega}$ and $h_{\omega}$ by $l_{\omega}\left(\omega'\right)=\begin{cases}
l & \omega=\omega'\\
h & \text{otherwise}
\end{cases}$ and $h_{\omega}\left(\omega'\right)=\begin{cases}
h & \omega=\omega'\\
l & \text{otherwise}
\end{cases}$. From Lemma~\ref{lem:MIN and MAX only} we get that the preference
on $L_{0}$ can be defined as a function of the worst and the best
outcomes of the actions, and so the DM is indifferent between $l_{\omega^{\star}}$
and $h_{\omega^{\dagger}}$ for every $\omega^{\star},\omega^{\dagger}\in\Omega$.
Hence,\VspaceBeforeMath{-.7em}{} 
\[
u_{l}+\left(h_{h}-u_{l}\right)\sum_{\omega\neq\omega^{\star}}\epsilon_{\omega}\leqslant\,\min_{P\in C}\Exp_{\omega\sim P}\left[u\left(l_{\omega^{\star}}\left(\omega\right)\right)\right]=\min_{P\in C}\Exp_{\omega\sim P}\left[u\left(h_{\omega^{\dagger}}\left(\omega\right)\right)\right]\,=u_{l}+\left(u_{h}-u_{l}\right)\epsilon_{\omega^{\dagger}},
\]
and $\sum_{\omega\neq\omega^{\star}}\epsilon_{\omega}\leqslant\epsilon_{\omega^{\dagger}}$.
Hence, since $\sizeof{\Omega}>2$ we get that $\epsilon_{\omega}=0$
for all $\omega\in\Omega$.

Next we prove that for any action $a\in L_{o}$ it holds that\VspaceBeforeMath{-.7em}{}
\[
\min_{P\in C}\Exp_{\omega\sim P}\left[u\left(a\left(\omega\right)\right)\right]=\min_{\omega\in\Omega}u\left(a\left(\omega\right)\right).
\]
 The DM is indifferent between $a$ and the action $a'$ that gives
him $\argmax_{\omega\in\Omega}u\left(a\left(\omega\right)\right)$
on one state $\omega^{\star}$ and $\argmin_{\omega\in\Omega}u\left(a\left(\omega\right)\right)$
on all other states. Hence,\VspaceBeforeMath{-.7em}{}
\[
\begin{array}{rl}
\min_{P\in C}\Exp_{\omega\sim P}\left[u\left(a\left(\omega\right)\right)\right] & =\min_{P\in C}\Exp_{\omega\sim P}\left[u\left(a'\left(\omega\right)\right)\right]\\
 & =\min_{\omega\in\Omega}u\left(a\left(\omega\right)\right)+\left(\max_{\omega\in\Omega}u\left(a\left(\omega\right)\right)-\min_{\omega\in\Omega}u\left(a\left(\omega\right)\right)\right)\cdot\min_{P\in C}P\left(\omega^{\star}\right)\\
 & =\min_{\omega\in\Omega}u\left(a\left(\omega\right)\right).
\end{array}
\]

We get that for any action $a\in L_{0}$ it holds that $\min_{P\in C}\Exp_{\omega\sim P}\left[u\left(a\left(\omega\right)\right)\right]=\min_{\omega\in\Omega}\left[u\left(a\left(\omega\right)\right)\right]$,
and so the Dirac distribution concentrated on $\omega$ belongs to
$C$ for every state of the world $\omega\in\Omega$ . Hence for any
$f,g\in Y$:\VspaceBeforeMath{-.7em}{}
\[
\begin{array}[b]{rl}
f\prefOver g & \iff\min_{P\in C}\Exp_{\omega\sim P}\left[\Exp_{x\sim f\left(\omega\right)}\left[u\left(x\right)\right]\right]\geqslant\min_{P\in C}\Exp_{\omega\sim P}\left[\Exp_{x\sim g\left(\omega\right)}\left[u\left(x\right)\right]\right]\\
 & \iff\min_{\omega\in\Omega}\Exp_{x\sim f\left(\omega\right)}\left[u\left(x\right)\right]\geqslant\min_{\omega\in\Omega}\Exp_{x\sim g\left(\omega\right)}\left[u\left(x\right)\right].
\end{array}\qedhere
\]

\end{proof}

\section{Proof of Lemma~\ref{lem:Bilateral Trade - Main Result} (Bilateral
trade games)}

\label{Appendix:Proof of lem:Bilateral Trade - Main Result}
\begin{lem*}
\MyLyxThmNewline{}
\end{lem*}
\begin{proof}
\MyLyxThmNewline{}We will prove that these profiles are equilibria
($\LEXNE$) and that they are the only equilibria.
\begin{itemize}
\item \uline{No-transaction profiles}

In these profiles for at least one of the players does not participate
($\bot$, or bids an extreme bid) regardless of his type.

A player, facing a profile in which all the types of the other player
do not participate, is indifferent between all his actions (all of
them result in no-transaction for sure), and hence he best-responds.
Facing any other profile, at least for some of his types (e.g., the
maximal value for the buyer, and minimal value of the seller), choose
to participate. Hence, the only no-transaction profile that is an
equilibrium, is when both players do not participate.
\item \uline{One-price profiles}

In these profiles, both the seller and the buyer choose to participate
for some of their types, and when they choose to participate, both
bid the same price $p$.

\medskip{}

First, we prove that given a one-price profile as described in the
lemma, it is indeed an equilibrium.

Consider a seller of type $v_{s}$ facing a profile in which there
are types of the buyer that choose $p$ and (maybe) others that choose
not to participate.
\begin{itemize}
\item If $v_{s}\leqslant p$: He prefers bidding $p$ to any bid $a_{s}<p$,
because both bids will result in a transaction with the same types
of the buyer, but $a_{s}$ will result in a lower price (Monotonicity
of $\LEX$).

He prefers bidding $p$ to any bid $a_{s}>p$, because $a_{s}$ will
result in no-transaction with all types of the buyer, while $p$ results
in a profitable transaction with some of them (Monotonicity of $\LEX$).

We notice that since $p\geqslant\ls$ there are types of the seller
that bid $p$.
\item If $v_{s}>p$: Any bid that results in a transaction with some types
of the buyer, will be at a price lower than $v_{s}$, and hence the
transaction will be a losing one. Hence, he prefers not to participate
at all (Monotonicity of $\LEX$).
\end{itemize}
Similarly, consider a buyer of type $v_{b}$ facing a profile in which
there are types of the seller that choose $p$ and (maybe) others
that choose not to participate.
\begin{itemize}
\item If $v_{b}<p$: Any bid that results in a transaction with some types
of the seller, will be at a price higher than $v_{b}$, and hence
the transaction will be a losing one. Hence, he prefers not to participate
at all (Monotonicity of $\LEX$).
\item If $v_{b}\geqslant p$: He prefers bidding $p$ to any bid $a_{b}>p$,
because both bids will result in a transaction with the same types
of the seller, but $a_{s}$ will result in a higher price (Monotonicity
of $\LEX$).

He prefers bidding $p$ to any bid $a_{s}<p$, because $a_{s}$ will
result in no-transaction with all types of the seller, while $p$
results in a profitable transaction with some of them (Monotonicity
of $\LEX$).

We notice that since $p\leqslant\hb$ there are types of the buyer
that bid $p$.
\end{itemize}
\medskip{}

Next, we prove that any single-price profile other than those described
in the lemma, cannot be an equilibrium.

First we claim, that in equilibrium it cannot be that the prices of
participating types of one of the sides vary, while all types of the
other player bid the same. If there is a unique price $p$ bid by
participating types of the seller, based on the analysis above, all
participating types of the buyer bid $p$. Similarly, if there is
a unique price $p$ bid by participating types of the buyer, all participating
types of the seller bid $p$.

Hence, in a single-price profile, there exists a price $p$ s.t. both
some types of the seller and some types of the buyer choose $p$,
and all other types (might be an empty set) choose not to participate.

If $p\in\left[\ls,\hb\right]$, by the analysis above, we see there
is only one single-price equilibrium supporting $p$.

If $p\notin\left[\ls,\hb\right]$, by the analysis above, either all
types of the buyer or all types of the seller choose not to participate.
In either case this is a no-transaction profile.
\item \uline{All-participating profiles}

In these profiles, all types of both the seller and the buyer choose
to participate (did not choose $\bot$). 

\medskip{}

First we prove that given an all-participating profile as described
in the lemma as two-price equilibria, it is indeed an equilibrium.

Consider a seller of type $v_{s}$ facing a profile in which all the
types of the buyer participate, and let $p_{L}$ be the infimum of
the bids chosen by types of the buyer and $p_{H}$ the supremum of
the bids.
\begin{itemize}
\item If $v_{s}\leqslant p_{L}$: He prefers bidding $p_{L}$ to any bid
$a_{s}<p_{L}$, because both bids will result in a transaction with
all types of the buyer, but $a_{s}$ will result in a lower price
with each of them (Monotonicity of $\LEX$).

He prefers bidding $p_{L}$ to any bid $a_{s}>p_{L}$, because $a_{s}$
will result in no-transaction with some types of the buyer, while
$p_{L}$ results in a non-losing transaction with all of them ($\LEX$
is a refinement of $\MIN$).
\item If $v_{s}>p_{L}$ and $v_{s}\leqslant p_{H}$: He prefers bidding
$p_{H}$ to any bid $a_{s}<p_{H}$, because when facing the types
of the buyer that bid $a_{b}\in\left[a_{s},p_{H}\right]$, both bids
will result in a transaction, but $a_{s}$ will result in a lower
price, and when facing other types of the buyer (that bid less than
$a_{s})$, both result in no-transaction (Monotonicity of $\LEX$).

He prefers bidding $p_{H}$ to any bid $a_{s}>p_{H}$, because $a_{s}$
results in no-transaction with all types of the buyer, while $p_{H}$
results in a non-losing transaction with some types of the buyer (Monotonicity
of $\LEX$).
\end{itemize}
We notice that since in the profiles of the lemma satisfy $\ls\leqslant p_{L}<\hs\leqslant p_{H}$,
all the types of the seller participate, there are types of the seller
that bid $p_{L}$, and there are types that bid $p_{H}$.

Similarly, consider a buyer of type $v_{b}$ facing a profile in which
all the types of the seller participate, and let $p_{L}$ be the infimum
of the bids chosen by types of the seller and $p_{H}$ the supremum
of the bids.
\begin{itemize}
\item If $v_{b}<p_{H}$ and $v_{b}\geqslant p_{L}$: He prefers bidding
$p_{L}$ to any bid $a_{b}<p_{L}$, because $a_{b}$ results in no-transaction
with all types of the seller, while $p_{L}$ results in a non-losing
transaction with some types of the seller (Monotonicity of $\LEX$).

He prefers bidding $p_{L}$ to any bid $a_{b}>p_{L}$, because when
facing the types of the seller that bid $a_{s}\in\left[p_{L},a_{b}\right]$,
both bids will result in a transaction, but $a_{b}$ will result in
a higher price, and when facing other types of the seller (that bid
more than $a_{b})$, both result in no-transaction (Monotonicity of
$\LEX$).
\item If $v_{b}\geqslant p_{H}$: He prefers bidding $p_{H}$ to any bid
$a_{b}<p_{H}$, because $a_{b}$ will result in no-transaction with
some types of the seller, while $p_{H}$ results in a non-losing transaction
with all of them ($\LEX$ is a refinement of $\MIN$).

He prefers bidding $p_{H}$ to any bid $a_{b}>p_{H}$, because both
bids will result in a transaction with all types of the seller, but
$a_{b}$ will result in a higher price with each of them (Monotonicity
of $\LEX$). 
\end{itemize}
We notice that since in the profiles of the lemma satisfy $p_{L}\leqslant\lb<p_{H}\leqslant\hb$,
all the types of the buyer participate, there are types of the buyer
that bid $p_{L}$, and there are types that bid $p_{H}$.

\medskip{}

Next, we prove that any all-participating non single-price  profile,
other than the above, cannot be an equilibrium.

In an all-participating non single-price  profile, all the types of
both the buyer and seller participate, and for both players, not all
their types choose the same price.

If there is a type of the buyer whose value is lower than any of the
prices bid by types of the seller, then this type prefers not to participate,
so it cannot be an equilibrium. Any other type of the buyer bids either
the supremum of the infimum of the prices bid by types of the seller.
Given that types of the buyer bid one of two prices, if there is a
type of the seller whose value is higher than both prices, then this
type prefers not to participate, and the profile is not an equilibrium.
Otherwise, the types of the seller also bid these two prices and we
get the profile is of the type described in the lemma.
\item Last, we show that all other profiles are not equilibria of the game.

Assume towards a contradiction that $\overline{a}$ is an equilibrium
profile of the game that is not one of the above profile types. 

Clearly, for a match of any pair of types of the seller and the buyer,
the utility of neither of them cannot be negative, because he can
guarantee himself to get at least zero against all types of the other
player by choosing $\bot$.

Assume there is a type $v_{s}$ of the seller that chose $\bot$.
Then for any participating type $v_{b}$ of the buyer, $v_{b}$ cannot
guarantee himself to get a utility greater than zero (his utility
from being matched with $v_{s}$) in the worst case. Hence, he breaks
the tie between all the bids $a_{b}\leqslant v_{b}$, all giving him
zero in the worst case, according to the best case which is being
matched with the lowest bid of a participating type of the seller.
Hence, we get a contradiction by showing the profile is a one-price
profile.

Similarly, assume there is a type $v_{b}$ of the buyer that chose
$\bot$. Then for any participating type $v_{s}$ of the seller, $v_{s}$
cannot guarantee himself to get a utility greater than zero (his utility
from being matched with $v_{b}$) in the worst case. Hence, he breaks
the tie between all the bids $a_{b}\geqslant v_{s}$, all giving him
zero in the worst case, according to the best case which is being
matched with the highest bid of a participating type of the buyer.
Hence, we get a contradiction by showing the profile is a one-price
profile.\qedhere
\end{itemize}
\end{proof}

\section{Proof of Lemma~\ref{lem:Coord General LEX characterization} (Coordination
games)}

\label{Appendix:Proof of lem:Coord General LEX characterization}
\begin{lem*}
\MyLyxThmNewline{}
\end{lem*}
\begin{proof}
\MyLyxThmNewline{}
The case $\sizeof L=1$ is trivial. Let $l$ be the location in $L$.
The only profile satisfying $L\left(a\right)=L$ is when all players
choose $l$ regardless of their type, and this is an equilibrium.
The mappings $f^{i}$ are onto, and hence the lemma is proved for
this case. 

From now on, we assume that $\sizeof L\geqslant2$. Given a profile
$a$ we define $L^{i}\left(a\right)$ to be the set of locations chosen
in $a$ by player~$i$, 
\[
L^{i}\left(a\right)=\left\{ l\SetSt\exists t^{i}\in\TypeSet^{i}\st\text{player\,}i\text{ of type }t^{i}\text{ plays }l\text{ in the profile }a\right\} ,
\]

so $L\left(a\right)=\cup_{i\in\Players}L^{i}\left(a\right)$. 

$\underline{\Rightarrow}$: 

Assume there exists an equilibrium $a$ s.t. $L\left(a\right)=L$. 

First we claim that for all players, $\sizeof{L^{i}\left(a\right)}>1$.
Assume towards a contradiction, there is a non-empty set of players
$S$ s.t. $i\in S\iff\sizeof{L^{i}\left(a\right)}=1$, and let $p$
be a player in $S$ and $l$ the location chosen by $p$. 

If $\sizeof S=1$, then for any of the other players the only action
guaranteeing them to meet another player is choosing $l$, and hence
all players choose $l$ regardless of their type. I.e., $L\left(a\right)=\left\{ l\right\} $
and we get a contradiction.

If $\sizeof S>1$, then any of the players can guarantee himself to
meet another player on the worst case, and hence will choose a location
in $\cup_{i\in S}L^{i}\left(a\right)$. Since all players aim to maximize
the number of other players they meet on the worst case, it cannot
be that two players in $S$ choose differently (both maximize the
number of other players from $S$ that choose like them). Hence, $\sizeof{\cup_{i\in S}L^{i}\left(a\right)}=1$
and we get that $\sizeof{L\left(a\right)}=1$. Contradiction.

Hence, for all players, $\sizeof{L^{i}\left(a\right)}>1$. Player~$i$
of type $t^{i}\in\TypeSet^{i}$ cannot guarantee himself to meet other
players, and hence he will choose the best location according to his
type among the locations in $\cup_{j\neq i}L^{j}\left(a\right)$ that
maximize the number of other players he might meet - $m\left(l\right)=\#\left\{ j\neq i\SetSt l\in L^{j}\left(a\right)\right\} $. 

Next, we prove that for all players $L^{i}\left(a\right)=L\left(a\right)$.
Assume for contradiction there exists  Player~$p$ s.t. $L^{p}\left(a\right)\subsetneq L\left(a\right)$.
Let $l\in L^{p}\left(a\right)$ be a location chosen by $p$, and
let $l'$ be a location and $p'$ a player s.t. $l'\in L^{p'}\left(a\right)\setminus L^{p}\left(a\right)$.
Since Player~$p$ (of some type) chose $l$, we get that the number
of players $j\neq p$ that chose $l$ is at least as large as the
number of players $j\neq p$ that chose the location $l'$. Taking
the viewpoint of Player~$p'$ we get that the number of players $j\neq p'$
that chose $l$ is strictly larger than the number of players $j\neq p'$
that chose the location $l'$. But this is a contradiction to $l'\in L^{p'}\left(a\right).$

Hence, we get that in $a$, player~$i$ of type $t^{i}\in\TypeSet^{i}$
choose the best location according to his type among the locations
in $L$ (in all of them he meets no-one in the worst case and everyone
else in the best case). That is, player~$i$ plays according to the
function $f^{i}$, and $L=L^{i}=Im\left(f\right)$, so $f$ is onto.
Notice that we also get this profile is the unique equilibrium s.t.
$L\left(a\right)=L$.

$\underline{\Leftarrow}$: 

Assume the functions $f^{i}$ are onto. It is easy to verify that
in the profile $a$ in which player~$i$ plays according to $f^{i}$,
player~$i$ best-responds (for all $i$). None of his actions guarantees
him to meet some other player in the worst case; if he plays an action
$a\notin L$, he does not meet another player even in the best case;
if he plays an action $a\in L$, he meets all other players in the
best case; hence, he best-responds playing according to $f^{i}$. 

\end{proof}

\end{document}